 \def\openone{\leavevmode\hbox{\small1\kern-3.8pt\normalsize1}}
 \def\dom{\operatorname{dom}}
 \def\CC{\mathbb{C}}
 \def\RR{\mathbb{R}}
 \def\NN{\mathbb{N}}
 \def\LL{\mathbb{L}}
 \def\11{\mathbf{1}}
 \def\LL{\mathcal{L}}
 \newtheorem{theorem}{Theorem}
 \newtheorem{lemma}{Lemma}
 \theoremstyle{definition}
 \newtheorem{example}{Example}
 \def\reff#1{(\ref{#1})}
 \def\eps{\mathcal{E}}
 \newcommand{\supp}{\mathop{\rm supp}\nolimits}
 \newcommand{\tr}{\mathop{\rm Tr}\nolimits}
 \newcommand{\spec}{{\rm sp}}
 \newcommand{\bra}[1]{\langle#1|}
 \newcommand{\ket}[1]{|#1\rangle}
 \newcommand{\cA}{{\cal A}}
 \newcommand{\cB}{{\cal B}}
 \newcommand{\cD}{{\cal D}}
 \newcommand{\cE}{{\cal E}}
 \newcommand{\cF}{{\chi}}
\newcommand{\cFF}{{\cal F}}
 \newcommand{\cT}{{\cal T}}
 \newcommand{\cH}{{\cal H}}
\newcommand{\cJ}{{\cal J}}
 \newcommand{\cP}{\mathcal{P}}
 \newcommand{\cL}{{\cal L}}
 \newcommand{\cZ}{{\cal Z}}
 \def\e{\mathrm{e}}
 \theoremstyle{definition}
 \theoremstyle{remark}
 \numberwithin{equation}{section}
 \DeclareRobustCommand\openone{\leavevmode\hbox{\small1\normalsize\kern-.33em1}}
 \newcommand{\id}{\rm{id}}
 \newcommand{\be}{\begin{equation}}
 	\newcommand{\ee}{\end{equation}}
 \newcommand{\bea}{\begin{eqnarray}}
 	\newcommand{\eea}{\end{eqnarray}}
 \newcommand{\beas}{\begin{eqnarray*}}
 	\newcommand{\eeas}{\end{eqnarray*}}
\begin{document}
 	\bibliographystyle{abbrv}
 	
 	\title{Contractivity properties of a quantum diffusion semigroup}
 	\author[1]{Nilanjana Datta}
 	\author[2]{Yan Pautrat}	
  	\author[1]{Cambyse Rouz\'e} 
  	\affil[1]{\small Statistical Laboratory, Centre for Mathematical Sciences, University of Cambridge, Cambridge~CB30WB, UK} 	
  	\affil[2]{\small Laboratoire de Math\'ematiques d'Orsay,
  		Univ. Paris-Sud, CNRS, Universit\'e
  		Paris-Saclay,  91405~Orsay, France}

 	\maketitle			 
\begin{abstract}
We consider a quantum generalization of the classical heat equation,
and study contractivity properties of its associated semigroup. We  prove
a Nash inequality and a logarithmic Sobolev inequality. The former
leads to an ultracontractivity result. This in turn implies that the largest eigenvalue and the purity of a state with positive Wigner function, evolving under the action of the semigroup, decrease at least inverse polynomially in time, while its entropy increases at least logarithmically in time.

\end{abstract}
\section{Introduction}

Interactions of an open quantum system with its environment often result in dissipative processes. The continuous time evolution of such
a system 
is hence in general non-unitary, unlike that of a closed system. In the time-homogeneous case, it is described by the quantum master equation
\begin{align}\label{master}
\frac{d\rho_t}{dt}&= \cL (\rho_t),
\end{align}
where $\rho_t$ denotes the state of the quantum system at time $t$,
and $\cL$ is the (infinitesimal) generator of the dynamics.
The solution of \Cref{master} is given by a quantum dynamical semigroup, which is a one-parameter family $\left(\Lambda_t\right)_{t \geq 0}$ of quantum operations satisfying certain conditions (see \Cref{sec-QDS}). 
 
An important class of results concerning quantum
dynamical semigroups and their classical counterparts (i.e., classical Markov semigroups) consists of certain
functional inequalities, e.g.~Nash inequality, logarithmic-Sobolev (or log-Sobolev) inequality, etc.
These inequalities can be used to find bounds on the time scales of various dissipative processes,
both in the infinite- and finite-dimensional setting: for example, in bounding mixing times of classical and quantum Markov processes\footnote{That is, the time it takes for the Markov process to become close to its stationary state, starting from any initial state.} (see e.g.~\cite{KT12} and references therein), estimating the rate at which the purity of a state decreases under the action of a quantum dynamical semigroup, estimating run times of quantum algorithms based on quantum Markov processes (e.g.~\cite{TOVPV11}), etc.
Functional inequalities are related to general formulations of certain convergence properties of the underlying semigroup. {\em{Hypercontractivity}} \cite{Nelson} is one such form of convergence and, for a semigroup which 
has an invariant state, it is related to exponential decay towards this state. The log-Sobolev inequality \cite{Gross} can be viewed as an infinitesimal version of hypercontractivity. Recently, Kastoryano and Temme \cite{KT15} defined non-commutative versions of another type of convergence called {\em{ultracontractivity}}, where the convergence towards an invariant state is at least inverse polynomial with respect to time. Ultracontractivity, in the classical case, can be derived for Markov semigroups satisfying a Nash inequality \cite{Nash}. Similarly, it was shown in \cite{KT15} that quantum (or non-commutative) ultracontractivity holds for quantum dynamical semigroups which satisfy a quantum version of the Nash inequality. For a brief summary of the functional inequalities and the contractivity properties of the associated semigroups, in the classical case, see \Cref{sec-ineqs}. Contractivity properties of quantum dynamical semigroups have been the focus of active research in recent years. See for example \cite{BK2016, M2012, KT12, Carbone2008, MFW16} and references therein.

 Classically, perhaps the oldest studied example of a
dissipative process is that of heat conduction. It is described by the simplest type of diffusion equation
\begin{align}\label{Heat-cl}
\partial_t u &= \Delta u,
\end{align} 
where $u\equiv u(x,t)$ is the temperature (or more generally, a scalar field) at the
position $x$, at the time $t$. In this paper we consider a quantum analogue of this equation and its associated
semigroup.
In this case the underlying Hilbert space of the quantum system is an infinite-dimensional separable Hilbert space and the unbounded generator
is given by\footnote{See \Cref{sec-Qdiff} for a more rigorous and complete definition.}
\begin{align}\label{HeatL}
\cL\left(\cdot \right) := - \frac{1}{4} \sum_{j=1}^n \left([Q_j,[Q_j, \cdot]] + [P_j,[P_j, \cdot]] \right),
\end{align}
where $Q_j, P_j, j=1,2, \ldots, n$ denote $n$ pairs of canonically conjugate observables satisfying the Heisenberg canonical commutation relations (CCR)
\begin{align}
[Q_j, P_k] = i\delta_{jk} I \, ; \, [Q_j, Q_k] = 0 =[P_j, P_k] \quad \forall \, j,k = 1,2, \ldots n.
\end{align}
In the above, we have set $\hbar=1$. The observables $Q_j, P_j, j=1,2, \ldots, n$ could be, for example, the ``position'' and ``momentum'' quadratures, respectively, of an $n$-mode bosonic field. The quantum master equation corresponding to the above generator (in this explicit form) was studied by Hall \cite{H94,H00} and later K\"{o}nig and Smith \cite{Koenig2012}. They showed that it can be viewed as a quantum analogue of the classical heat equation (\ref{Heat-cl}), and referred to it as the {\em{quantum diffusion equation}}. We adopt their nomenclature and refer to the associated semigroup as the {\em{quantum diffusion semigroup}} (see also \cite{L79} for more general ``diffusion'' semigroups).

We prove a Nash inequality and a log-Sobolev inequality for states evolving under the action of the semigroup. The former inequality in turn allows us to prove an ultracontractivity property of the quantum diffusion semigroup, which tells us that the purity of a state, as well as its largest eigenvalue, decrease at least inverse polynomially in time under the action of this semigroup, while its entropy increases logarithmically in time.  The above results, however, do not yield any information about mixing times, since our semigroup does not have an invariant state. 

In \Cref{sec-math} we give the necessary notations and definitions of mathematical objects arising in our work. In \Cref{sec-CCR} we introduce the algebra of canonical commutation relations (CCR). An important class of states of continuous variable quantum systems, namely, Gaussian states, to which some of our results pertain, are discussed in \Cref{sec-Gaussian}, and the more general Schwartz states are defined in \Cref{Schwartz}. Certain key entropic quantities which arise in the statement of our results are given in \Cref{sec-entropies}. \Cref{sec-QDS} contains a brief summary of the basics of quantum dynamical semigroups. The quantum diffusion semigroup and its properties are given in \Cref{sec-Qdiff}. Our main results are stated as six theorems in \Cref{sec-main-results}. The proofs of these are given in the following sections. Proofs of some of the intermediate results are given in the appendices.

\section{Notations and Definitions}
\subsection{Operators and Norms}\label{sec-math}
Let $\cH:=L^2(\RR^d)$ denote the infinite-dimensional separable Hilbert space of square-integrable functions on $\RR^d$, and let $\mathcal{B}(\cH)$ denote the set of linear bounded operators on $\cH$. Further, let $\cP(\cH)$ be the set of positive semi-definite operators on $\cH$, $\cP_{+}(\cH) \subset  \cP(\cH)$ be the set of (strictly) positive operators, and let $\cD(\cH):=\lbrace\rho\in\cP(\cH)\mid \tr\rho=1\rbrace$ denote the set of density matrices (or states) on $\cH$. We denote the support of an operator $A$ as ${\mathrm{supp}}(A)$, and its domain and kernel as 
${\mathrm{dom}}(A)$ and ${\mathrm{ker}}(A)$, respectively. Let ${\mathrm{ran}}(P)$ denote the range of a projection operator $P$, let $I\in\cP(\cH)$ denote the identity operator on $\cH$, and $\id:\cB(\cH)\to \cB(\cH)$ the identity map on operators on~$\cH$.

Let $\cT_1(\cH)$ denote the space of trace class operators equipped with the norm $||\sigma||_1 = \tr|\sigma|$, and let $\mathcal{T}_2(\cH)$ denote the space of Hilbert-Schmidt operators, i.e.,
\begin{align}
	\cT_2(\cH):=\{ A\in \cB(\cH): \tr|A|^2<\infty\}.
\end{align} 
This space is equipped with the Hilbert Schmidt norm 
\begin{align}\label{norm2}
\|A\|_2:=\left(\tr|A|^2\right)^{1/2}, 
\end{align}
associated to the Hilbert Schmidt inner product $\langle A,B\rangle:=\tr (A^*B)$, where by $A^*$ we denote the Hermitian conjugate of $A$. For any state $\rho\in\cD(\cH)$, $\tr \rho^2 \equiv \|\rho\|_2^2$
	is a measure of its purity. For any $p \in [1, \infty)$ we define the $p$-norm 
\begin{align}
\| A\|_p := \left(\tr(|A|^p)\right)^{1/p},
\end{align}
and for $p = \infty$,
\begin{align}
\| A\|_\infty := \sup_{\|x\|=1} \| Ax\|.
\end{align}
For a given state $\rho\in\mathcal{D}(\mathcal{H})$, for any $p \geq 1$, we similarly define $L^p(\rho)$ to be the Banach space of operators with norm
\begin{align}
	\|A\|_{p,\rho}:=\tr(\rho|A|^p)^{1/p}.
	\end{align}
A quantum operation (or quantum channel) is given by a linear, completely positive trace-preserving (CPTP) map (see e.g.~\cite{NC}). For a linear map $\Lambda: \cB(\cH) \rightarrow \cB(\cH)$, and any $p,q \in  [1, \infty]$ we define the following norm:
\begin{align}\label{pq-op}
\|\Lambda\|_{p \to q} := \sup_{A \in \cB(\cH):\atop{\| A\|_{p} = 1}} \| \Lambda (A)\|_q
\end{align}
Similarly, we denote by $L^p(\mu)$ the space of $p$-integrable functions on ${\mathbb{R}}^d$, with respect to a probability measure $\mu$, with associated norm
\begin{align}
\|f\|_{p,\mu} &:= \left( \int |f(x)|^p d\mu(x) \right)^{\frac{1}{p}}.
\end{align}
We denote these simply by $L^p$, $\|\cdot\|_{p}$ when $\mu$ is the Lebesgue measure. We define the norm
\begin{align}
	\|\Lambda\|_{L^p \to L^q} := \sup_{f \in L^p:\atop{\| f\|_{p} = 1}} \| \Lambda (f)\|_{q}.
	\end{align}
\subsection{The algebra of canonical commutation relations (CCR)}\label{sec-CCR}
Given the canonically conjugate observables $Q_j, P_j, j=1,2, \ldots, n$, let us introduce a $2n$-dimensional column vector of observables 
\begin{align}\label{r-eq}
R := (Q_1, P_1, \ldots, Q_n, P_n)^T,
\end{align}
where the superscript $T$ denotes a transpose.
Then the canonical commutation relations (CCR) can be compactly expressed as 
\begin{align}
[R_j, R_k] = i \Omega_{jk} 
\end{align}
where
\begin{align}\label{Omega}
   \Omega&=
  \left[ {\begin{array}{cc}
   0 & 1 \\       {-1} & 0 \      \end{array} } \right]^{\oplus n}.
\end{align}
A rigorous definition of the operators $Q_j, P_j$ is as follows. Consider a family $\{V(z) : z=(q_1, p_1, \ldots, q_n, p_n)\in\RR^{2n}\}$ of unitary operators satisfying the so-called Weyl-Segal CCR \cite{Holevo2011},
\begin{align}\label{CCRrelation}
V(z)V(z') &= e^{-\frac{i}{2}\{z,z'\}} V(z + z')
\end{align}
where  
\begin{align}
\{z, z'\}&:= (z, \Omega z') = \sum_{j=1}^n (q_j p'_j - q'_j p_j)
\end{align}
is the canonical symplectic form, and the notation $(z,z'')$ denotes the scalar product of the vectors $z$ and $z''$ in $\RR^{2n}$. 
 The vector space $\RR^{2n}$ equipped with the symplectic form $\{z, z'\}$ defined above, constitutes a symplectic vector space which we denote as $\cZ$. It can be viewed as the {\em{quantum phase space}}, e.g.~the phase space of an $n$-mode bosonic field. 

Stone's Theorem (see e.g.~\cite{Reedsimon}) allows us to define $P_j$ (resp.~$Q_j$) as the self-adjoint generator of $\left(V(tq^{(j)})\right)_{t\in {\mathbb{R}}}$ (resp.~$\left(V(-tp^{(j)})\right)_{t\in {\mathbb{R}}}$) where $q^{(j)}$ (resp.~$p^{(j)}$) is the vector $z$ with $q_j=1$ (resp.~$p_j=1$) and all other elements equal to zero.  The operators $V(z)$ can then be expressed as follows: 
\begin{align}\label{Weyl}
V(z) &:= e^{i(z, \Omega R)}\equiv \exp[i \sum_{j=1}^n ( q_jP_j-p_jQ_j )].
\end{align}
The operators $V(z)$  are called the Weyl operators or Weyl displacement operators, the latter name being justified by the relation
\begin{align}
V(z)R_j V(-z) &= R_j + z_j I \quad \forall \, j \in \{1,2, \ldots, n\}.\label{commRV}
\end{align}

A quantum state $\rho \in \cD(\cH)$ of the system is uniquely defined on $\cZ$ through its {\em{quantum characteristic function}}
\begin{align}\label{qchar}
\cF_\rho(z) := \tr\left(\rho V(z)\right).
\end{align}
\textcolor{black}{The inverse Fourier transform of the characteristic function
	\begin{align}\label{wign}
	\mathcal{F}^{-1}_{\chi_\rho}(u):={(2\pi)^{-n}}\int_{\RR^{2n}}\e^{-i(u,z)}\chi_\rho(z)dz
	\end{align}
	is called the \textit{Wigner function} of the state\footnote{Note that the above definition differs from the usual one by a factor of $(2\pi)^{-n}$.}. Under this convention the integral of the Wigner function is equal to $(2\pi)^n$. However, they can in general be negative, and therefore cannot be interpreted as probability densities associated to states.} \\\\
A quantum state is said to have finite $k^{\rm{th}}$ moment, for $k \in {\mathbb{N}}$ if 
$$ \tr \big(\rho |R_j|^k\big) < \infty, \quad \forall \, j \in \{1,2, \ldots, 2n\}, $$
where $R_j$ denotes the $j^{th}$ entry of the vector $R$ of observables defined through \reff{r-eq}.

We employ the following theorems to prove our main results.
\begin{theorem}[Noncommutative Parseval's relation, \cite{Holevo2011}\label{NCparseval} Theorem 5.3.3] The function given by (\ref{qchar}) 
extends uniquely to a map
$T\mapsto \cF_T$  from the Hilbert space $\cT_2(\cH)$ onto $L^2(\RR^{2n})$, so that  
	\begin{align}
		\tr(T_1^*T_2)=(2\pi)^{-n}\int_{\RR^{2n}} \overline{\cF_{T_1}(z)}\cF_{T_2}(z) dz,
		\end{align}
(where the notation $ \overline{f(z)}$ denotes the complex conjugate of $ f(z)$).	\end{theorem}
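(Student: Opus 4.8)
The plan is to realize the Weyl system concretely in the Schr\"odinger representation, reduce the claimed identity to the ordinary Plancherel theorem on $\RR^{2n}$ for a dense class of operators, and then extend by continuity. I would work on $\cH\cong L^2(\RR^n)$ with $Q_j$ acting as multiplication by $x_j$ and $P_j=-i\partial_{x_j}$; writing $z=(q_1,p_1,\ldots,q_n,p_n)$, $q=(q_1,\ldots,q_n)$, $p=(p_1,\ldots,p_n)$, a short Baker--Campbell--Hausdorff computation from \reff{Weyl} shows that $V(z)$ acts as a modulation--translation up to a pure phase, $(V(z)\psi)(x)=e^{i\gamma(z)}\,e^{-i(p,x)}\psi(x-q)$ for some real $\gamma(z)$. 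I would first fix the dense subspace of $\cT_2(\cH)$ consisting of finite-rank operators whose factors are Schwartz functions. For $T_a=\ket{f_a}\bra{g_a}$ with $f_a,g_a$ Schwartz, the characteristic function $\cF_{T_a}(z)=\tr(T_aV(z))=\bra{g_a}V(z)\ket{f_a}$ is a smooth, rapidly decaying cross-ambiguity function given explicitly by $\cF_{T_a}(z)=e^{i\gamma(z)}\int\overline{g_a(x)}\,e^{-i(p,x)}f_a(x-q)\,dx$, and crucially the phase $e^{i\gamma(z)}$ cancels in the product $\overline{\cF_{T_1}(z)}\cF_{T_2}(z)$.

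The central step is the Moyal identity. I would insert the explicit expressions into $\int_{\RR^{2n}}\overline{\cF_{T_1}(z)}\cF_{T_2}(z)\,dz$, perform the $p$-integration first to produce $\int e^{i(p,x-y)}\,dp=(2\pi)^n\delta(x-y)$, and collapse one spatial integral. After the substitution $u=x-q$ the result factorizes as $(2\pi)^n\big(\int g_1(x)\overline{g_2(x)}\,dx\big)\big(\int\overline{f_1(u)}f_2(u)\,du\big)=(2\pi)^n\,\inner{f_1}{f_2}\,\inner{g_2}{g_1}$, which is precisely $(2\pi)^n\tr(T_1^*T_2)$ since $T_1^*T_2=\ket{g_1}\inner{f_1}{f_2}\bra{g_2}$. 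This proves the identity on finite-rank operators with Schwartz factors. These are dense in $\cT_2(\cH)$, so the map $T\mapsto(2\pi)^{-n/2}\cF_T$ is a densely defined isometry into $L^2(\RR^{2n})$ and extends uniquely to all of $\cT_2(\cH)$, which yields the asserted Parseval relation by polarization.

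It remains to prove surjectivity onto $L^2(\RR^{2n})$. Since the extended map is an isometry its range is closed, so it suffices to show the range is dense, equivalently that the orthogonal complement of $\{\cF_T\}$ is trivial. If $h\in L^2(\RR^{2n})$ is orthogonal to every $\cF_{\ket{f}\bra{g}}$, then the weakly defined operator $A:=\int\overline{h(z)}\,e^{i\gamma(z)}\,V(z)\,dz$ satisfies $\bra{g}A\ket{f}=0$ for all Schwartz $f,g$, whence $A=0$; injectivity of the Weyl transform (itself a consequence of the Moyal identity) then forces $h=0$.

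The main obstacle is rigor rather than novelty. The interchange of the $p$-integral with the remaining integrations passes through a Dirac delta, and $V(z)$ is unitary rather than trace class, so $\cF_T$ is a priori only well defined pointwise for trace-class $T$. Both issues are handled by confining the explicit computation to Schwartz factors, where Fubini's theorem and Fourier inversion are legitimate, and by deducing the full statement on $\cT_2(\cH)$ from the isometry-plus-density argument rather than from any direct trace manipulation with unbounded or non-trace-class objects.
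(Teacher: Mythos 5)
The paper offers no proof of this statement at all: it is imported verbatim from Holevo (Theorem 5.3.3 of \cite{Holevo2011}), so your proposal must be judged as a reconstruction of the cited result. Your core argument follows the standard route and is essentially sound: for rank-one operators with Schwartz factors the Moyal computation $\int_{\RR^{2n}} \overline{\cF_{T_1}(z)}\cF_{T_2}(z)\,dz=(2\pi)^n\inner{f_1}{f_2}\inner{g_2}{g_1}=(2\pi)^n\tr(T_1^*T_2)$ is legitimate (Fubini and Fourier inversion apply; your $\psi(x-q)$ versus the $\psi(x+q)$ one gets from the convention \reff{Weyl} is a harmless sign slip, since the phase cancels and the $q$-integral is over all of $\RR^n$), and extending $T\mapsto(2\pi)^{-n/2}\cF_T$ from the dense set of finite-rank Schwartz operators to an isometry on $\cT_2(\cH)$ is standard. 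One omission worth a sentence in a full write-up: you should verify that the $L^2$-extension agrees a.e.\ with the pointwise formula \reff{qchar} for a general trace-class $T$, e.g.\ by approximating $T$ simultaneously in $\|\cdot\|_1$ (giving uniform convergence of characteristic functions, since $|\cF_S(z)|\le\|S\|_1$) and in $\|\cdot\|_2$.

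The genuine weak point is surjectivity. Defining $A_h:=\int \overline{h(z)}\,V(z)\,dz$ weakly is fine, and the already-proved Moyal identity shows the form is bounded, $|\bra{g}A_h\ket{f}|\le(2\pi)^{n/2}\|h\|_{2}\|f\|\,\|g\|$, so $h\perp\{\cF_T\}$ indeed gives $A_h=0$. But your final appeal to ``injectivity of the Weyl transform, itself a consequence of the Moyal identity'' is circular as set up: with the duality you use, $A_h=0$ is \emph{by definition} equivalent to $h\perp\{\cF_{\ket{f}\bra{g}}\}$, so injectivity of $h\mapsto A_h$ is exactly the density of the range you are trying to prove, and the Moyal identity alone gives only the isometry on the range, not completeness. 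Two standard repairs: (i) compute the integral kernel of $A_h$ in the Schr\"odinger representation --- it is, up to a constant and a phase, a partial Fourier transform of $h$ composed with an invertible linear change of variables, so $h\mapsto A_h$ is (up to constant) unitary onto the space of $L^2$ kernels, i.e.\ onto $\cT_2(\cH)$, which gives surjectivity directly; or (ii) test $A_h$ against Gaussian coherent states: the vanishing of all such matrix elements expresses the vanishing of a Gaussian (twisted) convolution of $h$, and since the Fourier transform of a Gaussian vanishes nowhere, $h=0$. Either fix is short, but one of them is needed; as written your last step assumes what it proves.
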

	
 We also employ the following lemma, which is an adaptation of a result proved by Holevo
in \cite{Holevo2011}.
 \begin{lemma}[See \cite{Holevo2011} Lemma 5.4.2]\label{lemma5.4.2} Let $X$ be a (possibly unbounded) self-adjoint operator and let $\rho$ be a state such that 
 	\begin{align}
 		\tr(\rho X^2)<\infty.
 		\end{align}
 	Then, 
 	\begin{align}
 		X=i^{-1}\left.\frac{d}{dt}\right|_{t=0}U_t
 	\end{align}	
 	where $U_t:=\e^{it X}$ is defined by functional calculus, and the derivative is taken in $L^2(\rho)$, i.e., 
 		\begin{align}
 		\tr\left[\rho \left|  \frac{U_t-I}{t}-iX   \right|^2\right]\underset{t\to 0}{\rightarrow}  0.
 		\end{align}
 \end{lemma}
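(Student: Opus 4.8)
The plan is to diagonalize the problem via the spectral theorem for the self-adjoint operator $X$, thereby reducing the operator-valued statement to a scalar dominated-convergence argument against a single finite measure on $\RR$. Writing the spectral resolution $X=\int_\RR \lambda\,\d E(\lambda)$, the functional calculus gives $U_t=\e^{itX}=\int_\RR \e^{it\lambda}\,\d E(\lambda)$, so that on $\dom(X)$ the operator whose $L^2(\rho)$-norm must be controlled is
\[
\frac{U_t-I}{t}-iX=\int_\RR g_t(\lambda)\,\d E(\lambda),\qquad g_t(\lambda):=\frac{\e^{it\lambda}-1}{t}-i\lambda .
\]
I would then show that the claimed convergence reduces to
\[
\int_\RR |g_t(\lambda)|^2\,\d\mu_\rho(\lambda)\xrightarrow[t\to0]{}0,
\]
where $\mu_\rho(\cdot):=\tr\big(\rho\,E(\cdot)\big)$ is the spectral distribution of $X$ in the state $\rho$; since $\tr\rho=1$, this is a probability measure on $\RR$, and the hypothesis reads $\int_\RR \lambda^2\,\d\mu_\rho(\lambda)=\tr(\rho X^2)<\infty$.

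To justify the reduction to a scalar integral --- the first point requiring care because $g_t$ is \emph{unbounded} --- I would use a spectral decomposition $\rho=\sum_k p_k\,\proj{\psi_k}$ with $p_k\ge 0$, $\sum_k p_k=1$. Finiteness of $\tr(\rho X^2)=\sum_k p_k\,\|X\psi_k\|^2$ forces $\psi_k\in\dom(X)$ whenever $p_k>0$, so each vector $A_t\psi_k:=\big(\tfrac{U_t-I}{t}-iX\big)\psi_k$ is well defined, and the spectral theorem yields $\|A_t\psi_k\|^2=\int_\RR |g_t(\lambda)|^2\,\d\mu_{\psi_k}(\lambda)$ with $\mu_{\psi_k}(\cdot)=\langle\psi_k,E(\cdot)\psi_k\rangle$. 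Summing over $k$ and interchanging sum and integral by Tonelli (all integrands nonnegative) gives
\[
\tr\Big[\rho\,\Big|\tfrac{U_t-I}{t}-iX\Big|^2\Big]=\sum_k p_k\,\|A_t\psi_k\|^2=\int_\RR |g_t(\lambda)|^2\,\d\mu_\rho(\lambda),
\]
which is exactly the scalar quantity above.

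Finally I would apply the dominated convergence theorem using the elementary inequality $|\e^{ix}-1-ix|\le\min\!\big(\tfrac{x^2}{2},\,2|x|\big)$ for $x\in\RR$, which gives $|g_t(\lambda)|\le\min\!\big(\tfrac{|t|}{2}\lambda^2,\,2|\lambda|\big)$. The first bound shows $g_t(\lambda)\to0$ pointwise in $\lambda$, while the second gives the $t$-uniform majorant $|g_t(\lambda)|^2\le 4\lambda^2$, which is $\mu_\rho$-integrable precisely because $\int_\RR 4\lambda^2\,\d\mu_\rho=4\,\tr(\rho X^2)<\infty$. Dominated convergence then yields $\int_\RR|g_t|^2\,\d\mu_\rho\to0$, completing the proof. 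I expect the only genuine obstacle to be the bookkeeping around the unboundedness of $X$: making sense of $\tfrac{U_t-I}{t}-iX$ on $\dom(X)$ and justifying the trace-to-integral identity for the unbounded symbol $g_t$; the hypothesis $\tr(\rho X^2)<\infty$ is exactly what supplies the integrable dominating function $4\lambda^2$ needed to pass to the limit.
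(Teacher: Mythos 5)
Your proof is correct. The paper gives no proof of this lemma at all---it is imported from \cite{Holevo2011} (Lemma 5.4.2)---and your argument (spectral theorem to reduce the claim to $\int_{\RR}|g_t(\lambda)|^2\,\d\mu_\rho(\lambda)\to 0$ for the probability measure $\mu_\rho(\cdot)=\tr(\rho\,E(\cdot))$, the eigendecomposition of $\rho$ with Tonelli to give rigorous meaning to $\tr\bigl[\rho\,|\tfrac{U_t-I}{t}-iX|^2\bigr]$, and dominated convergence with majorant $4\lambda^2$ supplied exactly by $\tr(\rho X^2)<\infty$) is precisely the standard route taken in that reference.
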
	
\subsection{Gaussian states}\label{sec-Gaussian}
An important class of states for a continuous variable quantum system are the Gaussian states. They play a fundamental role in the study of various quantum information processing tasks involving continuous variable quantum systems which are relevant in quantum computation and quantum cryptography. Important examples of Gaussian states arise in quantum optics and include coherent states, squeezed states and thermal states. A Gaussian state $\rho$ is defined via its characteristic function
\begin{align}\label{char-fn-G}
	\cF_\rho(z)=\e^{i(\mu, z)-\frac{1}{2}(z, \Sigma z)}\qquad z\in\cZ,
\end{align}
where $\mu$ is the mean vector of the state $\rho$, with elements
\begin{align}
	\mu_i &:=\tr(\rho R_i), \quad i=1,2, \ldots, 2n,
\end{align}
and $\Sigma$ is its $2n \times 2n$ (real, symmetric) covariance matrix, with elements
	\begin{align}
\Sigma_{ij}&:= \frac{1}{2}\tr(\rho(R_i^cR_j^c+R_j^cR_i^c)),
\end{align} 
where $R_i^c:=R_i-\tr(\rho R_i)$. The characteristic function given by \Cref{char-fn-G} is exactly the one of a $2n$-dimensional Gaussian vector of mean $\mu$ and covariance $\Sigma$. The Gaussian state is said to be {\em{centered}} if $\mu=0$. Gaussian states have some very useful properties (see \cite{DMGH15,Holevo2012}). For example, the linear span of all Gaussian states is dense in the Banach space, $\cT_1(\cH)$, of trace class operators. Moreover, any Gaussian state can be written in a rather simple way as follows.	
\begin{theorem}[\cite{Holevo2012} Lemma 12.21 and Theorem 12.22]\label{theoremgaussstate}
	A Gaussian state $\rho$ with covariance matrix $\Sigma$ is invertible if and only if 
	\begin{align}
		\det\left( \Sigma +\frac{i}{2}\Omega\right) \ne 0.
	\end{align}	
	Moreover, any Gaussian state $\rho\equiv \rho_{\mu,\Sigma}$ of mean $\mu$ and covariance matrix $\Sigma$ can be written as
	\begin{align}\label{gausstogausscentered}
		\rho_{\mu,\Sigma}=V(\mu)\rho_{0,\Sigma}V(-\mu),
	\end{align}	
	where $\rho_{0,\Sigma}$ is a centered Gaussian state of covariance matrix $\Sigma$, and $V(\mu)$ denotes the Weyl operator given by \Cref{Weyl}. Further, a centered, invertible Gaussian state can be expressed as follows:
\begin{align}
	\rho_{0,\Sigma}=C\e^{-R^T \Gamma R},\label{exprrho}
\end{align}	
where 
\begin{align}
C &=\left[\det\left(\Sigma+\frac{i}{2}\Omega\right) \right] ^{-1/2},
\end{align}
with $\Omega$ given by \Cref{Omega} and $\Gamma$ defined through the relation 
\begin{align}\label{gamma}
2 \Omega^{-1}\Sigma=\cot(\Gamma\Omega).
\end{align}
\end{theorem}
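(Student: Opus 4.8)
The plan is to reduce everything to a single mode through a symplectic normal form, where all the objects can be computed explicitly.

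\textbf{Reduction to centered states.} First I would dispose of the mean, proving \Cref{gausstogausscentered}. Using the Weyl composition law \Cref{CCRrelation} together with the displacement relation \Cref{commRV}, conjugation by $V(\mu)$ sends $R_j$ to $R_j$ plus a multiple of $I$, so it leaves all second moments, hence the covariance matrix, unchanged, while shifting the first moments; matching the sign convention fixed by \Cref{commRV} identifies $V(\mu)\rho_{0,\Sigma}V(-\mu)$ as the Gaussian state of mean $\mu$ and covariance $\Sigma$. This reduces the invertibility criterion and the exponential form \Cref{exprrho} to centered states.

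\textbf{Williamson normal form and the metaplectic unitary.} Since a genuine state forces $\Sigma$ to be real, symmetric and positive-definite, Williamson's theorem provides a symplectic $S$ (so $S\Omega S^T=\Omega$, whence $\det S=1$) with $S\Sigma S^T = D:=\bigoplus_{j=1}^n \nu_j I_2$, where the symplectic eigenvalues $\nu_j>0$ are the moduli of the eigenvalues of $i\Omega\Sigma$. The symplectic $S$ is implemented on $\cH$ by a metaplectic unitary $U_S$ with $U_S V(z)U_S^* = V(Sz)$, equivalently $U_S^* R\, U_S = SR$. A direct second-moment computation then gives $U_S\rho_{0,\Sigma}U_S^* = \rho_{0,D}$, so it suffices to exhibit the normal-form state $\rho_{0,D}$ and pull it back.

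\textbf{Single-mode thermal states, the explicit form, and verification.} The state $\rho_{0,D}$ factorizes as a tensor product over the $n$ modes of single-mode centered Gaussian states of covariance $\nu_j I_2$, each a thermal state proportional to $\e^{-\lambda_j a_j^\dagger a_j}$. Since $a_j^\dagger a_j = \tfrac12(Q_j^2+P_j^2-I)$, each factor is proportional to $\e^{-\frac{\lambda_j}{2}(Q_j^2+P_j^2)}=\e^{-R^T\Gamma_{D,j}R}$ with $\Gamma_{D,j}=\tfrac{\lambda_j}{2}I_2$, and matching the variance gives $\nu_j = \tfrac12\coth(\lambda_j/2)$. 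Setting $\Gamma_D:=\bigoplus_j \tfrac{\lambda_j}{2}I_2$ and pulling back by $U_S$ yields $\rho_{0,\Sigma}=C\,U_S^*\e^{-R^T\Gamma_D R}U_S = C\,\e^{-R^T\Gamma R}$ with $\Gamma=S^T\Gamma_D S$, which is \Cref{exprrho}. To check \Cref{gamma}, note that on one mode $\omega^2=-I_2$ forces $\cot(c\,\omega)=-\coth(c)\,\omega$, so $2\Omega^{-1}D=\cot(\Gamma_D\Omega)$ is exactly the identity $\nu_j=\tfrac12\coth(\lambda_j/2)$; the general case follows because $\Gamma\Omega = S^T(\Gamma_D\Omega)S^{-T}$ is a similarity transform (using $S\Omega=\Omega S^{-T}$), giving $\cot(\Gamma\Omega)=S^T\cot(\Gamma_D\Omega)S^{-T}=2\Omega^{-1}\Sigma$. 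For the constant, $C^{-1}=\tr\,\e^{-R^T\Gamma R}=\prod_j 2\sinh(\lambda_j/2)$, and since $\nu_j^2-\tfrac14 = \tfrac{1}{4}\operatorname{csch}^2(\lambda_j/2)$ this equals $\prod_j(\nu_j^2-\tfrac14)^{1/2}=[\det(\Sigma+\tfrac{i}{2}\Omega)]^{1/2}$, the last equality using $\det S=1$ and $S\Omega S^T=\Omega$ to get $\det(\Sigma+\tfrac{i}{2}\Omega)=\prod_j(\nu_j^2-\tfrac14)$. The invertibility statement then drops out: $\rho_{0,\Sigma}$ is invertible iff each thermal factor is, i.e. iff every $\lambda_j<\infty$, i.e. iff $\nu_j>\tfrac12$; combined with the uncertainty bound $\nu_j\ge\tfrac12$ this is precisely $\prod_j(\nu_j^2-\tfrac14)=\det(\Sigma+\tfrac{i}{2}\Omega)\neq 0$.

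\textbf{Main obstacle.} I expect the delicate points to be, first, the rigorous construction of the metaplectic unitary $U_S$ and the justification of $U_S^*R\,U_S=SR$ for the \emph{unbounded} quadratures in the infinite-dimensional setting (where domains and strong-continuity issues must be controlled), and second, the precise verification of \Cref{gamma}: pinning down the signs and the matrix $\cot$ through functional calculus, and matching the normalization $C$ to $[\det(\Sigma+\tfrac{i}{2}\Omega)]^{-1/2}$, is where all the convention-dependent bookkeeping concentrates.
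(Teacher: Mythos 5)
The paper contains no proof of this theorem: it is imported verbatim from Holevo's book (Lemma 12.21 and Theorem 12.22 of \cite{Holevo2012}) and used as a black box, so there is no internal argument to compare against. Your proposal is a self-contained derivation along the standard route — displacement via \Cref{CCRrelation} and \Cref{commRV} to reduce to centered states, Williamson normal form plus the metaplectic representation to reduce to a tensor product of single-mode thermal states, then explicit one-mode computations — and the architecture is sound. In particular the mode-wise identities check out: $\nu_j=\tfrac12\coth(\lambda_j/2)$ for the thermal covariance, $\cot(c\,\omega)=-\coth(c)\,\omega$ when $\omega^2=-I_2$ (eigenvalues $\pm i$), the similarity $\Gamma\Omega=S^T(\Gamma_D\Omega)S^{-T}$ combined with $S^T\Omega^{-1}=\Omega^{-1}S^{-1}$ gives $\cot(\Gamma\Omega)=2\Omega^{-1}\Sigma$, $\det S=1$ gives $\det\bigl(\Sigma+\tfrac{i}{2}\Omega\bigr)=\prod_j\bigl(\nu_j^2-\tfrac14\bigr)$, and the invertibility criterion ($\lambda_j<\infty$, i.e.\ $\nu_j>\tfrac12$, against the Robertson bound $\nu_j\ge\tfrac12$) is exactly $\det\bigl(\Sigma+\tfrac{i}{2}\Omega\bigr)\neq 0$.

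One computational slip in the normalization step: the correct value is
\begin{align}
\tr \e^{-R^T\Gamma R}=\prod_j \tr\,\e^{-\lambda_j\left(a_j^\dagger a_j+\frac12\right)}=\prod_j \frac{\e^{-\lambda_j/2}}{1-\e^{-\lambda_j}}=\prod_j \bigl(2\sinh(\lambda_j/2)\bigr)^{-1},\nonumber
\end{align}
not $\prod_j 2\sinh(\lambda_j/2)$ as you wrote. Fortunately the endpoints of your chain are the true statements: since $\bigl(\nu_j^2-\tfrac14\bigr)^{1/2}=\bigl(2\sinh(\lambda_j/2)\bigr)^{-1}$, the corrected value still yields $C^{-1}=\tr \e^{-R^T\Gamma R}=\bigl[\det\bigl(\Sigma+\tfrac{i}{2}\Omega\bigr)\bigr]^{1/2}$, hence $C=\bigl[\det\bigl(\Sigma+\tfrac{i}{2}\Omega\bigr)\bigr]^{-1/2}$ as required — your two inversions cancel, but as written the middle equality is false and should be fixed. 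The ingredients you invoke without proof (Williamson's theorem, the existence of $U_S$ with $U_S^*R\,U_S=SR$ with domain control, positive-definiteness of $\Sigma$ from the state condition, and the reading of ``invertible'' as trivial kernel, so that the pure $\nu_j=\tfrac12$ factor is precisely the non-invertible case) are standard, and you correctly identify the metaplectic domain issues as the genuinely delicate point; filling those in would give a complete proof essentially equivalent to the textbook one the paper cites.
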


\subsection{Schwartz operators} \label{Schwartz}
Recently Keyl, Kiukas and Werner introduced a larger class of operators, the so-called Schwartz operators, in \cite{KKW16}. We first recall that a Schwartz function on $\RR^n$ is a smooth function $\varphi:\RR^n\mapsto \CC$ for which
\begin{align}
\sup_{x\in\RR^n}\big|x_1^{\alpha_1}\dots x_n^{\alpha_n} \frac{\partial^{\beta_1}}{\partial x_1^{\beta_1}}\dots\frac{\partial^{\beta_n}}{\partial x_n^{\beta_n}} \varphi(x)\big|<\infty
\end{align}
for all $\alpha,\beta\in\NN^n$. The set of Schwartz functions is denoted by $\mathcal{S}(\RR^n)$. A Hilbert Schmidt operator $T$ is called a \textit{Schwartz operator} if its characteristic function $\chi_T$ is in $\mathcal{S}(\RR^{2n})$. Note that any Gaussian state is a Schwartz operator. More generally, a quantum state which is a Schwartz operator is called a \textit{Schwartz state}.

\subsection{Entropic quantities}\label{sec-entropies}
The von Neumann entropy of a quantum state $\rho \in \cD(\cH)$ is given by 
\begin{align}
	S(\rho):= -\tr(\rho\ln\rho).
\end{align}
The quantum relative entropy of two quantum states $\rho,\sigma \in \cD(\cH)$ is given by:
\begin{align}
	D(\rho||\sigma):=\left\{ \begin{aligned}
	&\tr(\rho(\ln\rho-\ln\sigma))\quad {\hbox{if }} \operatorname{supp}\rho\subset \operatorname{supp}\sigma\nonumber \\
		&\infty\quad {\hbox{else.}}
		\end{aligned}\right.   
\end{align} 
 Consider a family $\{\rho^{(\theta)}\}_{\theta\in\RR}$ of quantum states on some Hilbert space $\cH$, parametrized by a real number $\theta$, such that $\rho^{(0)}=\rho$, and such that the function $\theta\mapsto D(\rho||\rho^{(\theta)})$ is twice differentiable at $0$. We consider the following entropic quantity of the family of $\{\rho^{(\theta)}\}_{\theta\in\RR}$, called the {\em{divergence-based Fisher information}} \cite{Koenig2012,Koenig2015}:
	\begin{align}\label{Fisherinfo}
		\cJ(\{\rho^{(\theta)}\})&:=\left.\frac{d^2}{d\theta^2}D(\rho||\rho^{(\theta)})\right|_{\theta=0}.
	\end{align}
In \Cref{concavity}, we define a particular family of quantum states (see \Cref{shift-states}) and study the divergence-based Fisher information associated to it.

We also consider the following entropic quantity, $J(\rho)$ of a state $\rho$, which we call the {\em{entropy variation rate}} (the nomenclature is justified by \Cref{quantumdebruijin}):
\begin{align}\label{entropy-var-rate}
		J(\rho)&:=\sum_{j=1}^{2n}\left.\frac{d^2}{d\theta^2}D(\rho||
\rho^{(\theta)}_{R_j})\right|_{\theta=0} \equiv \sum_{j=1}^{2n}\left.\cJ(\{\rho^{(\theta)}_{R_j}\})\right|_{\theta=0},
	\end{align}
where 
\begin{align}\label{shift-states}
\rho^{(\theta)}_{R_j}&:= e^{+i\theta R_j} \rho \, e^{-i\theta R_j}, \quad j \in \{1,2, \ldots 2n\}
\end{align} 
where $R_j$ denotes the $j^{th}$ element of the vector $R$ given by \Cref{r-eq}.
\medskip

\noindent
We use yet another entropic quantity in our analysis, namely the {\em{quantum entropy power}} of an $n$-mode state $\rho$ (as defined in \Cref{sec-CCR}), which is defined as
follows:
\begin{align}\label{EP}
	E(\rho):=\e^{S(\rho)/n}.
\end{align}

\subsection{Quantum Dynamical Semigroups}\label{sec-QDS}

In this section, we briefly recall the basics of quantum dynamical semigroups. $\cL$ is a linear operator defined on its domain ${\rm{dom}}(\cL) \subset \cT_1(\cH)$. The solution of \reff{master}, under the assumption of
Markovianity, is given by a one-parameter family $\left(\Lambda_t\right)_{t \geq 0}$ of quantum operations, i.e. linear, completely positive, trace-preserving maps on $\mathcal T _1(\mathcal H)$ satisfying the following properties
\begin{itemize}
\item $\Lambda_0 = {\rm{id}}$, where ${\rm{id}}$ denotes the identity map;
\item $\Lambda_t \Lambda_s = \Lambda_{t+s}$ $\,-\,$ semigroup property;
\item $\underset{t\to 0}{\lim} || \Lambda_t(\rho) - \rho||_1 = 0$ $\,-\,$ strong continuity.
\end{itemize}
The parameter $t$ plays the role of time. Hence, $\Lambda_t$ results in time evolution over the interval $[0,t]$, and is called a {\em{quantum dynamical map}}. Accordingly, the semigroup $\left(\Lambda_t\right)_{t \geq 0}$ is called a {\em{quantum dynamical semigroup}} (QDS) or a quantum Markov semigroup. Formally, one writes $\Lambda_t = e^{t\cL}$, and refers to $\cL$ as the generator (or infinitesimal generator) of the semigroup. It is also called the Lindblad superoperator or Liouvillean. The latter name stems from the fact that it is a generalization of the superoperator arising in the Liouville-von Neumann equation, which governs the unitary time evolution of the state of a closed quantum system. The semigroup property embodies the assumptions of time-homogeneity and Markovianity, since it implies
that the time evolution is independent of its history and of the actual time. The quantum master equation can be expressed as
\begin{align}
\label{evo2}
\frac{d \Lambda_t(\rho)}{dt} = \cL \circ \Lambda_t(\rho).
\end{align}
Henceforth we denote $\rho_t= \Lambda_t(\rho)$.
 The QDS $\left(\Lambda_t\right)_{t \geq 0}$ gives a description of the dynamics (of states) in the Schr\"odinger picture. In the Heisenberg picture, the 
dynamics (of observables) is given by the semigroup $(\Lambda^*_t)_{t\ge 0}$ of linear, completely positive, unital maps  on $\cB(\cH)$, satisfying the following properties:
\begin{itemize}
	\item $\Lambda^*_0 = {\rm{id}}$, where ${\rm{id}}$ denotes the identity map;
	\item $\Lambda^*_t \Lambda^*_s = \Lambda^*_{t+s}$ $\,-\,$ semigroup property;
		\item $\lim_{t \to 0} || \Lambda^*_t(X) - X||_\infty = 0$ $\,-\,$ strong continuity.
\end{itemize}
Note that $\Lambda_t^*$ is the adjoint of $\Lambda_t$, i.e., $\tr(A \Lambda_t(\rho)) = \tr(\Lambda_t^*(A)\rho)$
for any $A \in {\mathcal{B}}({\mathcal{H}})$, and any $\rho\in \cD(\cH)$. Since
$\Lambda_t^*$ is positive and unital, it satisfies the following contractivity property: $\|\Lambda_t^*\|_\infty\le 1$. The semigroup in the Schr\"odinger picture is also (trivially) contractive, in fact, $\| \Lambda_t \|_1 = 1$, since $\Lambda_t$ is trace-preserving.

 A state $\rho$ is said to be an {\em{invariant state}} with respect to the semigroup if $\Lambda_t(\rho)=\rho$, or equivalently
 \begin{align}
 	\tr(\rho \Lambda_t^*(A))=	\tr(\rho A),\qquad \forall \, A\in\mathcal{B}(\cH), \,\, \forall t \geq 0.
 	\end{align}

\section{Quantum Diffusion Semigroup}\label{sec-Qdiff}

In this paper we consider a {\em{quantum diffusion equation}}, which is a quantum master equation with generator $\LL$ given by 
\begin{align}\label{HeatL}
\cL\left(\cdot \right) := - \frac{1}{4} \sum_{j=1}^{2n} [R_j,[R_j, \cdot]],
\end{align}
on $$D:= {\rm{span}}\bigl\{ \ket{\varphi}\bra{\psi} \,:\, \ket{\varphi}, \ket{\psi} \in \bigcap_{j=1}^n \left({\rm{dom}} (P_j^2) \cap {\rm{dom}} (Q_j^2)\right)\bigr\},$$
with $R_j$ being the $j^{th}$ element of the vector $R$ defined through \Cref{r-eq}, and
\begin{align}
{\rm{dom}} (Q_j^2) &= \left\{ \psi \in L^2( {\mathbb{R}}^n) \,:\, \int_{{\mathbb{R}}^n} |q_j^2 \,\psi(q)|^2 dq < \infty \right\},\nonumber\\
{\rm{dom}} (P_j^2) &= \left\{ \psi \in L^2( {\mathbb{R}}^n) \cap C^2({\mathbb{R}}^n)\,:\, \int_{{\mathbb{R}}^n} |\partial_j^2 \psi(q)|^2 dq < \infty \right\}.
\end{align}

By Section 2 of \cite{Holevo1996}, $\cL$ extends to an unbounded operator on $\mathcal T_1(\mathcal H)$ such that, if an operator $X$ has finite moments of order 2, then $X\in \mathrm{dom}(\mathcal L)$; this in particular is true if $X$ is a Gaussian state.

The following theorem lists some important properties of the semigroup. Some of these were proved by
K\"{o}nig and Smith in \cite{Koenig2012}.

\begin{theorem} \label{theo_KS}
Consider the quantum diffusion semigroup $(\Lambda_t)_{t \geq 0}$ with generator $\cL$ defined through \Cref{HeatL}.
 	\begin{enumerate}
 		\item{The semigroup is reversible (or symmetric). That is, for all $A,B \in \dom(\LL)$, 
 		\begin{align}
 		\langle A, \LL(B)\rangle=\langle \LL(A), B\rangle.\label{Lhermitian}
 		\end{align}}
 		\item{Let the state $\rho_t\equiv \Lambda_t(\rho)$ denote the solution of the quantum diffusion equation defined through \Cref{evo2,HeatL}, when the initial state is $\rho$. Then for each $t\ge 0$, the characteristic function of $\rho_t$ 
is given by
 		\begin{align}\label{rhorhotcharac}
 			\cF_{\rho_t}(z)=\cF_{\rho}(z)\e^{-|z|^2t/4}, \quad \forall \, z \in \cZ,
 		\end{align}
 and we have
\begin{align}\label{q-conv}
\equiv \rho_t \equiv \rho * g_{t/2} &:= \int_{\cZ} dz \, g_{t/2}(z) \cA(z)\left(\rho\right),
\end{align}
where $g_{t/2}$ denotes the probability density function (pdf) of a Gaussian random variable on ${\mathbb{R}}^{2n}$ with zero mean and variance equal to $t/2$, i.e.
\begin{align}\label{gt/2}
g_{t/2}(z) &= \frac{1}{(\pi t)^n} e^{- |z|^2/t},
\end{align} 
and $\cA(z)$ is the automorphism defined through the relation 
\begin{align}
\cA(z) (\rho) &:=  V(z) \rho \,V(z)^* = V(z) \rho\, V(-z),
\end{align}
where $V(z)$ denotes the Weyl operator given by \reff{Weyl}. In particular, if $\rho$ is a Gaussian state, then so is $\rho_t$.}
\item{The semigroup does not possess an invariant state.}
\item{The semigroup is {\em{self-dual}}, i.e.,
\begin{align}\label{self-dual}
\langle A, \Lambda_t(B)\rangle = \langle \Lambda_t(A), B \rangle,\quad \forall A, B \in \cT_2(\cH).
\end{align}}
 	\end{enumerate}
\end{theorem}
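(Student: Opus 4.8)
The plan is to prove the four properties of Theorem~\ref{theo_KS} by systematically exploiting the characteristic-function representation, which is the central tool. The key observation is that the generator $\cL$ acts very simply on characteristic functions: if we can show that $\cF_{\cL(\rho)}(z) = -\frac{1}{4}|z|^2 \cF_{\rho}(z)$, then essentially everything follows. To establish this, I would use the displacement relation \eqref{commRV}, namely $V(w) R_j V(-w) = R_j + w_j I$, which after differentiating in $w$ gives the commutators $[R_j, V(z)]$ explicitly; indeed, from $V(z) = e^{i(z,\Omega R)}$ one computes $[R_j, V(z)] = -(\Omega z)_j \, V(z)$ up to the appropriate sign. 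Iterating, the double commutator satisfies $[R_j,[R_j,V(z)]] = (\Omega z)_j^2 \, V(z)$, so that summing over $j$ produces $\sum_{j=1}^{2n}[R_j,[R_j,V(z)]] = |\Omega z|^2 V(z) = |z|^2 V(z)$ (since $\Omega$ is orthogonal). Taking the trace against $\rho$ and using $\cF_{\rho}(z)=\tr(\rho V(z))$, together with the symmetry of $\cL$ from part~(1), then yields the eigenvalue relation on characteristic functions, which integrates to \eqref{rhorhotcharac}.

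For part~(1), symmetry of $\cL$ with respect to the Hilbert--Schmidt inner product, I would compute $\langle A, \cL(B)\rangle = -\frac{1}{4}\sum_j \tr\!\big(A^* [R_j,[R_j,B]]\big)$ and move the two commutators onto $A^*$ using cyclicity of the trace, $\tr(A^*[R_j, C]) = \tr([A^*, R_j] C) = -\tr([R_j, A^*] C)$; each commutator transfer introduces a sign, and the two signs cancel, giving $\langle \cL(A), B\rangle$. The self-duality in part~(4) is then immediate from part~(1) by functional calculus: since $\Lambda_t = e^{t\cL}$ and $\cL$ is symmetric, $\Lambda_t$ is symmetric, i.e.\ $\langle A, \Lambda_t(B)\rangle = \langle \Lambda_t(A), B\rangle$; alternatively, and more robustly against domain issues, one reads self-duality directly off the convolution form \eqref{q-conv}, using that $\cA(z)$ is symmetric (since $V(z)^*=V(-z)$ and $\cA(z)(\rho)=V(z)\rho V(-z)$) and that the Gaussian density $g_{t/2}$ is even in $z$.

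For the representation \eqref{q-conv}, I would take the inverse (noncommutative) Fourier transform of \eqref{rhorhotcharac}: the factor $e^{-|z|^2 t/4}$ is, up to the convention's normalization, the characteristic function of a centered Gaussian with variance $t/2$, so multiplication of characteristic functions corresponds to a convolution of $\rho$ with the Gaussian measure in phase space, where each displacement by $z$ is implemented by the automorphism $\cA(z)$. This is verified by checking that $\int dz\, g_{t/2}(z)\, \tr\big(V(z)\rho V(-z)\, V(w)\big)$ reproduces $\cF_{\rho}(w)\,e^{-|w|^2 t/4}$, which follows from the Weyl relation \eqref{CCRrelation} that converts $V(-z)V(w)V(z)$ into $e^{i\{z,w\}}V(w)$ and from computing the resulting Gaussian integral $\int dz\, g_{t/2}(z)\, e^{i\{z,w\}}$. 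That $\rho$ Gaussian implies $\rho_t$ Gaussian is then transparent from \eqref{rhorhotcharac}, since multiplying a Gaussian characteristic function $e^{i(\mu,z)-\frac12(z,\Sigma z)}$ by $e^{-|z|^2 t/4}$ merely shifts the covariance $\Sigma \mapsto \Sigma + \frac{t}{2} I$ while preserving the mean.

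Part~(3), the absence of an invariant state, I would argue by contradiction: if $\rho$ were invariant then $\cF_{\rho}(z) = \cF_{\rho}(z)\,e^{-|z|^2 t/4}$ for all $z$ and all $t>0$, forcing $\cF_{\rho}(z)=0$ for every $z\neq 0$. By continuity $\cF_{\rho} \equiv \cF_{\rho}(0) = \tr\rho$ only at the origin and is zero elsewhere, which cannot be the characteristic function of a trace-class state (by Noncommutative Parseval's relation, Theorem~\ref{NCparseval}, $\cF_{\rho}$ would have to lie in $L^2(\RR^{2n})$ with the correct norm, and an almost-everywhere-zero function cannot represent a nonzero state). I expect the main obstacle to be the rigorous justification of the commutator computation for $\cL$ acting on the unbounded Weyl generators---specifically, ensuring that the formal manipulations $[R_j, V(z)] = -(\Omega z)_j V(z)$ and the subsequent trace exchanges are valid on the domain $D$ and extend to states with finite second moments, which is precisely where the cited domain facts from \cite{Holevo1996} and Lemma~\ref{lemma5.4.2} must be invoked to control the unboundedness and interchange of limits, traces, and integrals.
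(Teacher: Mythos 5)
Your proposal is correct, and its overall architecture --- everything funnels through the characteristic function, with part (3) deduced from \Cref{rhorhotcharac} and part (4) read off the convolution representation \Cref{q-conv} using evenness of $g_{t/2}$, cyclicity of the trace, and $V(z)^*=V(-z)$ --- coincides with the paper's. The genuine difference is in part (2): the paper does not prove \Cref{rhorhotcharac} or the convolution correspondence from scratch, but cites K\"onig--Smith \cite{Koenig2012} for the former and Werner's quantum harmonic analysis \cite{Werner1984} for the fact that quantum convolution multiplies characteristic functions. You instead give a self-contained derivation: the eigenrelation $\sum_{j}[R_j,[R_j,V(z)]]=|z|^2V(z)$, hence $\cF_{\cL(\rho)}(z)=-\tfrac14 |z|^2\cF_\rho(z)$ (the same computation the paper performs later inside the proof of \Cref{thm:nash}), integration of the resulting ODE in $t$, and a direct check that $\int g_{t/2}(z)\,\tr\bigl(V(z)\rho V(-z)V(w)\bigr)\,dz=\cF_\rho(w)\,\e^{-|w|^2t/4}$ via $V(-z)V(w)V(z)=\e^{i\{z,w\}}V(w)$ and a Gaussian integral. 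This buys independence from the two external citations, at the cost of the domain bookkeeping you correctly flag: in particular $V(z)\notin\cT_2(\cH)$, so one cannot literally apply the Hilbert--Schmidt symmetry of part (1) against $V(z)$; the trace transfer must be justified as duality with the Heisenberg-picture generator on $\cB(\cH)$, using the domain facts from \cite{Holevo1996} and \Cref{lemma5.4.2}, as you anticipate. Two cosmetic slips, neither fatal: from \Cref{commRV} the commutator is $[R_j,V(z)]=-z_j V(z)$ rather than $-(\Omega z)_j V(z)$ (harmless, since $|\Omega z|=|z|$ and the double commutator sums to $|z|^2 V(z)$ either way); and $\cA(z)$ is not itself symmetric --- its Hilbert--Schmidt adjoint is $\cA(-z)$ --- so it is precisely the evenness of $g_{t/2}$, which you do invoke, that makes $\Lambda_t$ self-dual, exactly as in the paper's computation. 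Your part (3) argument is a more detailed version of the paper's one-line remark, using continuity of $\cF_\rho$ and $\cF_\rho(0)=1$ to rule out an invariant state, and your covariance-shift observation $\Sigma\mapsto\Sigma+\tfrac{t}{2}I$ for the Gaussian-to-Gaussian claim is correct.
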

\begin{proof}
\Cref{Lhermitian} and \Cref{rhorhotcharac} were given in \cite{Koenig2012}. In fact, \Cref{Lhermitian} can be directly verified by using \Cref{HeatL}. The absence of an invariant state can be inferred from \Cref{rhorhotcharac} as follows: if there was an invariant state $\rho$, then its characteristic function would satisfy $\cF_{\rho_t}=\cF_{\rho}$ for all $z \in \cZ$. However, this is impossible by \Cref{rhorhotcharac}. The proof of \Cref{q-conv} is obtained as follows:
Note that $\e^{-|z|^2t/4}$ is the characteristic function of a Gaussian random variable of associated pdf $g_{t/2}$. It is well-known that if $f$ and $g$ are two pdfs then the characteristic function of their convolution $f * g$ is equal to the product of their characteristic functions:
$$  \chi_{f * g} (z) = \chi_f(z) \chi_g (z).$$
This
property also holds when one of the pdfs is replaced by a quantum state and the standard definition of convolution is replaced by the one of {\em{quantum convolution}} (given by \Cref{q-conv}), as shown by Werner \cite{Werner1984} (see also \cite{K72}) in his generalization of harmonic analysis to the quantum framework. Hence, \reff{rhorhotcharac} allows us to express the state $\rho_t$ as a quantum convolution of the initial state $\rho$ and the pdf $g_{t/2}$, as given by \Cref{q-conv}.

The proof of self-duality, \Cref{self-dual}, is as follows: the operator $\Lambda_t(B)$ is given by the right hand side of \Cref{q-conv} with $\rho$ replaced by $B$. Hence,  
\begin{align}
\tr \left(A^* \Lambda_t(B)\right) &= \tr \left(A^* \int_{\cZ}dz \,g_{t/2}(z) V(z) B V(-z) \right) \nonumber\\
&= \tr \left(A^* \int_{\cZ}dz\, g_{t/2}(z) V(-z) B V(z) \right) \nonumber\\
&=\tr \left(\int_{\cZ}dz \,g_{t/2}(z) V(z)A^*V(-z) B  \right) \nonumber\\
&= \tr \left(\left(\int_{\cZ}dz \,g_{t/2}(z) V(z)AV(-z)\right)^* B  \right) \nonumber\\
&= \tr \left( (\Lambda_t(A))^*B\right)
\end{align}
where we have used the symmetry of the Gaussian pdf, the cyclicity of the trace, the fact that $V(z)^* = V(-z)$, and \Cref{q-conv} with $\rho$ replaced by $A$.
\end{proof}
\smallskip

\noindent
{\em{Remark:}} \Cref{q-conv} establishes a direct analogy with the classical heat semigroup $f\ast g_{2t}$ on ${\mathbb{R}}^{n}$, which is the solution of the heat equation:
   \begin{align}\label{c-Heat}
   {\partial_t f}=\Delta f.
   \end{align}
   \medskip

We define the following quantity associated to our QDS:
\begin{align}\label{Dirichlet}
   	\mathcal{E}(\rho):=-\tr\left(\rho\LL(\rho)\right),\qquad \rho\in\dom(\LL).
   	\end{align}
It is called the {\textit{Dirichlet form}} (see e.g.~\cite{Davies1993}) and uniquely characterizes the QDS. Analogously, the Dirichlet form characterizing the semigroup associated with the classical heat equation, \Cref{c-Heat}, in the case in which $f$ is a smooth function on ${\mathbb{R}}^n$, for which $f \nabla f$ vanishes at infinity, is given by
\begin{align}
\cE_{cl}(f)&:= -\int_{{\mathbb{R}}^{2n}} f(x) \Delta f(x) dx \nonumber\\
& = \|\nabla f\|_{2}^2
\label{DF-cl-heat}
\end{align}
where the last line follows by a simple integration by parts.

We employ the following quantum version of de Bruijn's identity (Theorem 7.3 of \cite{Koenig2015}) in our proof. 
\begin{theorem}[Quantum de Bruijn's identity]\label{quantumdebruijin}
	Let $\rho$ be a centered, $n$-mode Gaussian state. Then
	\begin{align}\label{qbruijn}
		\frac{d}{dt}S(\rho_t)=\frac{1}{4}J(\rho_t),
	\end{align}	
where $\rho_t = \rho \ast g_{t/2}$ for all $t \geq 0$.
\end{theorem}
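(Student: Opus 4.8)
The plan is to differentiate the von Neumann entropy directly along the diffusion flow, to recast the entropy variation rate $J$ in an explicit commutator form, and to identify the two expressions using the reversibility of the generator. First I would differentiate $S(\rho_t)$ along the semigroup. Since $\rho_t=\rho\ast g_{t/2}$ solves $\tfrac{d}{dt}\rho_t=\cL(\rho_t)$ with $\cL(\cdot)=-\tfrac14\sum_{j=1}^{2n}[R_j,[R_j,\cdot]]$, applying the chain rule to $x\mapsto x\ln x$ and using trace preservation ($\tr\dot\rho_t=0$) gives
\[
\frac{d}{dt}S(\rho_t)=-\tr\big(\dot\rho_t\ln\rho_t\big)=-\tr\big(\cL(\rho_t)\ln\rho_t\big)=\frac14\sum_{j=1}^{2n}\tr\big([R_j,[R_j,\rho_t]]\,\ln\rho_t\big).
\]

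Second, I would obtain an explicit formula for $J$. For each $j$, conjugation by the unitary $e^{i\theta R_j}$ commutes with the functional calculus, so $\ln\rho^{(\theta)}_{R_j}=e^{i\theta R_j}(\ln\rho)\,e^{-i\theta R_j}$ and
\[
D(\rho\|\rho^{(\theta)}_{R_j})=\tr(\rho\ln\rho)-\tr\big(\rho\,e^{i\theta R_j}(\ln\rho)\,e^{-i\theta R_j}\big).
\]
Differentiating the conjugation twice at $\theta=0$ — legitimate in $L^2(\rho)$ by \Cref{lemma5.4.2}, which applies because Gaussian states have finite second moments — yields $\tfrac{d}{d\theta}\big|_{0}=i[R_j,\ln\rho]$ and $\tfrac{d^2}{d\theta^2}\big|_{0}=-[R_j,[R_j,\ln\rho]]$, so that $\left.\tfrac{d^2}{d\theta^2}D(\rho\|\rho^{(\theta)}_{R_j})\right|_{0}=\tr\big(\rho[R_j,[R_j,\ln\rho]]\big)$ and, summing over $j$,
\[
J(\rho)=\sum_{j=1}^{2n}\tr\big(\rho\,[R_j,[R_j,\ln\rho]]\big).
\]

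Third, I would bridge the two expressions using reversibility. By \Cref{Lhermitian} the generator is Hermitian for the Hilbert--Schmidt inner product; since $\rho_t$ and $\ln\rho_t$ are self-adjoint and $\cL$ preserves self-adjointness, the identity $\langle\ln\rho_t,\cL(\rho_t)\rangle=\langle\cL(\ln\rho_t),\rho_t\rangle$ reads, after clearing the common factor and using cyclicity of the trace, $\sum_j\tr\big([R_j,[R_j,\rho_t]]\ln\rho_t\big)=\sum_j\tr\big(\rho_t[R_j,[R_j,\ln\rho_t]]\big)$. Substituting this into the first step and recognising the right-hand side as $J(\rho_t)$ from the second step gives $\tfrac{d}{dt}S(\rho_t)=\tfrac14 J(\rho_t)$, as claimed.

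The main obstacle is the analytic rigor in this unbounded, infinite-dimensional setting: $\ln\rho_t$ is unbounded, so differentiating the entropy under the trace and invoking the Hermiticity identity (stated for $A,B\in\dom(\cL)$) must be justified, and one must verify that every trace is finite. Here Gaussianity is the decisive simplification. Because $\rho$ is centered Gaussian, so is $\rho_t=\rho\ast g_{t/2}$, and by \Cref{theoremgaussstate} its logarithm $\ln\rho_t=-R^T\Gamma_t R+c_t$ is quadratic in the canonical operators; hence $[R_j,\ln\rho_t]$ is linear in $R$ and $[R_j,[R_j,\ln\rho_t]]$ is a scalar multiple of $I$. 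Since all moments of the Gaussian $\rho_t$ are finite, every trace that appears converges, and the three formal manipulations — the entropy chain rule, the double differentiation of the relative entropy, and the symmetry identity — are all legitimate.
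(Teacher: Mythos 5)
The paper itself does not prove this statement: it imports it verbatim as Theorem 7.3 of \cite{Koenig2015}, so there is no internal proof to compare against. Judged on its own, your argument is essentially correct, and in fact it reassembles machinery the paper develops elsewhere: your second step is precisely \Cref{difftwice} (proved in \Cref{app:lem4}), and your first and third steps combined amount exactly to the identity \eqref{JLL}, $J(\rho)=-4\tr(\cL(\rho)\ln\rho)$, which the paper derives from that lemma. Given \eqref{JLL}, de Bruijn's identity reduces to the single chain-rule computation $\frac{d}{dt}S(\rho_t)=-\tr(\cL(\rho_t)\ln\rho_t)$, which is your first step; this is also the skeleton of the proof in \cite{Koenig2015}.

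Two points deserve more care than you give them. First, invertibility: the theorem is stated for an arbitrary centered Gaussian $\rho$, while your appeal to \Cref{theoremgaussstate} (hence the quadratic form of $\ln\rho_t$, and the hypotheses of \Cref{difftwice}) requires $\rho_t$ invertible. This is harmless for $t>0$, since the covariance of $\rho_t=\rho\ast g_{t/2}$ is $\Sigma+\frac{t}{2}I$, which strictly exceeds the uncertainty bound so that $\det\bigl(\Sigma+\frac{t}{2}I+\frac{i}{2}\Omega\bigr)\neq 0$; but at $t=0$ the identity must be read as a right derivative, or invertibility assumed as the paper does in \Cref{concavqep}. Second, your justification of the chain rule, namely that all traces are finite, addresses only finiteness, not the exchange of $d/dt$ with the trace for the unbounded $\ln\rho_t$, nor the implicit identity $\tr\bigl(\rho_t\,\frac{d}{dt}\ln\rho_t\bigr)=\tr\dot\rho_t=0$; the clean route is that for Gaussian states $S(\rho_t)$ is an explicit smooth function of the symplectic eigenvalues of $\Sigma+\frac{t}{2}I$, so differentiability and the formula $\frac{d}{dt}S(\rho_t)=-\tr(\dot\rho_t\ln\rho_t)$ can be verified directly. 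Relatedly, you should not invoke \eqref{Lhermitian} as stated, since it is formulated for pairs in $\dom(\cL)\subset\cT_2(\cH)$ and $\ln\rho_t$ is neither bounded nor Hilbert--Schmidt; the identity you actually need, $\sum_j\tr\bigl([R_j,[R_j,\rho_t]]\ln\rho_t\bigr)=\sum_j\tr\bigl(\rho_t[R_j,[R_j,\ln\rho_t]]\bigr)$, follows from cyclicity of the trace once one notes that $\rho_t$ multiplied by any polynomial in the $R_j$ is trace class ($\rho_t$ being a Schwartz operator), which is essentially the observation you make in your closing paragraph. With these repairs the proof is sound.
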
	
 
\subsection{Nash inequality, log-Sobolev inequality and contractivity properties}\label{sec-ineqs}

The classical Nash inequality was introduced by Nash \cite{Nash} to obtain regularity properties on the solutions of parabolic partial differential equations.
In the Euclidean space ${\mathbb{R}}^n$, it can be stated as follows \cite{BBG11}: there exists a constant $c_n >0$ (depending only on $n$), such that for any real-valued, smooth function $f$ vanishing at infinity, 
\begin{align}\label{eq-Nash}
\|f\|_{2}^{1+n/2} &\leq c_n \|f\|_{1}\|\nabla f\|_{2}^{n/2}.
\end{align}
The optimal constant $c_n$ was later evaluated by Carlen and Loss \cite{CL93}. A similar inequality holds for complex-valued functions, with a modified constant.
Nash inequality implies {\em{ultracontractivity}} of the heat semigroup associated with \Cref{c-Heat}, which means that it maps $L^1(\RR^n)$ to $L^\infty(\RR^n)$ with
\begin{align}
	\|\Lambda_t\|_{L^1\to L^\infty}\le a_n(t):=\left( \frac{1}{\pi et}\right)^{n/2},
\end{align}	
 where $\Lambda_t(f)=f\ast g_{2t}$. The above inequalities may be stated in the general framework of symmetric Markov semigroups,
where it is a simple and powerful tool to study regularity properties of the underlying semigroup. 

The inequality \ref{eq-Nash} can be expressed in terms of the Dirichlet form, $\cE_{cl}(f)$,  of the classical heat equation as follows:
\begin{align}\label{eq-cl}
\|f\|_{2}^{2 + 4/n} & \leq c_n^{4/n} \|f\|_{1}^{4/n} \cE_{cl}(f).
\end{align}

The log-Sobolev inequality was first introduced by Gross, with applications to quantum field theory. He stated it as follows \cite{Gross}: let $\mu$ denote the standard Gaussian measure on ${\mathbb{R}}^n$, i.e.,
$$ d\mu(x) :=g(x)dx:= \frac{1}{(2\pi)^{n/2}} e^{-|x|^2/2} dx, \quad x \in {\mathbb{R}}^n$$
where $dx$ denotes the Lebesgue measure on ${\mathbb{R}}^n$. Then the log-Sobolev inequality is given by
\begin{align}\label{eq-log-Sob}
\int_{{\mathbb{R}}^n}|f(x)|^2 \ln |f(x)| d\mu(x)-||f||_{2,\mu} \ln ||f||_{2,\mu} & \leq
\int_{{\mathbb{R}}^n}|\nabla f(x)|^2 d\mu(x), 
\end{align}
where we use the notation $||f||_{p,\mu} := \left(\int |f(x)|^p d\mu(x)\right)^{1/p}$. 

In \cite{Gross1975} Gross showed that \Cref{eq-log-Sob} is equivalent to {\em{hypercontractivity}} of the Ornstein-Uhlenbeck semigroup $(\Lambda_t^{(OU)*})_{t\geq 0}$, the generator of which is given by
\begin{align}
	\partial_t f(x,t)=\LL_{OU}^*(f)(x,t):= \Delta f(x,t)+x.\nabla f(x,t),\quad x\in\RR^n, \, t\ge 0,
\end{align}
Hypercontractivity denotes the following property:
\begin{align}
	\| \Lambda_t^{(OU)*}\|_{L^p(\mu) \to L^q(\mu)}\le 1\qquad \forall\,p,q \in {\mathbb{N}} \quad {\hbox{such that}} \quad t\ge \frac{1}{2}\ln\left( \frac{q-1}{p-1}\right).
  \end{align}
\section{Main Results}\label{sec-main-results}
In this paper, we prove that the quantum diffusion semigroup, whose generator 
is given by \Cref{HeatL}, satisfies a Nash inequality. We also derive a log-Sobolev inequality in a form analogous to the ones given e.g.~in \cite{Chafai2005,Toscani}\footnote{These inequalities are known to be equivalent to the log-Sobolev inequality for the classical Ornstein-Uhlenbeck semigroup in its usual form given in \cite{Gross}.}. In the following we denote $\rho_t= \Lambda_t(\rho)$, where $\Lambda_t$ is the quantum diffusion semigroup. 
\begin{theorem}[A non-commutative Nash inequality]\label{thm:nash}
 \textcolor{black}{If $\rho$ is a Schwartz state of positive Wigner function, then $\rho\in\dom(\LL)$ and the following non-commutative Nash inequality holds: there exists a positive constant $C_n$ (depending only on $n$) such that}
 	\begin{align}\label{ineq:nash}
 		\|\rho\|_2^{2+2/n}\le  C_{n}\eps(\rho),
 	\end{align}
where $\eps$ is the Dirichlet form associated to the quantum diffusion semigroup (see \Cref{Dirichlet}).
 \end{theorem}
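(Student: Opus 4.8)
The plan is to transport the inequality to the classical phase space $\RR^{2n}$ through the characteristic-function / Wigner correspondence, where the quantum diffusion semigroup becomes an ordinary heat flow, and then to invoke the classical Nash inequality \reff{eq-cl} (used in dimension $2n$) applied to the Wigner function $W_\rho$ of \reff{wign}. The positivity hypothesis enters at exactly one point: since $W_\rho\ge 0$ and $\int_{\RR^{2n}}W_\rho = (2\pi)^n$, one has $\|W_\rho\|_1 = (2\pi)^n$, so the $\|f\|_1$ factor appearing in the classical inequality is a harmless constant rather than an uncontrolled quantity.

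First I would dispose of the regularity and domain issues. Because $\rho$ is a Schwartz state, $\cF_\rho\in\cS(\RR^{2n})$ and hence $W_\rho\in\cS(\RR^{2n})$; in particular $\rho$ has finite second moments, so by the remark following the definition of $\cL$ (after \cite{Holevo1996}) we get $\rho\in\dom(\cL)$, and moreover $\cF_{\cL(\rho)}$ is again Schwartz, so $\cL(\rho)$ is a Schwartz operator and in particular lies in $\cT_2(\cH)$. This legitimises the Fourier manipulations and the applications of \Cref{NCparseval} below.

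The heart of the proof is the dictionary between the quantum functionals and the classical ones, which I would build from two applications of noncommutative Parseval together with a multiplier identity for $\cL$. Differentiating \reff{rhorhotcharac} at $t=0$ and using $\frac{d}{dt}\rho_t\big|_{t=0}=\cL(\rho)$ yields $\cF_{\cL(\rho)}(z) = -\tfrac14|z|^2\,\cF_\rho(z)$; that is, $\cL$ acts as multiplication by $-\tfrac14|z|^2$ in the characteristic-function picture, exactly as $\Delta$ acts as multiplication by $-|\xi|^2$ for the classical heat flow. Taking $T_1=T_2=\rho$ in \Cref{NCparseval} gives $\|\rho\|_2^2 = (2\pi)^{-n}\int|\cF_\rho(z)|^2\,dz$, while taking $T_1=\rho$, $T_2=\cL(\rho)$ gives
\[
\eps(\rho) = -\tr(\rho\,\cL(\rho)) = \tfrac14(2\pi)^{-n}\int_{\RR^{2n}}|z|^2\,|\cF_\rho(z)|^2\,dz
\]
(see \reff{Dirichlet}). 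Translating these through the Plancherel identity relating $\cF_\rho$ and $W_\rho$ (and the fact that $\partial_{u_k}W_\rho$ has transform proportional to $z_k\cF_\rho$) produces $\|W_\rho\|_2^2 = (2\pi)^n\|\rho\|_2^2$ and $\|\nabla W_\rho\|_2^2 = 4(2\pi)^n\,\eps(\rho)$.

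With this dictionary in hand I would apply the classical Nash inequality \reff{eq-cl}, in dimension $2n$, to the nonnegative Schwartz function $W_\rho$, substitute the three relations $\|W_\rho\|_2^2 = (2\pi)^n\|\rho\|_2^2$, $\|\nabla W_\rho\|_2^2 = 4(2\pi)^n\eps(\rho)$ and $\|W_\rho\|_1 = (2\pi)^n$, and collect the constants. Since $2+\tfrac{4}{2n} = 2+\tfrac{2}{n}$, this yields exactly $\|\rho\|_2^{2+2/n}\le C_n\,\eps(\rho)$, with $C_n$ expressible through the optimal classical constant $c_{2n}$. The main obstacle is not a single hard estimate but the careful analytic justification of the multiplier identity $\cF_{\cL(\rho)} = -\tfrac14|z|^2\cF_\rho$ — interchanging $\frac{d}{dt}$ with $\tr(\,\cdot\,V(z))$ and confirming $\cL(\rho)\in\cT_2(\cH)$ so that \Cref{NCparseval} applies — together with consistently tracking the $(2\pi)$ normalisation factors across the three identities so that the final constant is correct.
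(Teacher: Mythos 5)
Your proposal is correct and follows essentially the same route as the paper's own proof: transporting $\|\rho\|_2^2$ and $\eps(\rho)$ to the Wigner function via the noncommutative Parseval relation and the multiplier identity $\cF_{\cL(\rho)}(z)=-\tfrac14|z|^2\cF_\rho(z)$, using positivity only to fix $\|f_\rho\|_1=(2\pi)^n$, and invoking the classical Nash inequality in dimension $2n$; your constants agree with the paper's $8\pi c_{2n}^{2/n}$. The only cosmetic difference is that you obtain the multiplier identity by differentiating \reff{rhorhotcharac} at $t=0$, whereas the paper reads it off directly from the Weyl displacement relation \reff{commRV}, which avoids justifying the interchange of $\tfrac{d}{dt}$ with the trace.
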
	

\begin{theorem}[Non-commutative ultracontractivity] 
	\label{thm:ultra}	
	If $\rho$ is a Schwartz state of positive Wigner function, then there exists a positive, then there exists a positive constant $\kappa_n$, such that for any $t>0$
	\begin{align}
	&\|\rho_t\|_\infty\le\|\rho_t\|_{ 2}\le \kappa_n t^{-n/2}\label{UC1to2}.
	\end{align}
	
	Hence, for any such initial state $\rho$ evolving under the action of the semigroup $(\Lambda_t)_{t\ge 0}$, the following bounds give the rate of decay of its purity and the rate of increase of its von Neumann entropy, respectively:
	\begin{align}\label{purity}
	\tr \rho_t^2 \equiv \|\rho_t\|_2^2&\le \kappa_n^2 t^{-n};\\
	S(\rho_t) & \geq \frac{n}{2} \ln \left(\kappa_n^{-2/n}t\right).
	\end{align}
\end{theorem}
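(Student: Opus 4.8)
The plan is to run the classical Nash argument of \cite{Nash} in the quantum setting, using \Cref{thm:nash} as the key input. Set $u(t):=\tr\rho_t^2=\|\rho_t\|_2^2$; the goal is to show that $u$ obeys a closed, autonomous differential inequality that forces inverse-polynomial decay, and then to read off the operator-norm and entropy bounds. First I would verify that \Cref{thm:nash} is applicable to $\rho_t$ for every $t>0$, i.e.\ that the class of Schwartz states with positive Wigner function is preserved by the semigroup. By \reff{rhorhotcharac} we have $\cF_{\rho_t}(z)=\cF_\rho(z)\,\e^{-|z|^2t/4}$, which is Schwartz whenever $\cF_\rho$ is, so $\rho_t$ is a Schwartz state; and multiplication of $\cF_\rho$ by the Gaussian $\e^{-|z|^2t/4}$ corresponds, under the inverse Fourier transform \reff{wign}, to convolving the Wigner function of $\rho$ with a strictly positive Gaussian kernel, so positivity of the Wigner function is propagated for all $t\ge 0$. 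In particular $\rho_t\in\dom(\LL)$ and $\eps(\rho_t)$ is well defined.

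Next I would establish the identity
\begin{align}\label{du-plan}
\frac{d}{dt}\,u(t)=-2\,\eps(\rho_t).
\end{align}
The cleanest route is through \Cref{NCparseval}: combined with \reff{rhorhotcharac} it yields the absolutely convergent representations
\begin{align}\label{parseval-reps}
u(t)=(2\pi)^{-n}\!\int_{\cZ}|\cF_\rho(z)|^2\,\e^{-|z|^2 t/2}\,dz,\qquad \eps(\rho_t)=(2\pi)^{-n}\!\int_{\cZ}\tfrac{|z|^2}{4}\,|\cF_\rho(z)|^2\,\e^{-|z|^2 t/2}\,dz,
\end{align}
the second using that $\LL$ acts as multiplication by $-|z|^2/4$ on characteristic functions, as seen by differentiating \reff{rhorhotcharac} at $t=0$. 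Differentiating the first representation under the integral sign --- legitimate for $t>0$ by the Gaussian damping and the rapid decay of the Schwartz function $\cF_\rho$ --- gives \reff{du-plan}.

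Feeding \reff{du-plan} into the Nash inequality \reff{ineq:nash}, rewritten as $u(t)^{1+1/n}=\|\rho_t\|_2^{2+2/n}\le C_n\,\eps(\rho_t)=-\tfrac{C_n}{2}\,u'(t)$, I obtain $u'(t)\le -\tfrac{2}{C_n}\,u(t)^{1+1/n}$. Since $u(t)>0$, this is equivalent to $\frac{d}{dt}\,u(t)^{-1/n}\ge \tfrac{2}{nC_n}$; integrating from $0$ to $t$ and discarding the positive initial term $u(0)^{-1/n}$ gives $u(t)^{-1/n}\ge \tfrac{2}{nC_n}\,t$, that is
\begin{align}\label{l2decay}
\|\rho_t\|_2^2=u(t)\le \Big(\tfrac{nC_n}{2}\Big)^{n}t^{-n},
\end{align}
so $\|\rho_t\|_2\le \kappa_n\,t^{-n/2}$ with $\kappa_n:=(nC_n/2)^{n/2}$.

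It remains to deduce the three displayed bounds and to note where the work lies. The inequality $\|\rho_t\|_\infty\le\|\rho_t\|_2$ holds for any density operator: writing the eigenvalues of $\rho_t$ as $\lambda_i\ge 0$ with $\sum_i\lambda_i=1$, one has $\|\rho_t\|_\infty^2=\max_i\lambda_i^2\le\sum_i\lambda_i^2=\|\rho_t\|_2^2$; this yields \reff{UC1to2}, and squaring gives $\tr\rho_t^2\le\kappa_n^2 t^{-n}$. Since every eigenvalue satisfies $\lambda_i\le\|\rho_t\|_\infty$, monotonicity of the logarithm gives $-\lambda_i\ln\lambda_i\ge -\lambda_i\ln\|\rho_t\|_\infty$, whence $S(\rho_t)=-\sum_i\lambda_i\ln\lambda_i\ge -\ln\|\rho_t\|_\infty\ge -\ln\big(\kappa_n t^{-n/2}\big)=\tfrac{n}{2}\ln\big(\kappa_n^{-2/n}t\big)$. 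The self-improving differential inequality is mechanical once \reff{du-plan} is in place, so I expect the genuine effort to sit in the two analytic points above: propagating positivity of the Wigner function under the flow (needed because \Cref{thm:nash} lives only on that class) and rigorously justifying the differentiation in \reff{du-plan}; the Parseval representations \reff{parseval-reps} are what make both tractable, reducing operator-level manipulations to elementary estimates on a Gaussian-damped Schwartz function.
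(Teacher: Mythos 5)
Your proposal is correct and takes essentially the same route as the paper: both apply \Cref{thm:nash} to $\rho_t$, derive $\dot u(t)=-2\eps(\rho_t)$ via \Cref{NCparseval} and \reff{rhorhotcharac}, integrate the resulting differential inequality $\frac{d}{dt}u(t)^{-1/n}\ge \frac{2}{nC_n}$ to get $\|\rho_t\|_2\le \kappa_n t^{-n/2}$ with $\kappa_n=(nC_n/2)^{n/2}$, and then deduce the $\infty$-norm and entropy bounds from $\|\rho_t\|_\infty\le\|\rho_t\|_2$ and $S(\rho_t)\ge-\ln\|\rho_t\|_\infty$. The only difference is that you explicitly fill in two points the paper delegates elsewhere (preservation of the Schwartz/positive-Wigner class, which the paper attributes to \Cref{theo_KS}, and the differentiation of $u$ under the integral sign, which the paper asserts in one line), and your arguments for both are sound.
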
	

The proofs of \Cref{thm:nash} and \Cref{thm:ultra} are given in \Cref{Nashproof}. 
\begin{theorem}[A non-commutative log-Sobolev inequality]\label{logsob}
For any invertible, centered, $n$-mode Gaussian state $\rho$,
\begin{align}\label{NC-lS}
\tr(\rho\ln \rho)+n\le \frac{ J(\rho)}{2\e},
\end{align}
where $J(\rho)$ is the entropy variation rate defined through \Cref{entropy-var-rate}.
\end{theorem}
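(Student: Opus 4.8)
The plan is to exploit the explicit representation of a centered invertible Gaussian state from \Cref{theoremgaussstate}, namely $\rho = C\,\e^{-R^T\Gamma R}$ with $\ln\rho = \ln C - R^T\Gamma R$, in order to reduce the entropy variation rate $J(\rho)$ of \Cref{entropy-var-rate} to a single matrix trace, then collapse the whole inequality to a one-mode scalar estimate whose sharp constant turns out to be exactly $2\e$.

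First I would compute $J(\rho)$ exactly. Since $e^{i\theta R_j} R_k\, e^{-i\theta R_j} = R_k - \theta\,\Omega_{jk}$, conjugation by $e^{i\theta R_j}$ in \Cref{shift-states} implements a c-number shift $R \mapsto R + \theta d_j$ with $(d_j)_k = -\Omega_{jk}$, so that $\ln\rho^{(\theta)}_{R_j} = \ln C - (R+\theta d_j)^T\Gamma(R+\theta d_j)$. Subtracting this from $\ln\rho$ and taking the trace against $\rho$, the term linear in $\theta$ carries the factor $\tr(\rho R)$, which vanishes because $\rho$ is centered; hence $D(\rho\|\rho^{(\theta)}_{R_j}) = \theta^2\, d_j^T\Gamma d_j$ is exactly quadratic and $\left.\frac{d^2}{d\theta^2}D(\rho\|\rho^{(\theta)}_{R_j})\right|_{0} = 2\,d_j^T\Gamma d_j$. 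Summing over $j$ and using $\sum_j \Omega_{jk}\Omega_{jl} = (\Omega^T\Omega)_{kl} = \delta_{kl}$ yields the clean identity $J(\rho) = 2\tr\Gamma$ (a formula one can cross-check against the quantum de Bruijn identity of \Cref{quantumdebruijin}).

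Next I would reduce to a thermal state. By Williamson's theorem $\rho = U_S\,\rho_{\mathrm{th}}\,U_S^*$ for a Gaussian unitary $U_S$ and a thermal state $\rho_{\mathrm{th}}$ with symplectic eigenvalues $\nu_1,\dots,\nu_n$; correspondingly $\Gamma$ is congruent to the thermal matrix $\Gamma_{\mathrm{th}} = \bigoplus_k \operatorname{arccoth}(2\nu_k)\,I_2$, so that $\tr\Gamma = \tr(\Gamma_{\mathrm{th}} M)$ with $M=(S^TS)^{-1}$ a symmetric, symplectic, positive-definite matrix. The key observation is that $\Gamma_{\mathrm{th}}$ commutes with $\Omega$, while a symmetric symplectic $M$ satisfies $\Omega M\Omega^{-1}=M^{-1}$; therefore $\tr(\Gamma_{\mathrm{th}}M)=\tr(\Gamma_{\mathrm{th}}M^{-1})=\tfrac12\tr\bigl(\Gamma_{\mathrm{th}}(M+M^{-1})\bigr)\ge \tr\Gamma_{\mathrm{th}}$, the final step using $M+M^{-1}\succeq 2I$ and $\Gamma_{\mathrm{th}}\succeq 0$. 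Thus $J(\rho)\ge J(\rho_{\mathrm{th}})$ whereas $S(\rho)=S(\rho_{\mathrm{th}})$, and since the left-hand side $\tr(\rho\ln\rho)+n=n-S(\rho)$ is unchanged, it suffices to prove the inequality for $\rho_{\mathrm{th}}$. I expect this to be the main obstacle: because $J$ is \emph{not} invariant under squeezing, one cannot simply invoke symplectic invariance and must genuinely show that thermal states minimize $J$ at fixed spectrum.

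For a thermal state both sides are additive over modes, with $S = \sum_k g(\nu_k)$ where $g(\nu)=(\nu+\tfrac12)\ln(\nu+\tfrac12)-(\nu-\tfrac12)\ln(\nu-\tfrac12)$, and $J = \sum_k 2\ln\frac{\nu_k+1/2}{\nu_k-1/2}$, so the claim collapses to the one-variable inequality $1-g(\nu)\le \frac1{\e}\ln\frac{\nu+1/2}{\nu-1/2}$ for $\nu\ge\tfrac12$. Here I would use the integral representations $g(\nu)-1=\int_{\nu-1/2}^{\nu+1/2}\ln x\,dx$ and $\ln\frac{\nu+1/2}{\nu-1/2}=\int_{\nu-1/2}^{\nu+1/2}\frac{dx}{x}$, which recast the inequality as $\int_{\nu-1/2}^{\nu+1/2}\bigl(\ln x+\tfrac{1}{\e x}\bigr)\,dx\ge 0$. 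This now holds pointwise, since $\ln x+\frac{1}{\e x}\ge 0$ for every $x>0$ with equality only at $x=1/\e$ — which is precisely the feature that pins down the constant $2\e$ in the statement.
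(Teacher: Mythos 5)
Your proposal is correct, but it takes a genuinely different route from the paper's. The paper argues dynamically: it first proves the isoperimetric inequality $J(\rho)E(\rho)\ge 2\e n$ (\Cref{iso}) by combining the quantum de Bruijn identity (\Cref{quantumdebruijin}) with the concavity of the entropy power (\Cref{concavqep}) --- which itself rests on the quantum Blachman--Stam inequality (\Cref{blachmanstam}), proved via Uhlmann monotonicity of Araki's relative entropy --- together with the K\"onig--Smith asymptotics $E(\rho_t)=\e t/2+\mathcal{O}(1)$, and then deduces \Cref{NC-lS} from $\e^{-x}\ge 1-x$. You instead compute everything in closed form. Your identity $J(\rho)=2\tr\Gamma$ is correct: the shift $\e^{i\theta R_j}R_k\,\e^{-i\theta R_j}=R_k-\theta\,\Omega_{jk}$, centeredness, and $\Omega^T\Omega=I$ give it directly, and it agrees with the paper's \Cref{difftwice}, since $\tr\bigl(\rho[R_j,[R_j,\ln\rho]]\bigr)=2\sum_{k,l}\Omega_{jk}\Gamma_{kl}\Omega_{jl}$; you should cite that lemma (or reproduce its domain arguments from \Cref{app:lem4}) to legitimize the formal trace manipulations with the unbounded operator $\ln\rho$, which is the only rigor gap in your step one. (A useful cross-check, as you note: for one thermal mode $\nu_t=\nu+t/2$, so $\frac{d}{dt}g(\nu_t)=\frac12\ln\frac{\nu+1/2}{\nu-1/2}=\frac14 J(\rho_t)$, consistent with de Bruijn.) Your Williamson reduction correctly isolates and resolves the one genuinely nontrivial point --- $J$ is \emph{not} symplectically invariant --- via the clean observation that a symmetric symplectic $M=(S^TS)^{-1}$ satisfies $\Omega M\Omega^{-1}=M^{-1}$, whence $\tr(\Gamma_{\mathrm{th}}M)=\tfrac12\tr\bigl(\Gamma_{\mathrm{th}}(M+M^{-1})\bigr)\ge\tr\Gamma_{\mathrm{th}}$ by $M+M^{-1}\succeq 2I$; and your per-mode scalar reduction and the pointwise bound $\ln x+\frac{1}{\e x}\ge 0$ (minimized at $x=1/\e$, which indeed pins the constant) are all correct. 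As for what each approach buys: the paper's route yields the strictly stronger \Cref{iso} along the way, plus intermediate results of independent interest, and explains the constant $2\e$ dynamically as the asymptotic slope of $E(\rho_t)$; your route is static and elementary modulo standard Gaussian technology (Williamson's theorem and metaplectic conjugation, which the paper does not develop), gives the exact value of $J$ on Gaussian states, and shows the inequality is strict there --- but it does not recover \Cref{iso} itself, since for $n>1$ the mode-wise bounds do not obviously recombine against the geometric-mean entropy power $E(\rho)=\e^{S(\rho)/n}$, nor does it extend beyond the Gaussian class, whereas the paper's Blachman--Stam machinery is formulated for general states.
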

\smallskip

\noindent
{\em{Remark:}} To see why \Cref{NC-lS} can be viewed as a non-commutative log-Sobolev inequality, let us first rewrite it as follows:
\begin{align}\label{NC-lS2}
J(\rho) \geq e(2n - 2S(\rho)). 
\end{align}
In the  classical case, Toscani proved the following inequality for any smooth function $f$ on ${\mathbb{R}}^n$ which vanishes at infinity (cf.~eq.(24) of \cite{Toscani}): for any $\sigma>0$:
\begin{align}\label{Tos1}
\sigma J_{cl}(f) /2&\geq n - H(f) + n/2 \ln (2\pi\sigma),
\end{align}
where 
\begin{align}
\label{cl-FI}
J_{cl}(f) &:= \int_{{\mathbb{R}}^n} \frac{|\nabla f(x)|^2}{f(x)} dx
\end{align}
is the classical Fisher information, and $H(f) := - \int_{{\mathbb{R}}^n} f(x) \ln f(x) dx$ is the Shannon differential entropy.
Let us choose $\sigma=1$ and $f=gh^2$, where $g$ denotes the probability density function of a standard normal distribution on ${\mathbb{R}}^n$, and $h$ is a function
for which $\int_{{\mathbb{R}}^n}|h(x)|^2 g(x) dx =1$. Then on evaluating the expressions for $J_{cl}(f)$ and $H(f)$ and substituting them in \Cref{Tos1}, the latter reduces to the standard log-Sobolev inequality of Gross, given by \Cref{eq-log-Sob}. For the choice $\sigma=1/(2\pi e)$, \Cref{Tos1} reduces to 
\begin{align}
	J_{cl}(f)/\pi \ge \e (2 n-2H(f)),
\end{align}	
which is completely analogous to \Cref{NC-lS2}. This analogy between our inequality, as given by \Cref{NC-lS2}, and \Cref{Tos1} is what leads us to refer to \Cref{NC-lS2} as a non-commutative log-Sobolev inequality.
\medskip

\noindent
\Cref{logsob} follows directly from the following quantum analogue of the classical isoperimetric inequality for entropies:
\begin{theorem}[Isoperimetric inequality for the quantum entropy\label{iso}\footnote{The nomenclature comes from the fact that \Cref{isope} is analogous to the classical isoperimetric inequality $J_{cl} (f)E(f)\ge 2\pi en$, where $E(f) = e^{2H(f)/n}$.}]
	For any centered, $n$-mode Gaussian state $\rho$,
		\begin{align}\label{isope}
			J(\rho) E(\rho)\ge 2\e n,
		\end{align}
where $ E(\rho)$ is the entropy power of the state $\rho$, defined through \Cref{EP}.
	\end{theorem}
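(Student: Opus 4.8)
The plan is to evaluate $J(\rho)$ and $E(\rho)$ explicitly for a centered Gaussian state and reduce \Cref{isope}, via a symplectic trace bound and the arithmetic--geometric mean inequality, to a single-mode calculus estimate. I may assume $\rho$ is invertible, i.e. all symplectic eigenvalues $\nu_1,\dots,\nu_n$ of its covariance matrix $\Sigma$ exceed $\tfrac12$; the pure case has $S(\rho)=0$, $E(\rho)=1$, $J(\rho)=+\infty$, so the claim is trivial (or follows by continuity). First I would compute $J(\rho)$: from \Cref{entropy-var-rate,shift-states}, using $\ln\rho^{(\theta)}_{R_j}=\e^{i\theta R_j}(\ln\rho)\e^{-i\theta R_j}$ and differentiating (legitimate in $L^2(\rho)$ by \Cref{lemma5.4.2} and finiteness of second moments),
\begin{align*}
J(\rho)=\sum_{j=1}^{2n}\tr\big(\rho\,[R_j,[R_j,\ln\rho]]\big).
\end{align*}
By \Cref{theoremgaussstate} (see \Cref{exprrho}), $\ln\rho=\ln C-R^{T}\Gamma R$ with $\Gamma$ symmetric. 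The CCR give $[R_k,R^{T}\Gamma R]=2i(\Omega\Gamma R)_k$, so the inner commutator is linear in $R$ and the outer one is a scalar; summing over $j$ and using $\Omega^{T}\Omega=I$ collapses the double commutator to $\sum_j[R_j,[R_j,R^{T}\Gamma R]]=-2\tr(\Gamma)I$, whence
\begin{align*}
J(\rho)=2\,\tr(\Gamma).
\end{align*}

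Next I would pass to symplectic data. Let $\gamma_1,\dots,\gamma_n$ be the symplectic eigenvalues of $\Gamma$ (so $\Gamma\Omega$ has eigenvalues $\pm i\gamma_k$). Relation \Cref{gamma}, $2\Omega^{-1}\Sigma=\cot(\Gamma\Omega)$, together with $\Omega^{-1}=-\Omega$ and $\cot(i\gamma)=-i\coth\gamma$, matches the spectra of the two sides and gives $\coth\gamma_k=2\nu_k$, i.e. $\gamma_k=\tfrac12\ln\frac{\nu_k+1/2}{\nu_k-1/2}$. The structural point is that $J(\rho)=2\tr\Gamma$ is \emph{not} a symplectic invariant -- squeezing increases $\tr\Gamma$ -- so it is not fixed by the $\nu_k$; the entropy, however, is: $S(\rho)=\sum_k\eta(\nu_k)$ with $\eta(\nu)=(\nu+\tfrac12)\ln(\nu+\tfrac12)-(\nu-\tfrac12)\ln(\nu-\tfrac12)$. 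Invoking the standard matrix inequality $\tr(\Gamma)\ge 2\sum_k\gamma_k$ (equality at the Williamson/thermal normal form) yields $J(\rho)\ge 4\sum_k\gamma_k$, which is precisely the value of $J$ for the thermal state with the same symplectic spectrum. This reduces \Cref{isope} to products of thermal states while leaving $E(\rho)$ unchanged.

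Writing $\beta_k:=2\gamma_k=\ln\frac{\nu_k+1/2}{\nu_k-1/2}$, it then suffices to show $2\big(\sum_k\beta_k\big)\exp\big(\tfrac1n\sum_k\eta(\nu_k)\big)\ge 2\e n$. By arithmetic--geometric mean, $\sum_k\beta_k\ge n(\prod_k\beta_k)^{1/n}$, so the claim follows from the single-mode inequality $\beta(\nu)\,\e^{\eta(\nu)}\ge\e$ for $\nu>\tfrac12$, taken under the geometric mean over the $n$ modes. I expect this single-mode estimate to be the main obstacle. I would prove it by setting $h(\nu):=\ln\beta(\nu)+\eta(\nu)-1$ and using the thermodynamic identity $\eta'(\nu)=\beta(\nu)$ together with $\beta'(\nu)=-1/(\nu^2-\tfrac14)$: then the sign of $h'=\beta+\beta'/\beta$ is controlled by whether $\beta(\nu)^2(\nu^2-\tfrac14)\lessgtr1$, and the substitution $s=\sqrt{(\nu+1/2)/(\nu-1/2)}>1$ turns the latter into $s-s^{-1}-2\ln s>0$, which holds because the left-hand side vanishes at $s=1$ with derivative $(1-s^{-1})^2\ge0$. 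Hence $h'<0$ on $(\tfrac12,\infty)$, and since $\eta(\nu)=\ln\nu+1+o(1)$ and $\beta(\nu)\sim\nu^{-1}$ give $h(\nu)\to0$ as $\nu\to\infty$, we conclude $h\ge0$, i.e. $\beta\,\e^{\eta}\ge\e$. As a consistency check, \Cref{quantumdebruijin} makes \Cref{isope} equivalent to $\tfrac{d}{dt}E(\rho_t)\ge\tfrac{\e}{2}$ along the diffusion, matching the asymptotic equality $J(\rho_t)E(\rho_t)\to2\e n$ in the high-temperature limit $t\to\infty$.
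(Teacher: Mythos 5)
Your proposal is correct, and it takes a genuinely different route from the paper. The paper's proof is dynamical: it combines the quantum de Bruijn identity (\Cref{quantumdebruijin}) with the concavity of the entropy power (\Cref{concavqep}, itself resting on the quantum Blachman--Stam inequality, \Cref{blachmanstam}) to get $\left.\frac{d}{ds}E(\rho_s)\right|_{s=0}\ge \frac{E(\rho_t)-E(\rho)}{t}$, and then invokes the K\"onig--Smith asymptotics $E(\rho_t)=\e t/2+\mathcal{O}(1)$ and lets $t\to\infty$ to conclude $\frac{1}{4n}J(\rho)E(\rho)\ge \e/2$. You instead compute everything in closed form: $J(\rho)=2\tr\Gamma$ (your double-commutator calculation is right, and is licensed by \Cref{difftwice} together with \Cref{exprrho}; note $\Omega^T\Omega=I$ indeed holds for $\Omega$ of \Cref{Omega}), the spectral matching $\coth\gamma_k=2\nu_k$ from \Cref{gamma}, the reduction to Williamson normal form via $\tr\Gamma\ge 2\sum_k\gamma_k$, AM--GM across modes, and the single-mode inequality $\beta(\nu)\,\e^{\eta(\nu)}\ge\e$, whose proof I verified: $\eta'=\beta$, $\beta'=-1/(\nu^2-\tfrac14)$, the substitution $s$ correctly converts the sign condition into $s-s^{-1}-2\ln s>0$, and $h(\nu)\to0$ as $\nu\to\infty$ since $\eta(\nu)=\nu\beta(\nu)+\tfrac12\ln(\nu^2-\tfrac14)=\ln\nu+1+o(1)$ and $\beta(\nu)\sim\nu^{-1}$. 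What your static approach buys: it is self-contained (no Blachman--Stam, no concavity, no K\"onig--Smith input), it identifies the extremal structure (thermal states, with tightness only in the high-temperature limit $\nu\to\infty$, consistent with the paper's asymptotic for $E(\rho_t)$), and it exposes the non-invariance of $J$ under squeezing. What the paper's route buys: the intermediate results (\Cref{blachmanstam}, \Cref{concavqep}) are of independent interest and constitute the quantum analogue of the Costa/Dembo/Toscani chain, whereas your argument is intrinsically Gaussian and cannot extend beyond closed-form states.

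Two small repairs. First, the ``standard'' bound $\tr\Gamma\ge 2\sum_k\gamma_k$ deserves a one-line justification: the eigenvalues of $\Gamma\Omega$ are $\pm i\gamma_k$ while its singular values coincide with those of $\Gamma$ (as $\Omega$ is orthogonal), so Weyl's majorant inequality $\sum_j|\lambda_j|\le\sum_j\sigma_j$ gives exactly $2\sum_k\gamma_k\le\tr\Gamma$, with equality in Williamson form. Second, your disposal of the non-invertible case should cover partially degenerate states (some $\nu_k=\tfrac12$, others larger), not only pure ones, and the parenthetical appeal to continuity is unnecessary: if $\rho$ is non-invertible, at least one canonical direction $R_j$ has a nonzero component along the phase space of a degenerate mode, and since no pure Gaussian vector is an eigenvector of a nontrivial Weyl operator, $\operatorname{supp}\rho\not\subset\operatorname{supp}\rho^{(\theta)}_{R_j}$ for $\theta\neq 0$, whence $D(\rho||\rho^{(\theta)}_{R_j})=+\infty$ and $J(\rho)=+\infty$, making \Cref{isope} vacuous. (The invertibility restriction is also implicit in the paper's own proof, which is stated ``for any invertible, centered Gaussian state'' despite the theorem's more general wording, so you are no worse off than the paper here.)
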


In order to prove \Cref{iso}, we first prove the following intermediate result, which might be of independent interest:
\begin{theorem}[Quantum Blachman-Stam inequality]\label{blachmanstam}
	For any $\alpha, \beta >0$, $t >0$, for any state $\rho$ such that $\theta\mapsto D(\rho||\rho_{R_j}^{(\theta)})$ is twice differentiable at $0$ for all $j=1,...,2n$, we have
	\begin{align}\label{qfisherinfoin}
		(\alpha+\beta)^2J(\rho_t)\le \alpha^2 J(\rho) + \frac{4n \beta^2}{t} ,
	\end{align}
where $\rho_t = \Lambda_t(\rho)$, with $\left(\Lambda_t\right)_{t \geq 0}$ being the quantum diffusion semigroup.
\end{theorem}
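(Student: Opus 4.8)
The plan is to recognise \Cref{qfisherinfoin} as the non-commutative Blachman--Stam (Fisher-information) inequality for the convolution $\rho_t=\rho\ast g_{t/2}$, and to prove it through the data-processing inequality for relative entropy. First I would make the hypothesis usable by writing the divergence-based Fisher information explicitly. Since $\rho^{(\theta)}_{R_j}=e^{i\theta R_j}\rho\,e^{-i\theta R_j}$ gives $\ln\rho^{(\theta)}_{R_j}=e^{i\theta R_j}(\ln\rho)\,e^{-i\theta R_j}$, differentiating $D(\rho\,\|\,\rho^{(\theta)}_{R_j})=\tr(\rho\ln\rho)-\tr\!\big(\rho\,e^{i\theta R_j}(\ln\rho)e^{-i\theta R_j}\big)$ twice at $\theta=0$ gives
\begin{align}
\cJ(\{\rho^{(\theta)}_{R_j}\})=\tr\!\big(\rho\,[R_j,[R_j,\ln\rho]]\big),
\end{align}
so that $J(\rho)=\sum_{j=1}^{2n}\tr(\rho[R_j,[R_j,\ln\rho]])\ge 0$. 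Next I would reduce the statement to its core: the inequality is homogeneous of degree two in $(\alpha,\beta)$, so dividing by $(\alpha+\beta)^2$ and setting $a=\alpha/(\alpha+\beta),\ b=\beta/(\alpha+\beta)$ reduces it to
\begin{align}
J(\rho_t)\le a^2 J(\rho)+b^2\,\tfrac{4n}{t},\qquad a+b=1,
\end{align}
and $\tfrac{4n}{t}=2n/(t/2)$ is exactly the (classical) Fisher information $J_{cl}(g_{t/2})$ of the isotropic Gaussian density $g_{t/2}$ of covariance $\tfrac{t}{2}I_{2n}$. Thus the target is the Stam inequality $J(\rho\ast g_{t/2})\le a^2J(\rho)+b^2J_{cl}(g_{t/2})$.

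To produce the weights $a,b$ I would dilate the quantum convolution. Writing $\sigma(z):=V(z)\rho V(-z)$, set $\Theta:=\int_{\cZ}g_{t/2}(z)\,\ket{z}\!\bra{z}\otimes\sigma(z)\,dz$, a classical--quantum state whose partial trace over the classical register, $\Phi:=\tr_{\mathrm{cl}}$, satisfies $\Phi(\Theta)=\rho_t$ by \Cref{q-conv}. Let $u_j:=\Omega e_j$, so that $e^{i\theta R_j}=V(\theta u_j)$ and, by the displacement covariance \Cref{commRV}, shifting the output by $e^{i\theta R_j}\,\cdot\,e^{-i\theta R_j}$ is the same as translating the Gaussian weight by $\theta u_j$. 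I would then split this total shift at ratio $a:b$ by defining the joint family
\begin{align}
\Theta^{(\theta)}_j:=\int_{\cZ}g_{t/2}(z-b\theta u_j)\,\ket{z}\!\bra{z}\otimes e^{ia\theta R_j}\sigma(z)\,e^{-ia\theta R_j}\,dz,
\end{align}
which applies rate $a$ on the quantum register and rate $b$ on the classical Gaussian. A change of variables using the Weyl relation \Cref{CCRrelation} shows that, precisely when $a+b=1$,
\begin{align}
\Phi\big(\Theta^{(\theta)}_j\big)=e^{i\theta R_j}\rho_t\,e^{-i\theta R_j}=(\rho_t)^{(\theta)}_{R_j}.
\end{align}

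The conclusion then follows from the data-processing inequality $D(\Phi(A)\,\|\,\Phi(B))\le D(A\,\|\,B)$ applied to $A=\Theta,\ B=\Theta^{(\theta)}_j$: the functions $\theta\mapsto D(\Theta\,\|\,\Theta^{(\theta)}_j)$ and $\theta\mapsto D(\rho_t\,\|\,(\rho_t)^{(\theta)}_{R_j})$ are both nonnegative, vanish to first order at $\theta=0$, and the former dominates the latter, so comparing their second derivatives gives $\cJ(\{(\rho_t)^{(\theta)}_{R_j}\})\le\cJ(\{\Theta^{(\theta)}_j\})$. Because $\Theta$ is classical--quantum, the right-hand Fisher information splits cleanly: the classical weight contributes $b^2\,\cJ_{cl,j}(g_{t/2})$, while the quantum conjugation contributes $a^2\int g_{t/2}(z)\,\tr(\sigma(z)[R_j,[R_j,\ln\sigma(z)]])\,dz=a^2\,\cJ(\{\rho^{(\theta)}_{R_j}\})$, the last equality being the displacement-invariance of $\cJ$ (which follows again from \Cref{commRV}, since $\ln\sigma(z)=V(z)(\ln\rho)V(-z)$). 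Summing over $j$ and using $\sum_j\cJ_{cl,j}(g_{t/2})=J_{cl}(g_{t/2})=\tfrac{4n}{t}$ yields $J(\rho_t)\le a^2J(\rho)+b^2\tfrac{4n}{t}$; reinstating $\alpha,\beta$ gives \Cref{qfisherinfoin}.

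The main obstacle is analytic rather than conceptual: in the infinite-dimensional, unbounded setting one must ensure that all the relative entropies and Fisher informations are finite and the relevant maps twice differentiable (this is exactly the role of the hypothesis on $\theta\mapsto D(\rho\,\|\,\rho^{(\theta)}_{R_j})$), so that the comparison of the two \emph{functions} of $\theta$ can be upgraded to a comparison of their second derivatives at $0$. Care is also needed to make the classical--quantum dilation $\Theta$ and its logarithm rigorous, to justify differentiating under the $z$-integral and interchanging it with the commutators $[R_j,\cdot]$, and to check that the Gaussian smoothing indeed makes $\rho_t$ regular enough (it is a Schwartz state for $t>0$) for $J(\rho_t)$ to be well defined.
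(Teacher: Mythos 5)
Your proposal is correct and follows essentially the same route as the paper: the key step of splitting the total shift $\theta(\alpha+\beta)$ between a Weyl conjugation of the state and a translation of the Gaussian density (the paper's identity \reff{astast}), a data-processing/monotonicity argument through the very dilation $A\mapsto\bigl(z\mapsto V(-z)AV(z)\bigr)$ that the paper uses in \Cref{proofineqpos}, a comparison of second derivatives at $\theta=0$ of the two relative-entropy functions, and the computation $J_{cl}(g_{t/2})=4n/t$. The only differences are presentational: you work in the Schr\"odinger picture with a classical--quantum dilation and the normalization $a+b=1$, whereas the paper keeps $\alpha,\beta$ and makes the monotonicity rigorous via Uhlmann's theorem for Araki's relative entropy on von Neumann algebras --- precisely the analytic care your last paragraph correctly identifies as the main remaining work.
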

The above theorem leads to a result about the concavity of the entropy power with respect to time ($t$), which is given as follows.
\begin{theorem}[Concavity of the quantum entropy power]\label{concavqep}
	For any centered, invertible $n$-mode Gaussian state $\rho$, the entropy power $E(\rho_t)$ defined through \Cref{EP} is twice differentiable as a function of time ($t$), and for all $t\ge 0$:
	\begin{align}
		\frac{d^2}{dt^2}E(\rho_t)\le 0.
	\end{align}	
\end{theorem}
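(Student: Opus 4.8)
The plan is to reduce the second-derivative inequality to the quantum Blachman--Stam inequality (Theorem \ref{blachmanstam}) combined with the quantum de Bruijn identity (Theorem \ref{quantumdebruijin}), exactly mirroring the classical proof of concavity of entropy power due to Costa. First I would record, from \Cref{qbruijn}, that for a centered Gaussian state $\frac{d}{dt}S(\rho_t)=\tfrac14 J(\rho_t)$, and hence, writing $E(\rho_t)=\e^{S(\rho_t)/n}$,
\begin{align}
\frac{d}{dt}E(\rho_t)=\frac{1}{4n}\,E(\rho_t)\,J(\rho_t).
\end{align}
Differentiating once more gives
\begin{align}
\frac{d^2}{dt^2}E(\rho_t)=\frac{1}{4n}\,E(\rho_t)\left(\frac{1}{4n}J(\rho_t)^2+\frac{d}{dt}J(\rho_t)\right),
\end{align}
so, since $E(\rho_t)>0$, the claim $\frac{d^2}{dt^2}E(\rho_t)\le 0$ is equivalent to the differential inequality
\begin{align}\label{eq-concav-equiv}
\frac{d}{dt}J(\rho_t)\le -\frac{1}{4n}J(\rho_t)^2.
\end{align}

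The core step is to extract precisely \Cref{eq-concav-equiv} from Theorem \ref{blachmanstam}. I would apply \Cref{qfisherinfoin} with the initial state taken to be $\rho_s$ for a fixed $s\ge 0$ and with the extra evolution time $t$ replaced by an infinitesimal increment $h$, so that the evolved state is $\rho_{s+h}$; this reads $(\alpha+\beta)^2 J(\rho_{s+h})\le \alpha^2 J(\rho_s)+\tfrac{4n\beta^2}{h}$. Rearranging as $J(\rho_{s+h})\le \big(\tfrac{\alpha}{\alpha+\beta}\big)^2 J(\rho_s)+\big(\tfrac{\beta}{\alpha+\beta}\big)^2\tfrac{4n}{h}$ and then optimizing over the ratio $\alpha:\beta$ (a one-parameter convex minimization whose optimum is the standard harmonic-type combination) yields the sharp bound
\begin{align}
\frac{1}{J(\rho_{s+h})}\ge \frac{1}{J(\rho_s)}+\frac{h}{4n}.
\end{align}
Writing $\Phi(t):=1/J(\rho_t)$, this says $\Phi(s+h)-\Phi(s)\ge h/(4n)$ for all $h>0$, i.e. $\frac{d}{dt}\Phi(t)\ge \tfrac{1}{4n}$, which is exactly \Cref{eq-concav-equiv} after multiplying through by $-J(\rho_t)^2$. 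I would need to note that invertibility and the centered-Gaussian property are preserved under the semigroup (by item 2 of \Cref{theo_KS}, $\rho_t$ is again Gaussian, and one checks its covariance stays nondegenerate), so that both de Bruijn and Blachman--Stam apply at every time $t\ge 0$, and that $J(\rho_t)>0$ so the reciprocal is well defined.

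The main obstacle I anticipate is the differentiability and regularity bookkeeping rather than the algebra: I must justify that $t\mapsto S(\rho_t)$ is twice differentiable (so that $E(\rho_t)$ is), that $t\mapsto J(\rho_t)$ is differentiable, and that the optimization over $\alpha,\beta$ in the finite-difference Blachman--Stam bound passes cleanly to the derivative as $h\to 0$. For a centered invertible Gaussian state these quantities are all explicit functions of the covariance matrix $\Sigma_t=\Sigma+\tfrac{t}{2}I$ (using \Cref{rhorhotcharac}), so $S(\rho_t)$ and $J(\rho_t)$ are smooth in $t$ on $[0,\infty)$; I would either invoke these explicit formulas directly or use the de Bruijn identity together with the monotone finite-difference inequality above to conclude that $\Phi$ is differentiable with $\Phi'\ge 1/(4n)$. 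Once differentiability is secured, \Cref{eq-concav-equiv} and hence the concavity follow immediately.
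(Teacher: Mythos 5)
Your proposal is correct and follows essentially the same route as the paper: reduce, via the de Bruijn identity, to the differential inequality $\frac{d}{dt}J(\rho_t)\le -\frac{1}{4n}J(\rho_t)^2$, extract the finite-difference bound $1/J(\rho_{t+h})\ge 1/J(\rho_t)+h/(4n)$ from the Blachman--Stam inequality (the paper's substitution $\alpha=1/J(\rho_t)$, $\beta=\varepsilon/(4n)$ is exactly the optimizer over the ratio $\alpha:\beta$ that you describe), and establish differentiability of $t\mapsto J(\rho_t)$ from the explicit Gaussian structure before passing to the limit $h\to 0^+$. The only cosmetic difference is that the paper verifies differentiability by expressing $J(\rho_t)$ through the matrix $\Gamma_t$ of \Cref{theoremgaussstate} and differentiating under the integral with a domination argument, rather than through smoothness in the covariance $\Sigma_t=\Sigma+\frac{t}{2}I$ as you suggest.
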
	
The last two theorems are proved in \Cref{concavity}, whereas the proof of \Cref{logsob} is given in \Cref{prooflogsob}.

	\section{Proofs of \Cref{thm:nash} and \Cref{thm:ultra} }\label{Nashproof}
	\begin{proof}[\Cref{thm:nash}] We prove that \Cref{ineq:nash} is satisfied for any Schwartz state $\rho$ with positive Wigner function. For such states $\cL(\rho)$ is well-defined and trace class (see e.g. Propositions 3.14 and 3.15 of \cite{KKW16}), and using the non-commutative Parseval relation, \Cref{NCparseval}, we can write the Dirichlet form (defined in \Cref{Dirichlet})
		as follows:
		\begin{align}
			\mathcal{E}(\rho):=-	\tr(\rho \cL(\rho))=-(2\pi)^{-n}\int_{\cZ} \overline{\cF_{\rho}(z)} \cF_{\cL(\rho)}(z)dz\label{first}
		\end{align}
		Now, using \Cref{commRV}, we find that $\cF_{\cL(\rho)}(z)=-\frac{1}{4}|z|^2\cF_{\rho}(z)$.
		Substituting this into the right hand side of \Cref{first}, we get,
		\begin{align}
			\mathcal{E}(\rho)=\frac{1}{4}(2\pi)^{-n}\int_{\cZ}|z|^2 |\cF_{\rho}(z)|^2dz
		\end{align}
		Now $z\mapsto \cF_{\rho}(z)$ is a Schwartz function, and therefore its inverse Fourier transform is also a Schwartz function. Let us denote by $\cFF_f$ the Fourier transform of an integrable function $f$ on ${\mathbb{R}}^{2n}$:
		\begin{align}
			\cFF_f(z):=(2\pi)^{-n}\int_{\RR^{2n}} f(x) \e^{i(x,z)}dx.
		\end{align}	
		We make use of the following facts. Firstly, the Fourier transform satisfies the following useful identity \begin{align}\label{5.4}
			\cFF_{x\mapsto \partial_{x_j} f(x)}(z)=-iz_j\cFF_f(z),
		\end{align}
		for any integrable, continuously differentiable function $f$, which has an integrable partial derivative $\partial_{x_j} f$, where $z_j$ denotes the $j^{th}$ component of the vector $z \in \cZ$. Secondly, for any square integrable function $h$ on $\RR^{2n}$, we have that
		\begin{align}\label{5.5}
			\|\cFF_h\|_{2}=\|h\|_{2},
		\end{align}
which is the classical Plancherel identity.
		We also employ the well-known polarization identity:
		\begin{align}\label{5.6}
			\int_{\RR^{2n}} \overline{g(x)} h(x) dx = \frac{1}{4}(\|g+h\|^2_{2} - \|g-h\|^2_{2}+i \|g+ih\|^2_{2} -i \|g-ih\|^2_{2}),
		\end{align}
		for any two square integrable functions $g$ and $h$, and the fact that the characteristic function $\cF_\rho$ is equal to 
		$\cFF_{f_\rho}$, where $f_\rho$ denotes the Wigner function of $\rho$ and is a Schwartz function.
		\begin{align} 
			\cE(\rho)&= \frac{1}{4} (2\pi)^{-n}\int_{\cZ} \overline{\cF_\rho(z)}(|z|^2 \cF_\rho(z)) dz\nonumber\\
			&=  -\sum_{i=1}^{2n}\frac{1}{4} (2\pi)^{-n} \int_\cZ \overline{\cFF_{f_{\rho}}(z)} (-i z_j)^2 \cFF_{f_{\rho}}(z) dz\nonumber\\
			&=   -\sum_{i=1}^{2n}\frac{1}{4} (2\pi)^{-n} \int_\cZ \overline{\cFF_{f_{\rho}}(z)}  \cFF_{x\mapsto \partial^2_{x_i^2} f_{\rho}(x)}(z) dz\nonumber\\
			&= -\frac{1}{4} \sum_{i=1}^{2n} (2\pi)^{-n} \int_\cZ {f_{\rho}(x)} \partial_{x_i^2}^2 f_{\rho}(x)dx\nonumber\\
			&=-(2\pi)^{-n} \int_\cZ f_{\rho}(x)\frac{1}{4} \Delta f_{\rho}(x)dx \equiv \frac{1}{4(2\pi)^{n} } \cE_{cl}(f_{\rho}),
		\end{align}
		where $\cE_{cl}(.)$ is the Dirichlet form of the classical heat semigroup and is given by \Cref{DF-cl-heat}. In the above, we have used the notation $\partial_{x_i^2}^2 f(x) = \frac{\partial^2 f(x)}{\partial x_i^2}$; the third line follows from two uses of \Cref{5.4} and the fourth line follows from \Cref{5.5} and \Cref{5.6}.

Moreover, by the non-commutative Parseval relation, \Cref{NCparseval}, and \Cref{5.5}, we have
		\begin{align}\label{5.12}
			\|\rho\|_2^2=\tr(\rho^2)=(2\pi)^{-n}\int_{\cZ}|\cF_{\rho}(z)|^2dz=(2\pi)^{-n}\|\cF_{\rho}\|_{2}^2=(2\pi)^{-n} \|f_{\rho}\|_{2}^2.
		\end{align}
		 Further,
		 \begin{align}
	(2\pi)^n	 \|\rho\|_1=(2\pi)^n=\|f_\rho\|_1,
		 \end{align}
		 which follows from the fact that the Wigner function of a state (as defined through \Cref{wign}) has integral equal to $(2\pi)^n$, and the assumption that $f_\rho$ is positive. Hence,
		\begin{align}\label{nashinvertgauss}
			\|\rho\|_2^{2+2/n}=(2\pi)^{-(n+1)}\|f_{\rho}\|_{2}^{2+2/n}\le (2\pi)^{-n+1}  c_{2n}^{2/n}\cE_{cl}(f_{\rho}) = {8\pi}c_{2n}^{2/n}\cE(\rho),
		\end{align}
		where we used the classical Nash inequality (\ref{eq-cl}), with $n$ replaced by $2n$.
		\qed
	\end{proof}
	
	We now show how \Cref{thm:nash} implies \Cref{thm:ultra}\\\\
	\begin{proof}[\Cref{thm:ultra}]
                 In \cite{KT15} the authors proved ultracontractivity in the finite-dimensional setting starting from a non-commutative version of the Nash inequality. In order to prove an ultracontractivity result for the quantum diffusion semigroup considered in this paper, we follow the ideas of the proof in \cite{KT15}, which in turn closely follows the proof in the classical case studied in \cite{Diaconis1996}. Let $\rho$ be a Schwartz state with a positive Wigner function. By \Cref{theo_KS}, $\rho_t:=\Lambda_t(\rho)$ is also such a state, and \Cref{thm:nash} applies. Let $u(t):=\|\rho_t\|_2^2$; using \Cref{NCparseval} and \Cref{rhorhotcharac}, one can verify that the function $u$ is differentiable and $\dot{u}(t)=-2\eps(\rho_t)$. \Cref{thm:nash} implies that
		\begin{align}
			u^{1+1/n}(t)=\|\rho_t\|^{2+2/n}_2\le  C_{n}  \eps(\rho_t)
			=-\frac{C_{n}}{2} \dot{u}(t) 
		\end{align}
		so that
		\begin{align}
			\frac{d}{dt}\frac{1}{u^{1/n}(t)}\frac{nC_{n}}{2} =-\frac{C_{n}}{2} \frac{\dot{u}(t)}{u(t)^{1+1/n}}\ge 1,
		\end{align}	
		and by integrating both sides of the above inequality from $0$ to $t$, one gets:
		\begin{align}
			\frac{1}{u^{1/n}(t)}\ge \frac{2t}{nC_{n}}+\frac{1}{u^{1/n}(0)}\ge \frac{2t}{nC_{n}},
		\end{align}	
		so that
		\begin{align}
			\|\rho_t\|_2^2\equiv u(t)\le 
\left( \frac{nC_{n}}{2t}\right)^{n}.
		\end{align}	
		The above inequality implies that for any Schwartz state $\rho$ with positive Wigner function,
		\begin{align}\label{at}
			\|\rho_t\|_2\le \left(\frac{nC_{n}}{2t}\right)^{n/2}\equiv \kappa_n t^{-n/2}.
		\end{align}
		In particular, this implies that
	\begin{align}
	\|\rho_t\|_{\infty}\le \|\rho_t\|_2\le \kappa_n t^{-n/2}.
	\end{align}
Therefore,
\begin{align}
S(\rho_t)=-\tr(\rho_t\ln \rho_t)\ge -\ln\|\rho_t\|_\infty\ge \frac{n}{2}\ln\left(\kappa_n^{-2/n}t\right).
\end{align}
	\qed
\end{proof}

\section{Proofs of \Cref{blachmanstam} and \Cref{concavqep}}\label{concavity}
Our main ingredient in the proof of \Cref{logsob} is the concavity of the entropy power stated in \Cref{concavqep}. This concavity has a classical analogue, first proved by Costa in \cite{Costa1985}. Later, Dembo \cite{Dembo1991,Dembo1989} simplified the
proof, by an argument based on the so-called Blachman-Stam inequality \cite{Blachman1965}. More recently, Villani \cite{Villani2000} gave a direct proof of the same inequality. Our proof of \Cref{concavqep} can be interpreted as a quantum version of the proof in \cite{Dembo1989}, and the intermediate result \Cref{blachmanstam} can be seen as a quantum analogue of the Blachman-Stam inequality in the case of added Gaussian noise. 
  	\begin{lemma}\label{difftwice}
  		Let $\rho$ be an invertible Gaussian state. Then for any $j \in \{1,2, \ldots, 2n\}$, the function $\theta\mapsto D(\rho||\rho_{R_j}^{(\theta)})$ is twice differentiable at $0$, where $\rho_{R_j}^{(\theta)}$ is defined through \Cref{shift-states}. Moreover, the operator $\left(\rho[R_j,[R_j,\ln\rho]]\right)$ is trace class, and the divergence-based quantum Fisher information defined through \Cref{Fisherinfo} is given by
  		\begin{align}
  		\cJ(\{\rho_{R_j}^{(\theta)}\})=\tr\left(\rho[R_j,[R_j,\ln\rho]]\right).
  	\end{align}	
In addition, $\left.\frac{d}{d\theta}D(\rho||\rho_{R_j}^{(\theta)})\right|_{\theta=0}=0$.
  	\end{lemma}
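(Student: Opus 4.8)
The plan is to reduce the statement to an explicit computation that the Gaussian structure of $\rho$ renders elementary. Since $\rho_{R_j}^{(\theta)} = e^{i\theta R_j}\rho\, e^{-i\theta R_j}$ is a unitary conjugate of $\rho$ (see \Cref{shift-states}), its logarithm is $\ln\rho_{R_j}^{(\theta)} = U_\theta(\ln\rho)U_\theta^*$ with $U_\theta := e^{i\theta R_j}$, by the spectral mapping theorem. As $\rho$ is invertible, so is each $\rho_{R_j}^{(\theta)}$, the support condition in the definition of $D$ is satisfied, and
\begin{align}
D(\rho\|\rho_{R_j}^{(\theta)}) = \tr(\rho\ln\rho) - \tr\!\big(\rho\, U_\theta(\ln\rho)U_\theta^*\big).
\end{align}
The first term is the constant $-S(\rho)$, so the whole analysis comes down to understanding the single-variable function $\theta\mapsto \tr(\rho\,U_\theta(\ln\rho)U_\theta^*)$.

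Next I would use \Cref{theoremgaussstate}: for an invertible centered Gaussian state $\ln\rho = \ln C - R^T\Gamma R$, and in the general case $\ln\rho = \ln C - (R+\mu)^T\Gamma(R+\mu)$ via the displacement relation $V(\mu)R_kV(-\mu) = R_k + \mu_k I$ of \Cref{commRV}; in either case $\ln\rho$ is a polynomial of degree $\le 2$ in the components of $R$. The key simplification is that the Heisenberg conjugation terminates: since $[R_j,R_k] = i\Omega_{jk}I$ is a scalar, $[R_j,[R_j,R_k]]=0$, and hence
\begin{align}
U_\theta R_k U_\theta^* = R_k - \theta\,\Omega_{jk}\,I
\end{align}
exactly. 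Substituting this into the quadratic expression for $\ln\rho$ shows that $\theta\mapsto U_\theta(\ln\rho)U_\theta^*$ is an operator-valued polynomial of degree $\le 2$ in $\theta$, whose Taylor coefficients are $\ln\rho$, then $i[R_j,\ln\rho]$ (of degree $\le 1$ in $R$), and finally $-\tfrac12[R_j,[R_j,\ln\rho]]$. Because $\ln\rho$ is quadratic, this last double commutator is a scalar multiple of the identity, $[R_j,[R_j,\ln\rho]] = c_j I$; consequently $\rho[R_j,[R_j,\ln\rho]] = c_j\rho$ is trace class, which already settles that assertion of the lemma.

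Taking the trace against $\rho$ then turns the above into the scalar polynomial $D(\rho\|\rho_{R_j}^{(\theta)}) = -i\theta\,\tr(\rho[R_j,\ln\rho]) + \tfrac{\theta^2}{2}\tr(\rho[R_j,[R_j,\ln\rho]])$, all coefficients being finite because a Gaussian state is a Schwartz state and hence has finite moments of every order. In particular $\theta\mapsto D(\rho\|\rho_{R_j}^{(\theta)})$ is twice differentiable (in fact a polynomial of degree $\le 2$), and reading off coefficients gives $\left.\tfrac{d}{d\theta}D(\rho\|\rho_{R_j}^{(\theta)})\right|_{\theta=0} = -i\tr(\rho[R_j,\ln\rho])$ and $\cJ(\{\rho_{R_j}^{(\theta)}\}) = \left.\tfrac{d^2}{d\theta^2}D(\rho\|\rho_{R_j}^{(\theta)})\right|_{\theta=0} = \tr(\rho[R_j,[R_j,\ln\rho]])$. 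The vanishing of the first derivative follows either from cyclicity of the trace together with $[\rho,\ln\rho]=0$, or, more robustly, from the fact that $\theta=0$ is a global minimum of the nonnegative function $D(\rho\|\rho_{R_j}^{(\theta)})$, which vanishes there. The main obstacle is purely the rigour of manipulating the unbounded operators $R_j$ and $\ln\rho$: one must justify the operator identity above on the dense domain of Schwartz vectors and the interchange of differentiation with the trace. This is where I would invoke \Cref{lemma5.4.2}, whose hypothesis $\tr(\rho R_j^2)<\infty$ holds for a Gaussian state, to control the derivatives of $U_\theta$ in $L^2(\rho)$; the finite-moment property then guarantees that every trace in the polynomial expansion is absolutely convergent, so the formal manipulations are legitimate.
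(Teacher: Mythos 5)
Your proof is correct, but it follows a genuinely different and more direct route than the paper's. The paper never writes $D(\rho\|\rho_{R_j}^{(\theta)})$ in closed form: after reducing to the centered case by unitary invariance and checking $\tr(\rho\,|\ln\rho_{R_j}^{(\theta)}|)<\infty$, it computes the first derivative at general $\theta$ as a limit of difference quotients, substituting $\ln\rho=\ln C-R^T\Gamma R$ from \Cref{exprrho} and controlling each $\varepsilon\to 0$ limit with Holevo's \Cref{lemma5.4.2}, the identification $e^{-i\varepsilon R_j}=V(z_j^{\varepsilon})$, the shift relation \Cref{commRV}, and Cauchy--Schwarz; the second derivative is then obtained by a second limit of the same type. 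You instead observe that the adjoint action of $U_\theta=e^{i\theta R_j}$ on the components of $R$ is an exact affine shift, $U_\theta R_k U_\theta^{*}=R_k-\theta\,\Omega_{jk}I$, so that on the quadratic $\ln\rho$ the conjugation series terminates at second order and $\theta\mapsto D(\rho\|\rho_{R_j}^{(\theta)})=-i\theta\,\tr(\rho[R_j,\ln\rho])+\tfrac{\theta^2}{2}\tr(\rho[R_j,[R_j,\ln\rho]])$ is an exact polynomial; all three assertions of \Cref{difftwice} are read off the coefficients. This buys several things: twice differentiability at every $\theta$ (not just at $0$) with no limit interchanges; the trace-class claim becomes trivial via your correct observation that $[R_j,[R_j,\ln\rho]]=2\sum_{k,l}\Gamma_{kl}\Omega_{jk}\Omega_{jl}\,I$ is a scalar (summing over $j$ even yields the closed form $J(\rho)=2\tr\Gamma$ for a centered invertible Gaussian state, consistent with the de Bruijn identity on thermal states); and the vanishing of the first derivative follows cleanly from positivity of $D$ at its minimum $\theta=0$, or from cyclicity through $\rho^{1/2}$. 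Two presentational points. First, your appeal to \Cref{lemma5.4.2} is vestigial: that lemma exists precisely to control the $\varepsilon\to 0$ limits that your termination argument eliminates, and it plays no actual role in your route; what you do need to state is that $U_\theta(\ln\rho)U_\theta^{*}$ equals the shifted quadratic as an identity of self-adjoint operators, which holds because both sides agree on the Schwartz space, a common core invariant under the Weyl operators --- a domain issue no heavier than those in the paper's own manipulations. Second, the termwise evaluation of $\tr\bigl(\rho\,U_\theta(\ln\rho)U_\theta^{*}\bigr)$ is legitimate because $\rho^{1/2}R_k$ is Hilbert--Schmidt for a Gaussian state (finite second moments), so each monomial trace converges absolutely --- the same moment bound the paper invokes via Cauchy--Schwarz.
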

The proof of the above lemma is given in \Cref{app:lem4}. It follows from it that for any Gaussian state $\rho$ satisfying the condition of \Cref{difftwice} the entropy variation rate, defined by \Cref{entropy-var-rate} is finite, and is given by the following expression
 \begin{align}\label{JLL}
 J(\rho)=-4\tr(\LL (\rho)\ln\rho),
 \end{align}
 \medskip

\noindent
 We are now ready to prove \Cref{blachmanstam}. The proof we give can be seen as a quantum analogue of the proof given by Stam in \cite{Stam1959}.\\

\begin{proof}[\Cref{blachmanstam}]
In the following, we define a vector $z=(q_1,p_1,\ldots, q_n,p_n) \in \cZ$ simply as $(q,p)$, and for any $a \in {\mathbb{R}}$ we define
\begin{align}
(q,p;q_j-a)&:=(q_1,p_1,\ldots,q_{j-1},p_{j-1}, q_j-a, p_j,q_{j+1}p_{j+1} \ldots,q_n,p_n),\\
(q,p;p_j-a)&:=(q_1,p_1,\ldots,q_{j-1},p_{j-1}, q_j, p_j-a,q_{j+1}p_{j+1} \ldots,q_n,p_n).
\end{align}
Then for any $a \in {\mathbb{R}}$ we define the following functions:
\begin{align}
g^{a}_{P_j,t/2}: (q,p)\mapsto g_{t/2}(q,p;q_j-a); \quad g^{a}_{Q_j,t/2}: (q,p)\mapsto g_{t/2}(q,p;p_j+ a)
\end{align}
Further, accordingly denoting the Weyl operator $V(z)$ as $V(q,p)$, we have for any $\theta \in {\mathbb{R}}$ and $\alpha, \beta >0$,
\begin{align}
\rho_{P_j}^{(\theta \alpha)}\ast g^{(\theta \beta)}_{P_j,t/2} &=\int_{{\mathbb{R}}^n}\int_{{\mathbb{R}}^n} dqdp g_{t/2}(q,p;q_j-\theta\beta) V{(q,p)}\rho_{P_j}^{(\theta \alpha)}V(-q,-p)\nonumber\\
 &=\int_{{\mathbb{R}}^n}\int_{{\mathbb{R}}^n} dq\,dp\, g_{t/2}(q,p)\cA(q,p;q_j+\theta\beta)(\rho_{P_j}^{(\theta \alpha)})
 \end{align}
 where we made a change of variable $ q_j-\theta\beta \rightarrow q_j$ and denoted the conjugation $V(z)(\cdot)V(-z)$ by the automorphism $\cA(z)(\cdot)$ in the last line. 
Then using the Weyl-Segal CCR (\Cref{CCRrelation}), we obtain
 \begin{align}
V{(q,p;q_j+\theta\beta)}\e^{i\theta\alpha {P_j}}
 &= e^{ip_j \theta \alpha/2}V\left(q,p;q_j+\theta(\alpha + \beta)\right)
\end{align}
Hence,
\begin{align}
\rho_{P_j}^{(\theta \alpha)}\ast g^{(\theta \beta)}_{P_j,t/2}&=\int_{{\mathbb{R}}^n}\int_{{\mathbb{R}}^n}dqdp g_{t/2}(q,p)\cA(q,p;q_{j}+\theta(\alpha + \beta))(\rho)\nonumber\\
&=e^{i\theta(\alpha + \beta)P_j} \left(\int_{{\mathbb{R}}^n}\int_{{\mathbb{R}}^n}dq dp\, g_{t/2}(q,p) \cA(q,p)(\rho)\right)e^{-i\theta(\alpha + \beta)P_j} \nonumber\\
&=(\rho_t)_{P_j}^{\left(\theta(\alpha + \beta)\right)}.
\label{astast}
\end{align}
This implies that 
\begin{align}
	D\left(\rho_t||\rho_{P_j}^{(\theta\alpha)}\ast g_{P_j,t/2}^{(\theta\beta)}\right)= D\left(\rho_t||(\rho_t)_{P_j}^{\left(\theta(\alpha + \beta)\right)}\right).
\end{align}
Similarly, we can show that
\begin{align}
	D\left(\rho_t||\rho_{Q_j}^{(\theta\alpha)}\ast g_{Q_j,t/2}^{(\theta\beta)}\right)= D\left(\rho_t||(\rho_t)_{Q_j}^{\left(\theta(\alpha+ \beta)\right)}\right).
\end{align}

Further, using the data-processing inequality for the relative entropy, and its additivity under tensor product one can show that
\begin{align}\label{6.11}
	D\left(\rho_t||\rho_{R_j}^{(\theta\alpha)}\ast g_{R_j,t/2}^{(\theta\beta)}\right)
&\leq D\left(\rho||\rho_{R_j}^{(\theta\alpha)}\right) + D_{\rm{KL}}\left(g_{t/2}||g_{R_j,t/2}^{(\theta\beta)}\right),
\end{align}
where $D_{\rm{KL}}(f||g)$, for pdfs $f$ and $g$, denotes the Kullback-Leibler divergence:
\begin{align}
		D_{\rm{KL}}(f||g):=\int f(x)\ln\frac{f(x)}{g(x)}dx.
	\end{align}
A rigorous proof of the inequality (\ref{6.11}), using properties of the relative entropy of normal states of general von Neumann algebras, is given in 
\Cref{proofineqpos}. From (\ref{6.11}) it can be shown that 
		\begin{align}	\label{theinequality}
			(\alpha+\beta)^2J(\rho\ast g_{t/2})\le \alpha^2J(\rho)+\beta^2 J_{{cl}}(g_{t/2}),
		\end{align}
where for any positive differentiable probability density function $f$ 
\begin{align}\label{cl-FI}
	J_{{cl}}(f):=\int_{\RR^{2n}}\frac{|\nabla f|^2}{f}dz
	\end{align}
	denotes its (classical) Fisher information. The inequality (\ref{theinequality}) is also derived in \Cref{proofineqpos}. A straightforward computation of $J_{{cl}}(g_{t/2})$ then yields the result.
\qed
\end{proof}
\medskip

\noindent
We now give the proof of \Cref{concavqep}.	

\begin{proof}[\Cref{concavqep}]
		By \Cref{qbruijn}, it is suffices to prove that $t \mapsto J(\rho_t)$ is differentiable on $[0, \infty)$ and for any $t\ge0$,
		\begin{align}\label{equivconcav}
			\frac{d}{dt}J(\rho_t)+\frac{1}{4n}J(\rho_t)^2\le 0
		\end{align}	
Let $\varepsilon>0$. By setting $\alpha=\frac{1}{J(\rho_t)}$ and $\beta=\frac{\varepsilon}{4n}$ in (\ref{qfisherinfoin}), we get
\begin{align}
	\left( \frac{1}{J(\rho_t)}+\frac{\varepsilon}{4n}\right)^2 J(\rho_{t+\varepsilon})\le \frac{1}{J(\rho_t)}+\frac{\varepsilon}{4n}
\end{align}	
which implies
\begin{align}\label{Jeps}
\frac{J(\rho_{t+\varepsilon})-J(\rho_t)}{\varepsilon}\le \frac{-J(\rho_t)J(\rho_{t+\varepsilon})}{4n}.
\end{align}

For an invertible centered Gaussian state, we know from \Cref{JLL} and \Cref{theoremgaussstate} that 
\begin{align}
	J(\rho_t)&=-\tr(\LL(\rho_t)\ln \rho_t)\nonumber \\
	&=-\frac{1}{4}\tr\left ( \int_{\RR^{2n}} g_{t/2}(z) V(z) \rho V(-z) \sum_{ijk} (\Gamma_t)_{ij}[ R_k,[R_k,R_i R_j]] dz\right)\nonumber\\
	&=-\frac{1}{4} \sum_{ijk}(\Gamma_t)_{ij} \int_{\RR^{2n}} g_{t/2}(z) \tr(V(z) \rho V(-z)   [R_k,[R_k,R_i R_j]]) dz,
	\end{align}
	where we use Fubini's theorem to interchange the trace and the integral in the last line. Here $\Gamma_t$ is the matrix associated to the Gaussian state $\rho_t$ defined similarly to \Cref{theoremgaussstate}. Now for any $t_0>0$, $t\mapsto \Gamma_t$ and $t\mapsto g_{t/2}(z)$, $\forall$ $z \in {\mathbb{R}}^{2n}$, are differentiable on $[t_0,\infty)$ and $|\partial_t g_{t/2}(z)|\le |\partial_t g_{t/2}(z)|_{t=t_0}|$ for all $t\ge t_0$, with
	\begin{align}
		\forall i,j\in \{ 1, \ldots,2n\},\quad \sum_k \int_{\RR^{2n}} |\partial_{t}g_{t/2}(z)\tr (V(z) \rho V(-z)[R_k,[R_k,R_i R_j]])|dz|_{t=t_0}<\infty.
		\end{align}
	Hence, $t\mapsto J(\rho_t)$ is differentiable on $(0,+\infty)$. Using the above, the result follows by taking the limit $\varepsilon\to 0^+$ in \Cref{Jeps}.
\qed
	\end{proof}	
	\section{Proof of \Cref{logsob} and \Cref{iso}}\label{prooflogsob}
	The proof of \Cref{logsob} relies on the concavity of the quantum entropy power with respect to time, and is analogous to Toscani's proof of the classical log-Sobolev inequality in the form given by \Cref{Tos1}.
\smallskip

\noindent	
\begin{proof}[\Cref{logsob} and \Cref{iso}]
		We know by de Bruijn's identity (\Cref{quantumdebruijin} ) that for any invertible, centered Gaussian state $\rho$ evolving under the action of the quantum diffusion semigroup,
		\begin{align}
			\frac{d}{ds} E(\rho_s)=\frac{1}{4n}J(\rho_s)E(\rho_s).
			\end{align}
			Moreover by the concavity of the entropy power (\Cref{concavqep}), 
			\begin{align}
				\left.\frac{d}{ds} E(\rho_s)\right|_{s=0} \ge \frac{E(\rho_t)-E(\rho)}{t},\qquad \forall \, t>0.
				\end{align}
				However, by Corollary III.4 of \cite{Koenig2012}, $E(\rho_t)=\e t/2+\mathcal{O}(1)$. Combining these facts, we get the following inequality:
		\begin{align}
			J(\rho)\ge \e^{-\frac{1}{n} S(\rho)}2\e n,
		\end{align}
		which is the statement of \Cref{iso}. Now use the inequality $\e^{-x}\ge 1-x$ for every $x$ to deduce that
		\begin{align}
			\tr(\rho\ln \rho)+n\le \frac{ J(\rho)}{2\e}\label{fisherLSI}
		\end{align}
		
		\qed
	\end{proof}
	This equation is analogous to the classical Fisher-information log-Sobolev inequality (see Equation 24 in \cite{Toscani}), which was proven to be equivalent to the Gaussian log-Sobolev inequality for the Ornstein-Uhlenbeck Markov semigroup (\cite{Chafai2005}, see also \cite{Weissler1978}). 

\section*{Remark}
\Cref{iso,blachmanstam,concavqep} were independently and concurrently obtained by Huber, K\"{o}nig and Vershynina in \cite{koenig2016}.	
\section*{Acknowledgments}
ND is grateful to the organizers of the {\em{48 Symposium on Mathematical Physics; Gorini-Kossakowski-Lindblad-Sudarshan Master Equation - 40 Years After}}, held in Torun in June 10-12, 2016, where the results of this paper were first presented. She would also like to thank Robert Alicki, Dariusz Chruściński, Franco Fagnola, Bassano Vacchini and Reinhard Werner for helpful discussions during the Symposium. CR would like to thank St\'{e}phane Boucheron and Djalil Chafa\"{i} for very useful exchanges. YP would like to thank Ivan Bardet for helpful discussions, and acknowledges the support of ANR contract ANR-14-CE25-0003-0. The authors are grateful to Eric Carlen and Michael Loss for pointing out an inaccurate remark in a previous version.

\appendix
\section{Proof of \Cref{difftwice}}\label{app:lem4}
\begin{proof}
	Recall that any Gaussian state (not necessarily centered) can be obtained from a centered Gaussian state via a similarity transformation involving the Weyl operators (cf.~\Cref{gausstogausscentered}).
	Hence, by invariance of the relative entropy under unitary operations, functional calculus, \Cref{commRV} and \Cref{CCRrelation}, we can reduce the proof to that of the case of centered Gaussian states. Suppose then that $\rho$ is a centered, invertible Gaussian state. Hence, by \Cref{theoremgaussstate}:
	\begin{align}
		\tr(\rho|\ln\rho_{R_j}^{(\theta)}|)&=\tr\left( \rho| \e^{iR_j\theta} (\ln(C)-R^T\Gamma R) \e^{-iR_j\theta} |\right)\nonumber \\
		&	\le |\ln C| +\tr (\rho|  R^T\Gamma R|)\nonumber\\
		&\le |\ln C| +\sum_{k,l=1}^{2n}|\Gamma_{kl}| \tr  (\rho|R_kR_l|)\nonumber\\
		&\le |\ln C| +\sum_{k,l=1}^{2n}|\Gamma_{kl}|\tr \rho(R_lR_k^2R_l)<\infty.
	\end{align}
	The last line follows from the Cauchy-Schwarz inequality and the fact that a Gaussian state has finite moments of each order, as its characteristic function is smooth, so that
	\begin{align}
		\forall k,l\in \llbracket 1,2n\rrbracket,\quad	\tr(\rho^{1/2}\rho^{1/2}|R_kR_l|)\le \tr(\rho)\tr(|R_kR_l|^2\rho)=\tr(R_lR_k^2R_l\rho)<\infty.
	\end{align}
	Hence, the quantity $\tr(\rho\,\ln\rho_{R_j}^{(\theta)})$ is well-defined for any $\theta$. Now, for any $\theta,\varepsilon>0$,
	\begin{align}
		\frac{1}{\varepsilon}\left[ D(\rho||\rho_{R_j}^{(\theta+\varepsilon)})\label{SSterm} -D(\rho||\rho_{R_j}^{(\theta)})\right]&=\frac{1}{\varepsilon}[\tr(\rho\,\ln\rho_{R_j}^{(\theta)})-\tr(\rho\,\ln \rho_{R_j}^{(\theta+\varepsilon)})]\nonumber \\
		&=\frac{1}{\varepsilon}\left\{\tr\rho\,\e^{i\theta R_j}(I-\e^{i\varepsilon R_j})\ln\rho\,\e^{-i(\theta+\varepsilon)R_j}\right.\nonumber\\
		&\left.+\tr\rho\,\e^{i\theta R_j}\ln\rho \e^{-i\theta R_j}(I-\e^{-i\varepsilon R_j})\right\}   \\
		&= \frac{1}{\varepsilon} \sum_{kl} \Gamma_{kl} \left\{\tr R_kR_l \e^{-i(\theta+\varepsilon)R_j} \rho \e^{i\theta R_j} (\e^{i\varepsilon R_j}-I)\right.\label{firstterm}\\
		&\left.\qquad+\tr\e^{-i\theta R_j}\rho \e^{i\theta R_j} R_kR_l  (\e^{-i\varepsilon R_j}-I)\right\}\label{secondterm},
	\end{align}
	where we used \Cref{exprrho} in the last line above. Now, $\e^{-i\theta R_j}\rho\,\e^{i\theta R_j}$ is also a Gaussian state (cf.~ \Cref{gausstogausscentered}), and hence the operator $A:=\e^{-i\theta R_j}\rho\,\e^{i\theta R_j} R_k R_l$ is trace-class for each $k,l\in\llbracket 1,...,2n\rrbracket$. Moreover, as $\e^{-i\theta R_j}\rho\,\e^{i\theta R_j}$ is Gaussian, $\tr(|A| R_j^2)<\infty$ (use e.g. polar decomposition of $A$). Hence, up to a decomposition of $A$ into its positive and negative parts, and normalizing each of these parts, one finds, by use of \Cref{lemma5.4.2} and the Cauchy-Schwarz inequality, that
	\begin{align}
		\frac{1}{\varepsilon}	\tr\e^{-i\theta R_j}\rho \e^{i\theta R_j} R_kR_l  (\e^{-i\varepsilon R_j}-I)\underset{\varepsilon\to 0}{\rightarrow} -i\tr \e^{-i\theta R_j}\rho \e^{i\theta R_j} R_kR_l R_j\label{secondtterm}
	\end{align}
	Let us now focus on \Cref{firstterm}. We can merge the exponentials depending on $\varepsilon$ in the following way: first we observe by \Cref{Weyl} that $\e^{-i\varepsilon R_j}$ is equal to $V(z^\varepsilon_j)$, where $z^\varepsilon_j$ is the $2n$ dimensional vector with entries
	
	\begin{align}
		(z_j^\varepsilon)_r:=\left\{ \begin{aligned}
			& -\varepsilon\quad \text{ if  }\,r=j-1,\quad j\text{ even},\\
			&+\varepsilon\quad \text{ if }\,r=j+1,\quad j\text{ odd},\\
			&0\quad \text{ otherwise.}
		\end{aligned}\right.
	\end{align}

	Suppose for example that $j$ is even. Then
	\begin{align}
		R_kR_l \e^{-i\varepsilon R_j}&=R_kR_l V(z^\varepsilon_j)\nonumber \\
		&= V(z_j^\varepsilon)V(z_j^{-\varepsilon}) R_k V(z_j^\varepsilon)V(z_j^{-\varepsilon})R_l V(z_j^\varepsilon)\label{shift}\\
		&= V(z_j^\varepsilon) (\varepsilon \delta_{j-1,k}I+R_k)(\varepsilon\delta_{j-1,l}I +R_l),\nonumber
	\end{align}
	where we used \Cref{commRV} in the last line. Hence, substituting  this into \Cref{firstterm},
	
	\begin{align}
		\tr(R_kR_l \e^{-i(\theta+\varepsilon)R_j}\rho\,\e^{i\theta R_j}(\e^{i\varepsilon R_j}-I))=&	\delta_{j-1,k}\delta_{j-1,l} \varepsilon^2 \tr( \e^{-i\theta R_j}\rho \e^{i\theta R_j}(I-\e^{-i\varepsilon R_j}))\nonumber \\
		&+\varepsilon\delta_{j-1,l} \tr(R_k \e^{-i\theta R_j}\rho \e^{i\theta R_j}(I-\e^{-i\varepsilon R_j}))\\
		&+\varepsilon\delta_{j-1,k} \tr(R_l \e^{-i\theta R_j}\rho \e^{i\theta R_j}(I-\e^{-i\varepsilon R_j}))\nonumber \\
		&+\tr(R_kR_l  \e^{-i\theta R_j}\rho \e^{i\theta R_j}(I-\e^{-i\varepsilon R_j}))\nonumber
	\end{align}
	Then, by arguments very similar to the ones used to prove \Cref{secondtterm}, we find that 
	\begin{align}
		\frac{1}{\varepsilon} \tr \left(R_kR_l \e^{-i(\theta+\varepsilon)R_j} \rho \e^{i\theta R_j} (\e^{i\varepsilon R_j}-I)\right)\underset{\varepsilon\to 0}{\rightarrow}  i \tr(R_k R_l \e^{-i\theta R_j}\rho\,\e^{i\theta R_j}R_j).\label{firsttterm}
	\end{align}
	One easily shows that the result is the same for $j$ odd. Finally using \Cref{secondtterm} and \Cref{firsttterm} into \Cref{SSterm}, we obtain
	\begin{align}
		\frac{1}{\varepsilon} [D(\rho||\rho_{R_j}^{(\theta+\varepsilon)})-D(\rho||\rho_{R_j}^{(\theta)})]\underset{\varepsilon\to 0}{\rightarrow}-i \tr(\e^{-i\theta R_j}\rho\,\e^{i\theta R_j}[R_j,\ln\rho])
	\end{align}
	By similar arguments, we can show that:
	\begin{align}
		\frac{d^2}{d\theta^2}D(\rho||\rho_{R_j}^{(\theta)})|_{\theta=0}=	\lim_{\theta\to 0}\frac{-i}{\theta}\tr((\e^{-i\theta R_j}\rho\,\e^{i\theta R_j}-\rho)[R_j,\ln\rho])= \tr(\rho[R_j,[R_j,\ln\rho]]).
	\end{align}	
	Indeed, 
	\begin{align}
		\left.\frac{d^2}{d\theta^2}D(\rho||\rho_{R_j}^{(\theta)})\right|_{\theta=0}&=\lim_{\theta\to0}\frac{1}{\theta}  \left.\frac{d}{d\varepsilon}\right|_{\varepsilon=0}D(\rho||\rho_{R_j}^{(\theta+\varepsilon)})-\left.\frac{d}{d\varepsilon}\right|_{\varepsilon=0}D(\rho||\rho_{R_j}^{(\varepsilon)})\nonumber\\
		&=\lim_{\theta\to 0}\frac{-i}{\theta} \tr((\e^{-i\theta R_j}\rho\,\e^{i\theta R_j}-\rho)[R_j,\ln\rho]).
	\end{align}
	Now, for any $\theta>0$, 
	\begin{align}
		\e^{-i\theta R_j}\rho\,\e^{i\theta R_j}-\rho=(\e^{-i\theta R_j}-I)\rho \e^{i\theta R_j}+\rho(\e^{i\theta R_j}-I)
	\end{align}
	So that
	\begin{align}
		(\e^{-i\theta R_j}\rho\,\e^{i\theta R_j}-\rho)[R_j,\ln\rho]=(\e^{-i\theta R_j}-I)\rho \e^{i\theta R_j} [R_j,\ln\rho]+\rho(\e^{i\theta R_j}-I) [R_j,\ln \rho].\label{secondderive}
	\end{align}
	Let us focus on the second term of the right hand side first:
	\begin{align}
		[R_j,\ln \rho]\rho=-\sum_{k,l=1}^{2n}\Gamma_{kl}[R_j,R_kR_l]\rho
	\end{align}
	is trace class, and $\tr(|[R_j,\ln \rho]\rho|R_j^2)<\infty$, since $\rho$ is Gaussian. Indeed, by polar decomposition:
	\begin{align} 
		\tr(|[R_j,\ln \rho]\rho|R_j^2)&\le \sum_{k,l=1}^{2n} |\Gamma_{k,l}| \tr(|[R_j,R_kR_l]\rho|R_j^2)\nonumber\\
&=\sum_{k,l=1}^{2n} |\Gamma_{k,l}| \tr([R_j,R_kR_l]\rho UR_j^2) <\infty,
	\end{align}
which follows from use of the Cauchy-Schwarz inequality.
	Hence, by 
	\Cref{lemma5.4.2} and the Cauchy-Schwarz inequality,
	\begin{align}
		-	\frac{i}{\theta}\tr(\rho(\e^{i\theta R_j}-I)[R_j,\ln \rho])\underset{\theta\to 0}{\to}\tr(\rho R_j[R_j,\ln\rho]).\label{firsttermlimit}
	\end{align}
	Let us now focus on the first term of the right hand side of \Cref{secondderive}:
	\begin{align}
		\e^{i\theta R_j}[R_j,\ln\rho]&=-\sum_{k,l=1}^{2n} \e^{i\theta R_j} \Gamma_{kl}[R_j,R_kR_l\label{firsttermeq}]\nonumber\\
		&=-\sum_{k,l=1}^{2n} \e^{i\theta R_j} \Gamma_{kl}(R_jR_kR_l-R_kR_lR_j)\nonumber\\
		&=-\sum_{k,l=1}^{2n} \Gamma_{kl} V(z_j^{-\theta})(R_jR_kR_l-R_kR_lR_j).
	\end{align}
	However, for $j$ even, recall that
	\begin{align}
		V(z_j^{-\theta})R_kR_l&=V(z_j^{-\theta})R_k V(z_j^{\theta})
		V(z_j^{-\theta})R_l V(z_j^{\theta})V(z_j^{-\theta})\nonumber\\
		&=(\theta \delta_{j-1,k+R_k})(\theta \delta_{j-1,l}+R_l)V(z_j^{-\theta}),
	\end{align}
	where we used \Cref{shift} in the last line. Replacing in \Cref{firsttermeq}, we get 
	\begin{align}
		\e^{i\theta R_j}[R_j,\ln\rho]=&-\sum_{k,l=1}^{2n} \Gamma_{kl}R_j(\theta \delta_{j-1,k}I +R_k)(\theta \delta_{j-1,l}I +R_l)V(z_j^{-\theta})\nonumber\\
		&+\sum_{k,l=1}^{2n} \Gamma_{kl}(\theta \delta_{j-1,k}I +R_k)(\theta \delta_{j-1,l}I +R_l)R_jV(z_j^{-\theta})
	\end{align}
	Replacing in the first term of the right hand side of \Cref{secondderive}, we then get, by an argument similar to the one leading to \Cref{firsttterm}, that:
	\begin{align}
		\frac{-i}{\theta}\tr((\e^{-i\theta R_j}-I)\rho\,\e^{i\theta R_j} [R_j,\ln \rho] )\to_{\theta\to 0}=-\tr(\rho[R_j,\ln\rho]R_j).\label{secondtermlim}
	\end{align}
	The same calculation can be carried without difficulty for $j$ odd. Regrouping \Cref{firsttermlimit} and \Cref{secondtermlim}, we finally obtain
	\begin{align}
		\left. \frac{d^2}{d\theta^2}\right|_{\theta=0}D(\rho||\rho_{R_j}^{\theta})=\tr(\rho[R_j,[R_j,\ln\rho]])
	\end{align} 
		\qed
\end{proof}

\section{Araki's generalized relative entropy}\label{relative}
In this section, we review the relative entropy introduced by Araki in \cite{Araki1976} in the general context of normal states on von Neumann algebras. This relative entropy reduces to the classical Kullback-Leibler divergence in the case of equivalent probability measures, and to Umegaki's quantum relative entropy \cite{Umegaki1962} in the case of states defined through density operators $\rho,\sigma$ defined on a fixed separable Hilbert space, where $\supp\rho\subset \supp\sigma$. We heavily used the Section 5 of \cite{Jaksicb} to write up this section, but also invite the interested reader to have a look at \cite{Takesaki,Aspects2003} for further details.
\subsection{Basic definitions and modular structure}
We start by recalling that a (concrete) \textit{von Neumann algebra} $\mathfrak{V}$ on a Hilbert $\mathcal{H}$ is a  self-adjoint, weakly closed subalgebra of the algebra $\cB(\cH)$ of bounded operators on $\cH$, which contains the identity operator $I$. Let $\mathfrak{V}$ and $\mathfrak{W}$ be two von Neumann algebras, then a unital map $\Phi:\mathfrak{V}\rightarrow \mathfrak{W} $ is called a \textit{Schwarz map} if for any $A\in\mathfrak{V}$,
\begin{align}
	\Phi(A^*A)\ge \Phi(A)^*\Phi(A)
\end{align}	
A functional $\omega$ on a von Neumann algebra $\mathfrak{V}$ is said to be {\em{normal}} if for any increasing net $\{A_\alpha\}$ of positive operators in $\mathfrak{V}$, $\omega(\operatorname{l.u.b.} A_\alpha)=\operatorname{l.u.b.} \omega(A_\alpha)$, where $\operatorname{l.u.b}.$ stands for the least upper bound of a net. Normal functionals form a subspace of the space of functionals $\mathfrak{V}^*$ on $\mathfrak{V}$, called the \textit{predual} of $\mathfrak{V}$, and denoted by $\mathfrak{V}_*$. The subset of positive normal functionals is denoted by $\mathfrak{V}^+_*$.  A \textit{state} on a von Neumann algebra $\mathfrak{V}$ is a positive linear functional $\omega:\mathfrak{V}\to \CC$ such that $\omega(I)=1$, where $I$ is the identity operator on $\mathfrak{V}$. A normal state $\omega$ is characterized by the existence of a density operator $\rho$, i.e. a non-negative trace-class operator $\rho$ on $\mathcal{H}$ with $\tr(\rho)=1$, such that
\begin{align}\label{statenormaldef}
	\omega(A)=\tr(\rho A),\quad A\in\mathfrak{V}
\end{align}
A normal state is said to be \textit{faithful} if for any positive element $A\in\mathfrak{V}$, $\omega(A)=0$ implies that $A=0$. A von Neumann algebra \textit{in standard form} is a quadruple $(\mathfrak{V},\cH, J, \cH^+)$ where $\cH$ is a Hilbert space, $\mathfrak{V}\subset \cB(\cH)$ is a von Neumann algebra, J is an anti-unitary involution on $\cH$ and $\cH^+$ is a cone in $\cH$ such that:\\\\
(i) $\cH^+$ is self-dual, i.e. $\cH^+=\{x\in\cH|\langle y,x\rangle\ge 0~ \forall y\in\cH^+\}$,\\
(ii) $J\mathfrak{V}J=\mathfrak{V}'$,\\ 
(iii) $JAJ=A^*$ for $A\in\mathfrak{V}\cap\mathfrak{V}'$.\\
(iv) $Jx=x$ for $x\in\cH^+$,\\
(v) $AJA\cH^+\subset \cH^+$ for $A\in\mathfrak{V}$,\\\\
where in (ii), $\mathfrak{V}'$ denotes the commutant of $\mathfrak{V}$ in $\cB(\cH)$.
A quadruple $(\pi,\cH,J,\cH^+)$ is a \textit{standard representation} of the von Neumann algebra $\mathfrak{V}$ if $\pi:\mathfrak{V}\to \cB(\cH)$ is a faithful representation and $(\pi(\mathfrak{V}),\cH,J,\cH^+)$ is in standard form. It is a celebrated result in operator algebras that a standard representation always exists, which means that any von Neumann algebra can be seen as a standard von Neumann algebra acting on a Hilbert space $\cH$, up to some isomorphism $\pi$. Moreover, if $(\pi_1,\cH_1,J_1,\cH^+_1)$ and $(\pi_2,\cH_2,J_2,\cH^+_2)$ are two standard representations of $\mathfrak{V}$, then there exists a unique unitary operator $U:\cH_1\to \cH_2$ such that $U\pi_1(A)U^*=\pi_2(A)$ for all $A\in\mathfrak{V}$, $UJ_1U^*=J_2$, and $U\cH_1^+=\cH_2^+$. The following basic theorem of operator algebra is useful to define the relative entropy.
\begin{theorem}\label{Omega2}
	Let $\mathfrak{V}$ be a von Neumann algebra in standard form. For any positive normal functional $\omega$ on $\mathfrak{V}$, there exists a unique $\Omega_\omega\in\cH^+$ such that 
	\begin{align}
		\omega(A)=\langle \Omega_{\omega}, A\Omega_{\omega}\rangle
	\end{align}
	for all $A\in \mathfrak{V}$. The map $\mathfrak{V}_*^+\ni \omega\mapsto \Omega_{\omega}\in\cH^+$ is a bijection and
	\begin{align}
		\|\Omega_\omega-\Omega_\nu\|^2\le \|\omega-\nu\|\le \|\Omega_\omega-\Omega_\nu\|\|\Omega_\omega +\Omega_\nu\|,
	\end{align}
	where 
	\begin{align}
		\|\omega\|:=\sup_{A\in \mathfrak{V},\|A\|=1}|\omega(A)|.
	\end{align}
\end{theorem}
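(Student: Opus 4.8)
The plan is to recognize this as the representation theorem for the natural (self-dual) positive cone of a von Neumann algebra in standard form, and to split it into existence of $\Omega_\omega$, the two norm inequalities, uniqueness, and bijectivity --- noting in advance that uniqueness and bijectivity will fall out of the lower inequality. Throughout I would use the Tomita--Takesaki modular structure attached to the standard form (modular conjugation $J$ and modular operator $\Delta$) together with the defining properties (i)--(v) of $(\mathfrak{V},\cH,J,\cH^+)$.

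For existence I would first realize $\omega$ as a single vector state. This is exactly the point where ``standard form'' is essential: a general normal positive functional is only a countable mixture $\sum_i\langle\psi_i,\cdot\,\psi_i\rangle$, but in standard form $\cH$ carries a cyclic and separating vector $\Omega_0$ generating the cone as $\cH^+=\overline{\Delta^{1/4}\mathfrak{V}_+\Omega_0}$, and every faithful normal state admits a representative there through the modular operator. Given such a vector representative $\psi$, I would ``pull it into the cone'' by a polar-decomposition argument in standard form, writing $\psi=u'\,\Omega_\omega$ with $u'$ a partial isometry in the commutant $\mathfrak{V}'$ and $\Omega_\omega\in\cH^+$; since $u'$ commutes with every $A\in\mathfrak{V}$ this produces a cone vector inducing the same functional, property (v) guaranteeing that the cone representative exists. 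The non-faithful case I would handle by reducing to a corner of $\mathfrak{V}$ (or by approximation, the limit being controlled by the lower bound below).

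For the inequalities, set $s=\Omega_\omega+\Omega_\nu$ and $d=\Omega_\omega-\Omega_\nu$. A short symmetrization gives, for self-adjoint $A\in\mathfrak{V}$,
\begin{align}
(\omega-\nu)(A)=\langle\Omega_\omega,A\Omega_\omega\rangle-\langle\Omega_\nu,A\Omega_\nu\rangle=\operatorname{Re}\langle s,A\,d\rangle,
\end{align}
so that, since a Hermitian functional attains its norm on self-adjoint contractions, Cauchy--Schwarz yields the upper bound $\|\omega-\nu\|\le\|s\|\,\|d\|=\|\Omega_\omega+\Omega_\nu\|\,\|\Omega_\omega-\Omega_\nu\|$. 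The lower bound $\|\Omega_\omega-\Omega_\nu\|^2\le\|\omega-\nu\|$ is the abstract Powers--St\o rmer inequality: using property (iv), $d=\Omega_\omega-\Omega_\nu$ is $J$-invariant, so I would Jordan-decompose it as $d=\zeta_+-\zeta_-$ with $\zeta_\pm\in\cH^+$ orthogonal, whence $\|d\|^2=\|\zeta_+\|^2+\|\zeta_-\|^2$, and then build a self-adjoint contraction $A\in\mathfrak{V}$ separating $\zeta_+$ from $\zeta_-$ out of their support projections --- these exist precisely because orthogonality of vectors in the self-dual cone (property (i)) corresponds to orthogonality of support projections in $\mathfrak{V}$. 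Evaluating $(\omega-\nu)(A)$ on this $A$ reproduces $\|\zeta_+\|^2+\|\zeta_-\|^2$ and gives the bound. \textbf{This lower bound is the main obstacle}, being the one place where self-duality of the cone and the modular structure enter in an essential, non-formal way.

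Finally, the lower inequality makes $\omega\mapsto\Omega_\omega$ injective: $\omega=\nu$ forces $\|\Omega_\omega-\Omega_\nu\|^2\le0$, hence $\Omega_\omega=\Omega_\nu$, giving uniqueness. Bijectivity is then immediate, since for any $\xi\in\cH^+$ the vector functional $\omega_\xi:A\mapsto\langle\xi,A\xi\rangle$ is normal and positive and visibly has $\xi$ as a cone representative, so uniqueness forces $\Omega_{\omega_\xi}=\xi$; thus $\xi\mapsto\omega_\xi$ is a two-sided inverse of $\omega\mapsto\Omega_\omega$, completing the proof.
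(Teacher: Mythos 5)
The paper itself gives no proof of this statement: in Appendix B it is quoted as a known structure theorem --- the Araki--Connes--Haagerup theorem on the standard form of a von Neumann algebra (Bratteli--Robinson, Theorem 2.5.31, in the cyclic-vector case; Haagerup's 1975 paper in general) --- so your attempt can only be compared with the literature's proof, which your sketch reproduces in outline. Your handling of the two inequalities, of uniqueness, and of bijectivity is the standard argument and is correct: the symmetrization identity $(\omega-\nu)(A)=\operatorname{Re}\langle s,Ad\rangle$ for self-adjoint $A$, together with the fact that a Hermitian functional attains its norm on self-adjoint contractions and Cauchy--Schwarz, gives the upper bound; the lower bound via the Jordan decomposition $d=\zeta_+-\zeta_-$ into orthogonal cone vectors and the contraction built from their (orthogonal) support projections is exactly Araki's cone version of the Powers--St{\o}rmer inequality, the remaining estimate $\operatorname{Re}\langle s,\zeta_++\zeta_-\rangle\ge\|\zeta_+\|^2+\|\zeta_-\|^2$ coming from self-duality since $s\mp d=2\Omega_{\nu/\omega}\in\cH^+$. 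You were also right to insist that the lower bound be established for arbitrary pairs of cone vectors independently of existence, since injectivity and limit arguments hinge on it, and your derivation of surjectivity from uniqueness applied to $\omega_\xi$, $\xi\in\cH^+$, is clean.

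The genuine gap is in the existence step. You posit a cyclic and separating vector $\Omega_0$ with $\cH^+=\overline{\Delta^{1/4}\mathfrak{V}_+\Omega_0}$, but the abstract axioms (i)--(v) of the paper do not supply one: such a vector exists iff $\mathfrak{V}$ is $\sigma$-finite, whereas the theorem is stated for an arbitrary von Neumann algebra in standard form. Moreover, the ``polar decomposition'' $\psi=u'\Omega_\omega$ with $u'$ a partial isometry in $\mathfrak{V}'$ and $\Omega_\omega\in\cH^+$ is not an off-the-shelf tool: in Haagerup's development this lemma is proved essentially alongside the representation theorem itself, so invoking it for existence is close to circular. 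The standard repair, consistent with the ordering you already chose, is to prove the lower bound first, deduce that $\{\omega_\xi:\xi\in\cH^+\}$ is norm-closed in $\mathfrak{V}_*^+$ (a Cauchy net of such functionals forces a Cauchy net of cone representatives, and $\cH^+$ is closed), and then prove this set is norm-dense, e.g.\ by a Radon--Nikodym argument showing that any normal positive functional dominated by a multiple of some $\omega_\xi$, $\xi\in\cH^+$, is again of this form; this route never needs faithfulness. Your $\varepsilon$-perturbation fix (mixing with a faithful normal state and passing to the limit using the lower bound) is valid, but only where faithful normal states exist, i.e.\ again in the $\sigma$-finite case. Since the two algebras the paper actually uses in Appendix C are $\cB(\cH)$ with $\cH$ separable and $L^\infty$ of a $\sigma$-finite measure space, your argument does suffice for the paper's purposes; as a proof of the theorem as stated, it does not.
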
	
We now recall the definition of Araki's relative modular operator \cite{Araki1976}. For $\nu,\omega\in \mathfrak{V}_*^+$, define $S_{\nu|\omega}$ on the domain $\mathfrak{V}\Omega_\omega+(\mathfrak{V}\Omega_\omega)^\perp$ by
\begin{align}\label{SS}
	S_{\nu|\omega}(A\Omega_\omega+\Theta)=P_\omega A^*\Omega_\nu,
\end{align}
for all $A\in \mathfrak{V}$, $\Theta\in(\mathfrak{V}\Omega_\omega)^\perp$, and where $P_\omega$ is the support of $\omega$, namely the projection defined by
\begin{align}
	P_\omega:=\inf \{P\in\mathfrak{V}|P\text{ is a projection and }\omega(I-P)=0\}
\end{align}
$S_{\nu|\omega}$ is a densely defined anti-linear operator. It is closable and we denote its closure by the same symbol. The positive operator
\begin{align}\label{Delta}
	\Delta_{\nu|\omega}:=S_{\nu|\omega}^*S_{\nu|\omega}
\end{align}
is called the \textit{relative modular operator}. We denote $\Delta_\omega:=\Delta_{\omega|\omega}$. 
We give two examples of states which are going to be useful in our proof:
\begin{example}[Classical probability theory]
	Let $(\Omega, \cF, \mu)$ be a $\sigma$-finite measure space. Define the von Neumann algebra $L^{\infty}(\Omega, \cF,\mu)$ of bounded, measurable functions acting by pointwise left multiplication on the Hilbert space $L^2(\Omega,\cF,\mu)$ of square integrable functions. This means that to any $f\in L^{\infty}(\Omega, \cF,\mu)$, one can associate the operator $M_f: L^2(\Omega,\cF,\mu)\to L^2(\Omega,\cF,\mu)$ defined by
	\begin{align}
		M_f(g)(x)=f(x)g(x),\quad \forall g\in L^2(\Omega,\cF,\mu), \quad x\in\Omega.
	\end{align}
	The map $M:L^\infty(\Omega,\cF,\mu)\ni f\mapsto M_f$ is a faithful representation, so that
	the quadruple $(M(L^\infty(\Omega,\cF,\mu)),L^2(\Omega,\cF,\mu),\bar{\hphantom{n} },L^2(\Omega,\cF,\mu)_+)$ is in standard form, where $\bar{\hphantom{n} }$ is the usual complex conjugation and $L^2(\Omega,\cF,\mu)_+$ denotes the set of positive square-integrable functions with respect to $\mu$. We use the same notation for normal states and their associated probability measures, which is justified by the above mentioned isometry. Now any function $f\in L^1(\Omega, \cF,\mu)$ such that $\|f\|_{1}=1$ represents a state $\omega_f$ on $L^\infty(\Omega,\cF,\mu)$ via the relation:
	\begin{align}\label{statesclass}
		\omega_f(h):=\int_{\Omega} f(x)h(x)\mu(dx), \quad \forall h\in L^\infty(\Omega,\cF,\mu)
	\end{align}	
	Indeed, one has $\int_\omega f(x) 1 dx=\|f\|_{1}=1$, $1$ encoding for the identity operator in $L^\infty(\Omega, \cF,\mu)$, and one easily checks that
	\begin{align}
		\omega_f(h)=\langle \sqrt{f}, M_h\sqrt{f}\rangle=:\tr(|\sqrt{f}\rangle\langle \sqrt{f}| M_h),
	\end{align}	
	So that $\omega_f$ is indeed normal. Actually any normal functional can be written in this way, so that $L^\infty(\Omega,\cF,\mu)_*\cong L^1(\Omega,\cF,\mu)$. This is the space of (complex) measures absolutely continuous with respect to $\mu$. Then for any measure $\nu \in L_+^\infty(\Omega,\cF,\mu)_*$, one associates a positive integrable function $f$ which is the Radon-Nikodym derivative of $\nu$ with respect to $\mu$:
	\begin{align}
		f:=\frac{d\nu}{d\mu}.
	\end{align}
	One easily checks then that the vector $\Omega_\nu$ defined in \Cref{Omega2} reduces to
	\begin{align}
		\Omega_\nu=\left(\frac{d\nu}{d\mu}\right)^{1/2}.
	\end{align}
	Using \Cref{SS} and \Cref{Delta}, one then finds that the relative modular operator of $\nu$ with respect to $\mu$ is defined by:
	\begin{align}
		\Delta_{\nu|\mu}(f):=\frac{d\nu}{d\mu}f,\qquad f\in L^2(\Omega,\cF,\mu).
	\end{align}
\end{example}
\begin{example}[Quantum systems]
	Take $\mathfrak{V}:=\cB(\mathcal{H})$ to be the von Neumann algebra of all bounded operators on a separable Hilbert space $\cH$. Then any density operator $\rho$ provides a state via \cref{statenormaldef}. Then $\mathfrak{V}_*$ is identified with the space of trace-class operators acting on $\cH$. The map $\pi: \cB(\cH)\ni A\mapsto L_A$, where for any $A\in\cB(\cH)$ $L_A:\cT_2(\mathcal{H})\mapsto \cT_2(\mathcal{H})$ is the operator of left multiplication by $A$, is a faithful representation, turning $(\pi(\cB(\cH)),\cT_2(\cH),\hphantom{n}^*,\cT_2(\cH)_+)$ into a standard form, where $\hphantom{n}^*$ is the usual adjoint, and $T_2(\cH)_+$ is the space of non-negative Hilbert Schmidt operators on $\cH$. As already discussed in \Cref{statenormaldef} any positive, normal functional $\omega$ on $\cB(\cH)$ can be associated with a trace-class operator $\rho$ so that for any $A\in\cB(\cH)$,
	\begin{align}
		\omega(A)=\tr(\rho A)=\tr(\sqrt{\rho}A \sqrt{\rho})=\langle \sqrt{\rho},L_A(\sqrt{\rho})\rangle_{HS},
	\end{align}
	where $\langle.,.\rangle_{HS}$ denotes the Hilbert-Schmidt inner product on $\cT_2(\cH)$. Hence, one identifies $\Omega_{\omega}$ with $\sqrt{\rho}$. Then for any two positive functionals $\omega,\nu$ with associated positive, trace-class operators $\rho,\sigma$:
	\begin{align}
		\Delta_{\rho|\sigma}(A):=\Delta_{\omega|\nu}(A)=\rho A\sigma^{-1}.
	\end{align}
\end{example}		
\subsection{Araki's relative entropy}

In order to rigorously prove \Cref{theinequality} we need to introduce Araki's relative entropy \cite{Araki1976,Ohya}. For any two positive normal functionals on a von Neumann algebra $\mathfrak{V}$, we denote by $\mu_{\nu|\omega}$ the spectral measure for $-\ln \Delta_{\nu|\omega}$ with respect to the state $\omega$. This means that it is the only probability measure on the spectrum of $\Delta_{\nu|\omega}$ such that for any bounded measurable function $f$ on $\spec(\Delta_{\nu|\omega})$,
\begin{align}
	\langle \Omega_\omega,f(-\ln\Delta_{\nu|\omega})\Omega_\omega\rangle=\int_{\spec(\Delta_{\nu|\omega})}f(x)\mu_{\mu|\nu}(dx).
\end{align}
Then Araki's \textit{relative entropy} of $\omega$ with respect to $\nu$ is defined by
\begin{align}
	\operatorname{Ent}(\omega|\nu):=\left\{
	\begin{aligned}
		&-\langle\Omega_\omega, \ln(\Delta_{\nu|\omega})\Omega_\omega\rangle=\int_{\spec(\Delta_{\nu|\omega})}x \mu_{\nu|\omega}(dx)\quad \omega<<\nu\\&+\infty\qquad\text{ otherwise}.
	\end{aligned}
	\right.
\end{align}
where $\omega<<\nu$ means that $\omega$ is normal with respect to $\nu$, i.e. that $P_\omega\le P_\nu$. Roughly speaking, the above quantity characterizes the ``distance" between two positive normal functionals. As advertised at the beginning of this section, Araki's relative entropy reduces to the classical Kullback-Leibler divergence in the case of classical probability distributions absolutely continuous with respect to a given measure, and to Umegaki's quantum relative entropy in the case of density operators $\rho,\sigma$ on a Hilbert space $\cH$ such that $\supp\rho\subset \supp\sigma$, which makes it very attractive from an abstract point of vue:
\begin{example}[Classical probability theory, continued]
	Araki's relative entropy reduces to the classical Kullback-Leibler divergence
	\begin{align}\label{KLdiv}
		D_{\rm{KL}}(f||g):=\operatorname{Ent}(\omega_f||\omega_g)=\int_\Omega f(x)\ln\frac{f(x)}{g(x)}\mu(dx)
	\end{align}	
	for states $\omega_f,\omega_g$ with associated integrable functions $f,g$ defined as in \Cref{statesclass}.
\end{example}
\begin{example}[Quantum systems, continued]
	It also reduces to Umegaki's quantum relative entropy 
	\begin{align}
		D(\rho||\sigma):=\operatorname{Ent}(\omega|\nu)=\tr(\rho(\ln \rho-\ln\sigma))
	\end{align}
	for faithful states $\omega, \nu$ on $\cB(\cH)$ with associated positive density operators $\rho,\sigma$ such that $\supp \rho\subseteq \supp \sigma$.
\end{example}
Moreover, Araki's relative entropy satisfies the following useful properties:
\begin{theorem}[\cite{Ohya} Theorem 5.20]
	Let $\mathfrak{V}$, $\mathfrak{V}_1$ and $\mathfrak{V}_2$ be three von Neumann algebras, let $\omega$ be a normal state on $\mathfrak{V}$, $\omega_1,\sigma_1$ be normal states on $\mathfrak{V}_1$ and $\omega_2,\sigma_2$ be normal states on $\mathfrak{V}_2$. Then\footnote{For an exposition of tensor products of von Neumann algebras and tensor products of states, see e.g. \cite{Kadison}}:
	\begin{align}
		\operatorname{Ent}(\omega||\omega)&=0\\
		\operatorname{Ent}(\omega_{1}\otimes \omega_2||\sigma_1\otimes\sigma_2)&= \operatorname{Ent}(\omega_1||\sigma_1)+ \operatorname{Ent}(\omega_2||\sigma_2)
	\end{align}
	
\end{theorem}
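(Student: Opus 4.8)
The plan is to reduce both identities to two structural facts about the relative modular operator $\Delta_{\nu|\omega}$ and the representative vectors $\Omega_\omega$ of \Cref{Omega2,SS,Delta}: invariance of $\Omega_\omega$ under the modular flow, and multiplicativity of $\Delta_{\nu|\omega}$ under tensor products. First I would dispatch the self-distance identity $\operatorname{Ent}(\omega||\omega)=0$. The key input is that the representative vector is fixed by the modular flow, $\Delta_\omega^{it}\Omega_\omega=\Omega_\omega$ for all $t\in\RR$, where $\Delta_\omega:=\Delta_{\omega|\omega}$ — a basic fact of Tomita–Takesaki theory, consistent with \Cref{SS} since $S_{\omega|\omega}\Omega_\omega=P_\omega\Omega_\omega=\Omega_\omega$ (the last equality because $\omega(I-P_\omega)=\|(I-P_\omega)^{1/2}\Omega_\omega\|^2=0$ and $I-P_\omega\ge 0$). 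Thus $\Omega_\omega$ lies in the eigenspace of $\Delta_\omega$ for the eigenvalue $1$, so $\ln(\Delta_\omega)\Omega_\omega=0$ and the spectral measure $\mu_{\omega|\omega}$ of $-\ln\Delta_\omega$ at $\Omega_\omega$ is the Dirac mass at $0$. Hence $\operatorname{Ent}(\omega||\omega)=-\langle\Omega_\omega,\ln(\Delta_\omega)\Omega_\omega\rangle=0$.

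For the additivity, the central step is the factorization of the relative modular operator under the von Neumann algebra tensor product $\mathfrak{V}_1\otimes\mathfrak{V}_2$ acting on $\cH_1\otimes\cH_2$:
\begin{align}
\Delta_{\sigma_1\otimes\sigma_2\,|\,\omega_1\otimes\omega_2}=\Delta_{\sigma_1|\omega_1}\otimes\Delta_{\sigma_2|\omega_2}.
\end{align}
I would first note that the tensor product of the two standard forms is again a standard form for $\mathfrak{V}_1\otimes\mathfrak{V}_2$, and that by uniqueness in \Cref{Omega2} the representative vector factorizes, $\Omega_{\omega_1\otimes\omega_2}=\Omega_{\omega_1}\otimes\Omega_{\omega_2}$, since $(\omega_1\otimes\omega_2)(A_1\otimes A_2)=\langle\Omega_{\omega_1}\otimes\Omega_{\omega_2},(A_1\otimes A_2)\,\Omega_{\omega_1}\otimes\Omega_{\omega_2}\rangle$ and $\Omega_{\omega_1}\otimes\Omega_{\omega_2}$ lies in the natural cone. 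Evaluating \Cref{SS} on the dense set of product vectors $A_1\Omega_{\omega_1}\otimes A_2\Omega_{\omega_2}$ gives $S_{\sigma_1\otimes\sigma_2|\omega_1\otimes\omega_2}=S_{\sigma_1|\omega_1}\otimes S_{\sigma_2|\omega_2}$ (both anti-linear, hence their tensor product is well defined), and forming $\Delta=S^*S$ as in \Cref{Delta} yields the displayed factorization.

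Granting this, the conclusion follows from functional calculus: for commuting positive factors $\ln(\Delta_{\sigma_1|\omega_1}\otimes\Delta_{\sigma_2|\omega_2})=\ln\Delta_{\sigma_1|\omega_1}\otimes I+I\otimes\ln\Delta_{\sigma_2|\omega_2}$, so that
\begin{align}
\operatorname{Ent}(\omega_1\otimes\omega_2||\sigma_1\otimes\sigma_2)
&=-\langle\Omega_{\omega_1}\otimes\Omega_{\omega_2},\big(\ln\Delta_{\sigma_1|\omega_1}\otimes I+I\otimes\ln\Delta_{\sigma_2|\omega_2}\big)\,\Omega_{\omega_1}\otimes\Omega_{\omega_2}\rangle\nonumber\\
&=-\langle\Omega_{\omega_1},\ln\Delta_{\sigma_1|\omega_1}\,\Omega_{\omega_1}\rangle\,\|\Omega_{\omega_2}\|^2-\|\Omega_{\omega_1}\|^2\,\langle\Omega_{\omega_2},\ln\Delta_{\sigma_2|\omega_2}\,\Omega_{\omega_2}\rangle\nonumber\\
&=\operatorname{Ent}(\omega_1||\sigma_1)+\operatorname{Ent}(\omega_2||\sigma_2),\nonumber
\end{align}
where in the last line I used $\|\Omega_{\omega_i}\|^2=\omega_i(I)=1$ since the $\omega_i$ are states. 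The support projections also factorize, $P_{\omega_1\otimes\omega_2}=P_{\omega_1}\otimes P_{\omega_2}$ and likewise for $\sigma$, so $\omega_1\otimes\omega_2\ll\sigma_1\otimes\sigma_2$ if and only if $\omega_i\ll\sigma_i$ for both $i$; this ensures the $+\infty$ convention matches on both sides and the finite computation above applies exactly when both summands are finite.

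I expect the main obstacle to be the rigorous justification of the factorization of $S$ and hence of $\Delta$. One must verify that the algebraic tensor product $\mathfrak{V}_1\Omega_{\omega_1}\odot\mathfrak{V}_2\Omega_{\omega_2}$ is a core on which $S_{\sigma_1|\omega_1}\otimes S_{\sigma_2|\omega_2}$ is closable with closure $S_{\sigma_1\otimes\sigma_2|\omega_1\otimes\omega_2}$, that the tensor product of standard forms is genuinely a standard form for $\mathfrak{V}_1\otimes\mathfrak{V}_2$, and that the unbounded logarithm splits additively on the appropriate domain. These are standard facts of modular theory underlying Theorem 5.20 of \cite{Ohya}, but they require the careful domain bookkeeping that I would defer to the cited references rather than reproduce in full.
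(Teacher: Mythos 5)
The paper does not actually prove this statement: it is imported verbatim as Theorem 5.20 of \cite{Ohya} and used as a black box in the appendix, so there is no in-paper argument to compare against. Your proposal is the standard modular-theoretic proof found in that reference, and it is correct in outline: factorization of the representing vector $\Omega_{\omega_1\otimes\omega_2}=\Omega_{\omega_1}\otimes\Omega_{\omega_2}$ via the uniqueness clause, factorization of $S$ and hence of $\Delta$, additivity of $\ln$ for commuting tensor factors, and the support bookkeeping for the $+\infty$ convention. One shortcut worth flagging in the first part: $S_{\omega|\omega}\Omega_\omega=\Omega_\omega$ alone does not make $\Omega_\omega$ an eigenvector of $\Delta_\omega=S^*S$ --- for a closed antilinear $S$ it only yields $\langle\Omega_\omega,\Delta_\omega\Omega_\omega\rangle=\|S_{\omega|\omega}\Omega_\omega\|^2=1$, which does not force $\Delta_\omega\Omega_\omega=\Omega_\omega$. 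You additionally need $S_{\omega|\omega}^*\Omega_\omega=\Omega_\omega$ (the same computation applied to the commutant map $A'\Omega_\omega\mapsto P'_\omega A'^*\Omega_\omega$), after which $\Delta_\omega\Omega_\omega=S^*S\,\Omega_\omega=\Omega_\omega$ and your spectral argument goes through; alternatively the invariance $\Delta_\omega^{it}\Omega_\omega=\Omega_\omega$ that you invoke is indeed standard and suffices. Two smaller points you correctly defer but should be aware of: the natural cone of the tensor product standard form is not the closed span of products of cone vectors, but it does contain $\Omega_{\omega_1}\otimes\Omega_{\omega_2}$ (it is the closure of vectors $AJAJ(\Omega_1\otimes\Omega_2)$ with $J=J_1\otimes J_2$, and product operators $A_1\otimes A_2$ already produce all products of cone elements), which is what your uniqueness argument needs; and in the case $\omega_i\ll\sigma_i$ with one summand infinite, additivity of the spectral integrals still holds because the spectral measure of $-\ln\Delta$ in the product vector is the convolution of the two marginal measures and the negative parts of these integrals are always finite for states, so no $\infty-\infty$ ambiguity arises. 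With those caveats, which are exactly the closure and domain technicalities handled in \cite{Ohya}, your proof is the intended one.
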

The following theorem by Uhlmann is at the heart of our proof:
\begin{theorem}[Uhlmann's monotonicity] Let $\mathcal{V}$ and $\mathcal{W}$ be two von Neumann algebras, and let $\omega,\nu$ be normal states on $\mathcal{W}$. Let $\Phi:\mathcal{V}\rightarrow \mathcal{W}$ be a Schwarz map. Then
	\begin{align}
		\operatorname{Ent}(\omega\circ\Phi||\nu\circ\Phi)\le \operatorname{Ent}(\omega||\nu) 
	\end{align}
\end{theorem}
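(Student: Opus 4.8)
The plan is to prove the inequality by Uhlmann's contraction method, reducing it to the operator monotonicity and operator concavity of the power functions $x\mapsto x^{s}$ on $[0,\infty)$. We may assume $\operatorname{Ent}(\omega||\nu)<\infty$ (equivalently $\omega<<\nu$), since otherwise there is nothing to prove, and we realize $\mathcal{W}$ and $\mathcal{V}$ in their standard forms on $\mathcal{H}_{\mathcal{W}}$ and $\mathcal{H}_{\mathcal{V}}$, writing $\Omega_\omega,\Omega_\nu$ for the vector representatives of $\omega,\nu$ and $\Omega_{\omega\circ\Phi},\Omega_{\nu\circ\Phi}$ for those of $\omega\circ\Phi,\nu\circ\Phi$. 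The first step is to build the contraction associated with $\Phi$: one defines $\gamma:\mathcal{H}_{\mathcal{V}}\to\mathcal{H}_{\mathcal{W}}$ on the dense subspace $\pi_{\mathcal{V}}(\mathcal{V})\Omega_{\omega\circ\Phi}$ by
\begin{align}\label{eq:gamma}
\gamma\,\pi_{\mathcal{V}}(A)\,\Omega_{\omega\circ\Phi}:=\pi_{\mathcal{W}}(\Phi(A))\,\Omega_\omega,\qquad A\in\mathcal{V},
\end{align}
and extends it by zero on the orthogonal complement. The Schwarz inequality $\Phi(A)^*\Phi(A)\le\Phi(A^*A)$ then gives
\begin{align}
\|\gamma\,\pi_{\mathcal{V}}(A)\Omega_{\omega\circ\Phi}\|^{2}=\omega\!\left(\Phi(A)^*\Phi(A)\right)\le\omega\!\left(\Phi(A^*A)\right)=(\omega\circ\Phi)(A^*A)=\|\pi_{\mathcal{V}}(A)\Omega_{\omega\circ\Phi}\|^{2},
\end{align}
so that $\gamma$ is a well-defined contraction with $\gamma\,\Omega_{\omega\circ\Phi}=\Omega_\omega$ (take $A=I$).

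Next I would reduce the entropy inequality to a comparison of the two scalar functions
\begin{align}\label{eq:functions}
f(s):=\langle\Omega_\omega,\Delta_{\nu|\omega}^{\,s}\,\Omega_\omega\rangle,\qquad f_\Phi(s):=\langle\Omega_{\omega\circ\Phi},\Delta_{\nu\circ\Phi|\omega\circ\Phi}^{\,s}\,\Omega_{\omega\circ\Phi}\rangle,\qquad s\in[0,1].
\end{align}
Writing each as a moment $\int\lambda^{s}\,d\mu(\lambda)$ of the appropriate spectral probability measure, one has $f(0)=f_\Phi(0)=1$, while a monotone-convergence argument for the difference quotients shows that the one-sided derivatives at $s=0$ exist and equal $f'(0)=\langle\Omega_\omega,\ln\Delta_{\nu|\omega}\Omega_\omega\rangle=-\operatorname{Ent}(\omega||\nu)$ and $f_\Phi'(0)=-\operatorname{Ent}(\omega\circ\Phi||\nu\circ\Phi)$. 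Since the two functions agree at $s=0$, the desired inequality $\operatorname{Ent}(\omega\circ\Phi||\nu\circ\Phi)\le\operatorname{Ent}(\omega||\nu)$ is equivalent to $f_\Phi'(0)\ge f'(0)$, which will follow once I show
\begin{align}\label{eq:compare}
f_\Phi(s)\ge f(s)\qquad\text{for all }s\in[0,1].
\end{align}

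To prove \eqref{eq:compare} I would use $\gamma\,\Omega_{\omega\circ\Phi}=\Omega_\omega$ to rewrite
\begin{align}
f(s)=\langle\gamma\,\Omega_{\omega\circ\Phi},\Delta_{\nu|\omega}^{\,s}\,\gamma\,\Omega_{\omega\circ\Phi}\rangle=\langle\Omega_{\omega\circ\Phi},\gamma^*\Delta_{\nu|\omega}^{\,s}\gamma\,\Omega_{\omega\circ\Phi}\rangle,
\end{align}
so that it suffices to establish the operator inequality $\gamma^*\Delta_{\nu|\omega}^{\,s}\gamma\le\Delta_{\nu\circ\Phi|\omega\circ\Phi}^{\,s}$ on $\mathcal{H}_{\mathcal{V}}$. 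I would deduce this from the single inequality
\begin{align}\label{eq:core}
\gamma^*\,\Delta_{\nu|\omega}\,\gamma\le\Delta_{\nu\circ\Phi|\omega\circ\Phi}
\end{align}
in two steps: first, the operator monotonicity of $x\mapsto x^{s}$ for $s\in[0,1]$ upgrades \eqref{eq:core} to $(\gamma^*\Delta_{\nu|\omega}\gamma)^{s}\le\Delta_{\nu\circ\Phi|\omega\circ\Phi}^{\,s}$; second, the operator Jensen inequality of Hansen and Pedersen, applied to the operator concave function $x\mapsto x^{s}$ (which vanishes at $0$) and the contraction $\gamma$, gives $\gamma^*\Delta_{\nu|\omega}^{\,s}\gamma\le(\gamma^*\Delta_{\nu|\omega}\gamma)^{s}$. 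Chaining the two yields $\gamma^*\Delta_{\nu|\omega}^{\,s}\gamma\le\Delta_{\nu\circ\Phi|\omega\circ\Phi}^{\,s}$ and hence \eqref{eq:compare}.

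The main obstacle is the core inequality \eqref{eq:core} comparing the two relative modular operators through $\gamma$. I expect to prove it as a form inequality using $\Delta_{\nu|\omega}=S_{\nu|\omega}^*S_{\nu|\omega}$ (\Cref{Delta}) and the definition of $S_{\nu|\omega}$ (\Cref{SS}): for $\zeta=\pi_{\mathcal{V}}(A)\Omega_{\omega\circ\Phi}$ one computes $S_{\nu|\omega}\gamma\zeta=P_\omega\,\Phi(A)^*\Omega_\nu$ and $S_{\nu\circ\Phi|\omega\circ\Phi}\zeta=P_{\omega\circ\Phi}A^*\Omega_{\nu\circ\Phi}$, so that \eqref{eq:core} amounts to $\|P_\omega\Phi(A)^*\Omega_\nu\|^2\le\|P_{\omega\circ\Phi}A^*\Omega_{\nu\circ\Phi}\|^2$, whose clean part is governed by the Schwarz bound $\Phi(A)\Phi(A)^*\le\Phi(AA^*)$ applied with the functional $\nu$. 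The delicate point, and where the bulk of the work lies, is the bookkeeping of the support projections $P_\omega$ and $P_{\omega\circ\Phi}$: one must verify that $\gamma$ restricts to a genuine contraction between the relevant support subspaces and that these projections are compatible with $\gamma$, so that the form inequality closes on a common core. Once \eqref{eq:core} is established on such a core, the unboundedness of the modular operators is handled through the spectral calculus used in \eqref{eq:functions}, and the remaining steps are just the two named operator inequalities, completing the proof.
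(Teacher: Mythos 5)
The paper does not prove this statement: it is quoted as Uhlmann's theorem (the standard sources being Uhlmann's interpolation paper and the treatments in \cite{Araki1976,Ohya}), so there is no in-paper proof to compare against. Your proposal reconstructs the classical argument from that literature, and it is essentially correct: the contraction $\gamma$ intertwining the GNS vectors, the core form inequality $\gamma^*\Delta_{\nu|\omega}\gamma\le\Delta_{\nu\circ\Phi|\omega\circ\Phi}$, its upgrade to powers $s\in(0,1)$ by operator monotonicity plus the Hansen--Pedersen Jensen inequality, and the recovery of the relative entropies as right derivatives at $s=0$ of $s\mapsto\langle\Omega,\Delta^s\Omega\rangle$ (monotone convergence applies since $(\lambda^s-1)/s$ decreases to $\ln\lambda$ and is dominated by $\lambda-1$, with $\int\lambda\,d\mu=\|P_\omega\Omega_\nu\|^2\le 1$; note also that one may assume $\omega\ll\nu$, which forces $\omega\circ\Phi\ll\nu\circ\Phi$ and guarantees $f(0)=f_\Phi(0)=1$).

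The one step you leave open---the support bookkeeping in the core inequality---does close, and more easily than you anticipate. Since $P_{\omega\circ\Phi}\Omega_{\omega\circ\Phi}=\Omega_{\omega\circ\Phi}$, the vector $\zeta=\pi_{\mathcal{V}}(A)\Omega_{\omega\circ\Phi}$ is unchanged when $A$ is replaced by $AP_{\omega\circ\Phi}$, and both $S_{\nu\circ\Phi|\omega\circ\Phi}\zeta$ and $\gamma\zeta$ depend only on $\zeta$, not on the representative $A$: well-definedness of $\gamma$ on the quotient is exactly your contraction estimate applied to $A(1-P_{\omega\circ\Phi})$, using $(\omega\circ\Phi)\bigl((1-P_{\omega\circ\Phi})A^*A(1-P_{\omega\circ\Phi})\bigr)=0$, and well-definedness of $S$ is its closability. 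With this representative one has $P_{\omega\circ\Phi}A^*\Omega_{\nu\circ\Phi}=A^*\Omega_{\nu\circ\Phi}$, so the chain
\begin{align}
\|P_\omega\Phi(A)^*\Omega_\nu\|^2\le\nu\bigl(\Phi(A)\Phi(A)^*\bigr)\le\nu\bigl(\Phi(AA^*)\bigr)=\|A^*\Omega_{\nu\circ\Phi}\|^2=\|P_{\omega\circ\Phi}A^*\Omega_{\nu\circ\Phi}\|^2
\end{align}
closes the form inequality on $\dom(S_{\nu\circ\Phi|\omega\circ\Phi})$, which is a core for $\Delta_{\nu\circ\Phi|\omega\circ\Phi}^{1/2}$ in graph norm, whence the inequality extends. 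Three minor cautions: $\pi_{\mathcal{V}}(\mathcal{V})\Omega_{\omega\circ\Phi}$ is dense only in the cyclic subspace (your zero-extension of $\gamma$ already accommodates this, but the word ``dense'' should be dropped); $\Phi$ must be normal for $\omega\circ\Phi,\nu\circ\Phi$ to be normal states with vector representatives (the paper's statement is silent on this as well); and for unbounded $\Delta_{\nu|\omega}$ the intermediate operator $(\gamma^*\Delta_{\nu|\omega}\gamma)^s$ must be understood as the form-defined (Friedrichs) extension---the standard way to sidestep this is to deduce $\gamma^*\Delta_{\nu|\omega}^{s}\gamma\le\Delta_{\nu\circ\Phi|\omega\circ\Phi}^{s}$ directly from the integral representation $x^s=\frac{\sin\pi s}{\pi}\int_0^\infty\lambda^{s-1}\,x(x+\lambda)^{-1}\,d\lambda$ and the resolvent form of the core inequality, as is done in \cite{Ohya}.
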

\section{Proof of the inequality \cref{theinequality}}\label{proofineqpos}
In this section we rigorously prove the inequality \cref{theinequality}. In order to do so, we will use Uhlmann's monotonicity of the entropy power under Schwarz mapping in the context of general von Neumann algebras. 
Using this theorem, and with the notations of \Cref{concavity} we first show the following inequality:
\begin{align}\label{entropyineq}
	\operatorname{Ent}( \omega_{\rho \ast{g_{t/2}}} || \omega_{\rho_{R_j}^{(\theta\alpha)}\ast {g_{R_j,t/2}^{(\theta\beta)}}}) \leq D(\rho||\rho_{R_j}^{(\theta\alpha)})+ D_{\rm{KL}}(g_{t/2}||g_{R_j,t/2}^{(\theta\beta)})
\end{align}
This follows from Uhlmann's monotonicity theorem if we can prove that for any $\rho\in\cD(\cH)$ and any positive $g\in L^1(\RR^{2n},\cB(\RR^{2n}),\lambda)$, with $\|g\|_{1}=1$,
\begin{align}
	\omega_{\rho*g} = (\omega_\rho\otimes \omega_g) \circ \Phi
\end{align}
where $\Phi$ is a unital, completely positive map from $\cB(\cH)$ to$\cB(\cH)\otimes L^{\infty}(\RR^{2n},\cB(\RR^{2n}),\lambda)$, with $\lambda$ being the Lebesgue measure on ${\mathbb{R}}^{2n}$.
 
Let us see how $\omega_{\rho}\otimes \omega_{g}$ and $\omega_{\rho*g}$ are defined. $\omega_{\rho}\otimes \omega_{g}$ acts on tensor product elements of $\mathcal B(\mathcal H)\otimes L^\infty(\RR^{2n}) $ by 
\begin{align}
	(\omega_{\rho}\otimes \omega_{g}) (A \otimes f)=\tr(\rho A) \times \int_{\RR^{2n}} g(z)f(z)dz,\quad A\in\cB(\cH), f\in L^{\infty}(\RR^{2n})
\end{align}
and extension by continuity \cite{Kadison}. But $\mathcal B(\mathcal H)\otimes L^\infty(\RR^{2n}) $ can be identified with the set $L^\infty \big(\RR^{2n},\mathcal B(\mathcal H)\big)$ of measurable functions from $\RR^{2n}$ to $\cB(\cH)$, in which case an element of it is a map $z\mapsto A_z$ and $\omega_\rho\otimes \omega_g$ acts as
\begin{align}
	(\omega_\rho\otimes \omega_g)(z\mapsto A_z) = \int_{\RR^{2n}} \tr(\rho A_z) \,g(z)\,\mathrm dz
\end{align}
(which of course is consistent with the above if $A_z=f(z) A$). On the other hand, $\omega_{\rho \ast g}$ acts as
\begin{align}
	\omega_{\rho\ast g}(A) = \int_{\RR^{2n}} g(z) \,\tr(\rho V(-z) A V(z))\, \mathrm dz,
\end{align}
so to prove our statement it is enough to show that the map $\Phi$ from $\mathcal B (\mathcal H)$ to $L^\infty\big(\RR^{2n}, \mathcal B(\mathcal H)\big) $ defined as
\begin{align}
	\Phi : A \mapsto (z\mapsto V(-z) A V(z))
\end{align}
is unital and completely positive. The first requirement is obvious. To show complete positivity, consider a positive matrix $(A_{i,j})_{i,j=1}^N$ of elements of $\mathcal B(\mathcal H) $, in the sense that for any $(\varphi_i)_{i=1}^N$ with each $\varphi_i\in\mathcal H$, one has 
\[\sum_{i,j} \langle \varphi_i, A_{i,j}\varphi_j \rangle \geq 0.\]
We then prove that $(\Phi(A_{i,j}))_{i,j=1}^N$ is positive when acting on $L^2(\RR,\mathcal H)$: Consider $\psi_i=(z\mapsto \psi_i(z))\in L^2(\RR,\mathcal H)$, therefore
\begin{align*}
	\sum_{i,j} \langle \psi_i, \Phi(A_{i,j})\psi_j \rangle = \sum_{i,j} \int \langle \psi_i(z), V_z^*A_{i,j}V_z \psi_j(z)\rangle \, \mathrm d z = \int \sum_{i,j} \langle V_z \psi_i(z), A_{i,j}V_z \psi_j(z)\rangle  \, \mathrm d z \geq 0.
\end{align*}
This concludes the proof of \Cref{entropyineq}. However, recall that we proved in \Cref{astast} that
$
\rho_{P_j}^{(\theta \alpha)}\ast g^{(\theta \beta)}_{P_j,t/2} =(\rho_t)_{P_j}^{\left(\theta(\alpha + \beta)\right)}.$ Hence, \Cref{entropyineq} can be rewritten as:
\begin{align}\label{entropyineq2}
	\operatorname{Ent}( \omega_{\rho \ast{g_{t/2}}} || \omega_{\rho_{R_j}\ast g_{R_j,t/2}^{(\theta(\alpha+\beta))}}) \leq D(\rho||\rho_{R_j}^{(\theta\alpha)})+ D_{\rm{KL}}(g_{t/2}||g_{R_j,t/2}^{(\theta\beta)}).
\end{align}

\noindent
In order to obtain the inequality (\ref{theinequality}), we use the following lemma which can be proved by simply using Taylor expansion to second order.
\begin{lemma}
	Let $f,h:\RR\rightarrow \RR$ be two twice differentiable functions such that $h(0)=f(0)=h'(0)=f'(0)=0$ and $f\le h$. Hence, $f''(0)\le h''(0)$.
\end{lemma}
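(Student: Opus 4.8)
The plan is to reduce the statement to a single nonnegative function and then read off the sign of the second derivative from its second-order Taylor expansion. First I would set $g := h - f$. By hypothesis $g$ is twice differentiable with $g(0) = 0$, $g'(0) = 0$, and $g \ge 0$ everywhere (since $f \le h$); moreover the desired conclusion $f''(0) \le h''(0)$ is equivalent to $g''(0) \ge 0$. Hence it suffices to prove that a twice-differentiable, everywhere-nonnegative function vanishing to first order at the origin has $g''(0) \ge 0$.

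Next I would invoke Taylor's theorem with the Peano form of the remainder at the origin. Since $g$ is twice differentiable at $0$, one has, as $x \to 0$,
\begin{align*}
g(x) = g(0) + g'(0)\,x + \tfrac{1}{2}\,g''(0)\,x^2 + o(x^2) = \tfrac{1}{2}\,g''(0)\,x^2 + o(x^2),
\end{align*}
where the vanishing of $g(0)$ and $g'(0)$ has been used.

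Finally, for every $x \ne 0$ I would divide the inequality $g(x) \ge 0$ by $x^2 > 0$ to obtain
\begin{align*}
\tfrac{1}{2}\,g''(0) + \frac{o(x^2)}{x^2} \ge 0,
\end{align*}
and then let $x \to 0$: the remainder ratio tends to $0$, leaving $g''(0) \ge 0$, which is exactly $h''(0) - f''(0) \ge 0$. There is no genuine obstacle here; the only point meriting a little care is to use the Peano form of the remainder, which requires only twice-differentiability at the single point $0$, rather than the Lagrange form, which would demand stronger regularity on a neighborhood. The Peano form matches precisely the hypotheses at hand, so the argument goes through verbatim.
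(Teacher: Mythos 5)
Your proof is correct and follows essentially the same route as the paper, which simply remarks that the lemma ``can be proved by simply using Taylor expansion to second order'': you reduce to $g = h - f$, expand with the Peano remainder, and divide by $x^2$. Your explicit note that the Peano form (requiring only twice-differentiability, not higher regularity near $0$) suffices is a welcome precision, but the argument itself is the intended one.
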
	

Take $f(\theta):=\operatorname{Ent} ( \omega_{\rho \ast{g_{t/2}}} || \omega_{\rho_{R_j}\ast g_{R_j,t/2}^{(\theta(\alpha+\beta))}}) $ and $h(\theta) := D(\rho||\rho_{R_j}^{(\theta\alpha)})+ D_{\rm{KL}}(g_{t/2}||g_{R_j,t/2}^{(\theta\beta)})$. These two functions satisfy the requirements of last lemma, and therefore $f''(0)\le g''(0)$. Summing over $j=1,...,2n$ we finally get
\begin{align}
	(\alpha+\beta)^2J(\rho\ast g_{t/2})\le \alpha^2J(\rho)+\beta^2 J_{{cl}}(g_{t/2}),
\end{align}	
Where for any positive, differentiable probability density function $f$, $J_{{cl}}(f)$ is its (classical) Fisher information, defined through \Cref{cl-FI}. Hence, the inequality (\ref{theinequality}) follows.

\bibliography{librarytouse}

\end{document}